\documentclass{article}

\bibliographystyle{plainurl}

\title{Distributed Detection of Cliques in Dynamic Networks}
\date{}
\author{
	Matthias Bonne
	\thanks{matt.b@cs.technion.ac.il} \\
	Department of Computer Science, Technion, Israel
\and
	Keren Censor-Hillel
	\thanks{ckeren@cs.technion.ac.il} \\
	Department of Computer Science, Technion, Israel
}

\usepackage[left=1in,right=1in]{geometry}
\usepackage{hyperref}
\usepackage{amsmath}
\usepackage{amsthm}
\usepackage{cite}
\usepackage{tikz}

\newtheorem{theorem}{Theorem}[subsection]
\newtheorem{corollary}[theorem]{Corollary}
\newtheorem{lemma}{Lemma}
\newtheorem{openq}{Open Question}
\newtheorem{observation}{Observation}

\newcommand{\TXTMLIST}{\textsf{MemList}}
\newcommand{\TXTMDTCT}{\textsf{MemDetect}}
\newcommand{\TXTLIST}{\textsf{List}}
\newcommand{\TXTDTCT}{\textsf{Detect}}

\newcommand{\MLIST}[1]{$\textsf{MemList}(#1)$}
\newcommand{\MDTCT}[1]{$\textsf{MemDetect}(#1)$}
\newcommand{\LIST}[1]{$\textsf{List}(#1)$}
\newcommand{\DTCT}[1]{$\textsf{Detect}(#1)$}

\newcommand{\pa}[1]{\left(#1\right)}

\newcommand{\pabr}[1]{\left\{#1\right\}}
\newcommand{\ceil}[1]{\left\lceil#1\right\rceil}
\newcommand{\floor}[1]{\left\lfloor#1\right\rfloor}
\newcommand{\A}[1]{\begin{align*}#1\end{align*}}
\newcommand{\AL}[2]{\begin{align}\label{#1}#2\end{align}}

\begin{document}

\maketitle

\begin{abstract}

This paper provides an in-depth study of the fundamental problems of
finding small subgraphs in distributed dynamic networks.

While some problems are trivially easy to handle, such as detecting a
triangle that emerges after an edge insertion, we show that, perhaps
somewhat surprisingly, other problems exhibit a wide range of
complexities in terms of the trade-offs between their round and
bandwidth complexities.

In the case of triangles, which are only affected by the topology of the
immediate neighborhood, some end results are:
\begin{itemize}
	\item
		The bandwidth complexity of $1$-round dynamic triangle
		detection or listing is $\Theta(1)$.
	\item
		The bandwidth complexity of $1$-round dynamic triangle
		membership listing is $\Theta(1)$ for node/edge
		deletions, $\Theta(n^{1/2})$ for edge insertions, and
		$\Theta(n)$ for node insertions.
	\item
		The bandwidth complexity of $1$-round dynamic triangle
		membership detection is $\Theta(1)$ for node/edge
		deletions, $O(\log n)$ for edge insertions, and
		$\Theta(n)$ for node insertions.
\end{itemize}
Most of our upper and lower bounds are \emph{tight}. Additionally, we
provide almost always tight upper and lower bounds for larger cliques.

\end{abstract}

\section{Introduction}

A fundamental problem in many computational settings is to find small
subgraphs. In distributed networks it is particularly vital for various
reasons, among which is the ability to perform some tasks much faster
if, say, triangles do not occur in the underlying network graph (see,
eg.,~\cite{Pettie2015,Hirvonen2017}).

Finding cliques is a \emph{local} task that trivially requires only a
single communication round if the message size is unbounded. However,
its complexity may dramatically increase when the bandwidth is
restricted to the standard $O(\log n)$ bits, for an $n$-node network.
For example, the complexity of detecting 4-cliques is at least
$\Omega(n^{1/2})$~\cite{Czumaj2018}. For triangles, the complexity is
yet a fascinating riddle, where only recently the first non-trivial
complexities of $\tilde{O}(n^{2/3})$ and $\tilde{O}(n^{3/4})$ have been
given for detection and listing, respectively~\cite{Izumi2017}, and the
current state-of-the-art is the $\tilde{O}(n^{1/2})$-round algorithm
of~\cite{Chang2019}. For listing, there is an $\tilde{\Omega}(n^{1/3})$
lower bound~\cite{Izumi2017,Pandurangan0S18}. The only non-trivial lower
bounds for detection say that a single round is insufficient, as given
in~\cite{Abboud2017} and extended for randomized algorithms
in~\cite{Fischer2018}. In~\cite{Abboud2017} it was also shown that
$1$-bit algorithms require $\Omega(\log^* n)$ rounds, improved
in~\cite{Fischer2018} to $\Omega(\log{n})$.

In this paper, we address the question of detecting small cliques in
\emph{dynamic} networks of limited bandwidth. We consider a model that
captures real-world behavior in networks that undergo changes, such as
nodes joining or leaving the network, or communication links between
nodes that appear or disappear. Various problems have been studied in
many variants of such a setting (see Section~\ref{subsec:related}).

The task of finding cliques is unique, in that it is trivial if the
bandwidth is not restricted, and it can be easily guaranteed that at the
end of each round all nodes have the correct output. This implies that
one does not have to wait for \emph{stabilization} and does not have to
assume that the network is \emph{quiet} for any positive number of
rounds. If, however, the bandwidth is restricted, the solution may not
be as simple, although some problems can still be solved even with very
small bandwidth.

As a toy example, consider the case of triangle listing when a new edge
is inserted to the graph. The endpoints of the inserted edge simply
broadcast a single bit that indicates this change to all of their
neighbors, and hence if a new triangle is created then its third
endpoint detects this by receiving two such indications.

Nevertheless, we show that this simplified case is far from reflecting
the general complexities of clique problems in such a dynamic setting.
For example, the above algorithm does not solve the problem of
\emph{membership-listing} of triangles, in which \emph{each} node should
list all triangles that contain it. Indeed, we prove that this stronger
variant \emph{cannot} be solved with constant bandwidth, and, in fact,
every solution must use at least $\Omega(\sqrt{n})$ bits.

Our contributions provide an in-depth study of various detection and
listing problems of small cliques in this dynamic setting, as we
formally define and summarize next.

\subsection{Our contributions}

For a subgraph $H$ we categorize four types of tasks: Detecting an
appearance of $H$ in the network, for which it is sufficient that a
single node does so, listing all appearances of $H$, such that every
appearance is listed at least by a single node, and their two
\emph{membership} variants, membership-detection and membership-listing,
for which each node has to detect whether it is a member of a copy of
$H$, or list all copies of $H$ to which it belongs, respectively.

The model is explicitly defined in Section~\ref{section:prelim}. In a
nutshell, there can be one topology change in every round, followed by a
standard communication round among neighboring nodes, of $B$ bits per
message, where $B$ is the bandwidth. An algorithm takes $r$ rounds if
the outputs of the nodes are consistent with a correct solution after
$r$ communication rounds that follow the last change. In particular, a
$1$-round algorithm is an algorithm in which the outputs are correct
already at the end of the round in which the topology change occurred.
Hence, $1$-round algorithms are very strong, in the sense that they work
even in settings in which no round is free of changes. We note that in
what follows, all of our algorithms are deterministic and all of our
lower bounds hold also for randomized algorithms.

~\\\textbf{Triangles:}
Our upper and lower bounds for triangles ($H=K_3$) are displayed in
Table~\ref{tbl:summary}.

\begin{table}[tbh]
\centering
\begin{tabular}{|l|c|c|c|c|}
\hline

	& Node deletions
	& Edge deletions
	& Edge insertions
	& Node insertions
	\\ \hline

Detection/Listing
	& $0$
	& $\Theta(1)$
	& $\Theta(1)$
	& $\Theta(1)$
	\\ \hline

Membership detection
	& $0$
	& $\Theta(1)$
	& $O(\log n)$
	& $\Theta(n)$
	\\ \hline

Membership listing
	& $0$
	& $\Theta(1)$
	& $\Theta(\sqrt{n})$
	& $\Theta(n)$
	\\ \hline

\end{tabular}
\caption{
	The bandwidth complexities of $1$-round algorithms for dynamic
	triangle problems.
}
\label{tbl:summary}
\end{table}
Most of the complexities in this table are shown to be tight, by
designing algorithms and proving matching lower bounds. The one
exception is for membership detection with edge insertions, for which we
show an algorithm that uses $O(\log n)$ bits of bandwidth, but we do not
know whether this is tight. However, we also show a $1$-round algorithm
for this problem that works with a bandwidth of
$O((\Delta\log{n})^{1/2})$, where $\Delta$ is the maximum degree in the
graph, implying that if our logarithmic algorithm is optimal, then a
proof for its optimality must exhibit a worst case in which the maximum
degree is $\Omega(\log{n})$.

A single round is sufficient for solving all clique problems, given
enough (linear) bandwidth. Nevertheless, for the sake of comparison, we
show that with just one additional communication round, all of the
bandwidth complexities in Table~\ref{tbl:summary} drop to $\Theta(1)$,
apart from membership listing of triangles, whose bandwidth complexity
for $r$-round algorithms is $\Theta(n/r)$.

~\\\textbf{Larger Cliques:}
We also study the bandwidth complexities for finding cliques on $s>3$
nodes. Here, too, a single round is sufficient for all problems, and the
goal is to find the bandwidth complexity of each problem. Some of the
algorithms and lower bounds that we show for triangles carry over to
larger cliques, but others do not. Yet, with additional techniques, we
prove that for cliques of constant size, almost all of the $1$-round
bandwidth complexities are the same as their triangle counterparts.

~\\\textbf{Combining types of changes:}
In most cases, one can obtain an algorithm that handles several types of
changes, by a simple combination of the corresponding algorithms.
However, intriguingly, sometimes this is not the case. A prime example
is when trying to combine edge insertions and node insertions for
triangle detection: a node obtaining 1-bit indications of a change from
two neighbors cannot tell whether this is due to an edge inserted
between them, or due to an insertion of a node that is a neighbor of
both. In some of these cases we provide techniques to overcome these
difficulties, and use them to adjust our algorithms to cope with more
than one type of change.

\subsection{Challenges and techniques}

~\\\textbf{Algorithms:}
The main challenge for designing algorithms is how to convey enough
information about the topology changes that occur, despite non-trivial
(in particular, sublinear) bandwidth. Consider, e.g., edge insertions.
As described, while listing triangles is trivial with a single
indication bit, this fails for membership detection or membership
listing.

For membership detection we can still provide a very simple algorithm
for which a logarithmic bandwidth suffices, by sending the identity of
the new edge, and by \emph{helping} neighbors in the triangle to know
that they are such. For membership listing even this is insufficient. To
overcome this challenge, we introduce a technique for sending and
collecting \emph{digests} of neighborhood information. When all digests
from a given neighbor have been collected, one can determine the entire
neighborhood of this neighbor. The caveat in using such an approach in a
straightforward manner is that a node needs to list its triangles with a
newly connected neighbor already at the same round in which they
connected, and cannot wait to receive all of its neighbor's digests. By
our \emph{specific choice} of a digests, we ensure that a
newly-connected neighbor has enough information to give a correct output
after a single communication round, despite receiving only a single
digest from its latest neighbor.

~\\\textbf{Lower bounds:}
For the lower bounds, our goal is to argue that in order to guarantee
that all nodes give a correct output, each node must receive enough
information about the rest of the graph. To do this, we identify
sequences of topology changes that exhibit a worst-case behavior, in the
sense that a node cannot give a correct output if it receives too little
information.

One approach for doing this is to look at a particular node $x$, and
define as many sequences as possible such that the correct output of $x$
is different for each sequence. At the same time, we ensure that the
number of messages that $x$ receives from the other nodes is as small as
possible. These two requirements are conflicting --- if $x$ can have
many different outputs, it must have many different neighbors, which
implies that it receives many messages with information. Still, we are
able to find such a family of sequences for each problem, and we wrap-up
our constructions by using counting arguments to prove the desired lower
bounds. In some cases, e.g., membership-listing under edge insertions,
even this is insufficient, and we construct the sequences such that one
\emph{critical} edge affects the output of $x$, but it is \emph{added
last}, so that it conveys as little information as possible.

The above can be seen as another step in the spirit of the \emph{fooling
views} framework, which was introduced in~\cite{Abboud2017} for
obtaining the first lower bounds for triangle detection under limited
bandwidth. After being beautifully extended by~\cite{Fischer2018}, our
paper essentially gives another indication of the power of the fooling
views framework in proving lower bounds for bandwidth-restricted
settings.

On the way to constructing our worst-case graph sequences, we prove
combinatorial lemmas that show the existence of graphs with certain
desired properties. To make our lower bounds apply also for randomized
algorithms, we rely on Yao's lemma and on additional machinery that we
develop.

\subsection{Additional related work}
\label{subsec:related}

\textbf{Dynamic distributed algorithms:}
Dynamic networks have been the subject of much research. A central
paradigm is that of \emph{self-stabilization}~\cite{Dolev2000}, in which
the system undergoes various changes, and after some quiet time needs to
converge to a stable state. The model addressed in this paper can be
considered as such, but our focus is on algorithms that do not require
any quiet rounds (though we also address the gain in bandwidth
complexity if the system does allow them, for the sake of comparison).
Yet, we assume a single topology change in each round. A single change
in each round and enough time to recover is assumed in recent algorithms
for maintaining a maximal independent set~\cite{CensorHillel2016,
Assadi2019,GuptaK18,AssadiOSS18,DuZ2018} and matchings~\cite{Solomon16},
and for analyzing amortized complexity~\cite{ParterPS16}. Highly-dynamic
networks, in which multiple topology changes can occur in each round are
analyzed in~\cite{Bamberger2018,CensorHillelDKPS19}, and it is an
intriguing open question how efficient can subgraph detection be in such
a setting. Various other dynamic settings in which the topology changes
can be almost arbitrary have been proposed, but the questions that arise
in such networks address the flow of information and agreement problems,
as they are highly non-structured. A significant paper that relates to
our work is~\cite{KonigW13}, which differs from our setting in the
bandwidth assumption. Clique detection in the latter model is trivial,
and indeed their paper addresses other problems.

\textbf{Distributed subgraph detection:}
In the CONGEST model and in related models, where the nodes
synchronously communicate with $O(\log{n})$ bits, much research is
devoted to subgraph detection, e.g.,~\cite{Fraigniaud2017,Gonen2017,
Korhonen2017,Drucker2014,Abboud2017,Izumi2017,Chang2019,
Pandurangan0S18}.

A related problem is property-testing of subgraphs, where the nodes need
to determine, for a given subgraph $H$, whether the graph is $H$-free or
\emph{far} from being $H$-free~\cite{CensorHillel2019,EvenFischer2017,
EvenLevi2017,Fischer2017,Fraigniaud2016,Brakerski2011}.

\subsection{Preliminaries}
\label{section:prelim}

In the dynamic setting, the network is a sequence of graphs. The initial
graph, whose topology is known to the nodes, represents the state of the
network at some starting point in time, and every other graph in the
sequence is either identical to its preceding graph or obtained from it
by a single topology change. The network is synchronized, and in each
round, each node can send to each one of its neighbors a message of $B$
bits, where $B$ is the bandwidth of the network. Each node has a unique
ID and knows the IDs of all its neighbors. We denote by $n$ the number
of possible IDs. Hence, $n$ is an upper bound on the number of nodes
that can participate in the network at all times.

Note that the nodes do not receive any indication when a topology change
occurs.  A node can deduce that a change has occurred by comparing the
list of its neighbors in the current round and in the previous round.
This implies that a node $u$ cannot distinguish between the insertion of
an edge $uv$ and the insertion of a node $v$ which is connected to $u$,
as the list of neighbors of $u$ is the same in both cases.

An algorithm can be designed to handle edge insertions or deletions or
node insertions or deletions, or any combination of these. We say that
an algorithm is an $r$-round algorithm if the outputs of the nodes after
$r$ rounds of communication starting from the last topology change are
correct. The (deterministic or randomized) $r$-round bandwidth
complexity of a problem is the minimum bandwidth for which an $r$-round
(deterministic or randomized) algorithm exists. We denote by \MLIST{H},
\MDTCT{H}, \LIST{H}, and \DTCT{H}, respectively, the problems of
membership listing, membership detection, listing and detection of $H$.
The following is a simple observation.

\begin{observation}\label{obs:probrel}

Let $r$ be an integer and let $H$ be a graph. Denote by $B_{\TXTMLIST}$,
$B_{\TXTMDTCT}$, $B_{\TXTLIST}$ and $B_{\TXTDTCT}$ the (deterministic or
randomized) $r$-round bandwidth complexities of \MLIST{H}, \MDTCT{H},
\LIST{H}, and \DTCT{H}, respectively, under some type of topology
changes. Then: $B_{\TXTDTCT} \leq B_{\TXTLIST} \leq B_{\TXTMLIST}$ and
$B_{\TXTDTCT} \leq B_{\TXTMDTCT} \leq B_{\TXTMLIST}$.

\end{observation}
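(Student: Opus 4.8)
The plan is to establish the four inequalities by exhibiting, for each one, a reduction that converts an algorithm for the harder problem into an algorithm for the easier one without increasing the bandwidth. The key observation is that all four problems operate on the same model, with the same sequence of topology changes and the same communication structure, so we only need to argue about what each node is required to output and show that a correct output for the stronger problem can be locally postprocessed into a correct output for the weaker one.

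First I would treat $B_{\TXTLIST} \leq B_{\TXTMLIST}$. Suppose we are given an algorithm for \MLIST{H} using bandwidth $B_{\TXTMLIST}$. In that algorithm, every node that belongs to a copy of $H$ outputs that copy. Run the very same algorithm for the listing problem: each node uses its membership-listing output directly. Since every copy of $H$ contains at least one node, and that node lists the copy in the membership-listing output, the copy is listed by at least one node, which is exactly the requirement for \TXTLIST. No extra communication is needed, so the bandwidth is unchanged, giving $B_{\TXTLIST} \leq B_{\TXTMLIST}$. The symmetric argument, comparing membership-detection outputs, yields $B_{\TXTMDTCT} \leq B_{\TXTMLIST}$: a node that can list all copies containing it can in particular decide whether it belongs to any copy, simply by checking whether its list is nonempty.

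Next I would handle $B_{\TXTDTCT} \leq B_{\TXTLIST}$ and $B_{\TXTDTCT} \leq B_{\TXTMDTCT}$ by the same template. For the former, given a listing algorithm, a copy of $H$ is listed by some node, so at least one node is aware of the copy's existence; detection only requires that some node report that a copy exists, which this node can do. For the latter, given a membership-detection algorithm, any node belonging to a copy of $H$ detects its membership, and such a node can report that a copy of $H$ exists; since $H$ appears if and only if some node is a member of a copy, this solves detection. In both cases the reduction is purely a relabeling of the output and adds no communication, so the bandwidth bounds transfer.

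The proof is essentially routine, so I do not expect a genuine obstacle; the only point requiring mild care is to state the reductions so that they respect the $r$-round timing convention, namely that the transformed algorithm's outputs are correct at exactly the same rounds as the original algorithm's. Since each reduction is a purely local postprocessing of the output at each node in each round, correctness at round $r$ after the last change is inherited immediately, and the same reasoning applies uniformly to deterministic and randomized algorithms and to every type of topology change, which is why the observation holds in full generality.
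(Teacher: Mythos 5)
Your proof is correct and matches the intended argument: the paper states this as a simple observation without an explicit proof, and the standard output-relabeling reductions you give (a membership-lister's nonempty list yields listing, membership detection, and detection, all with zero extra communication) are exactly what is implicitly meant. Your remark that the postprocessing is purely local and hence preserves the $r$-round timing and works for both deterministic and randomized algorithms covers the only points that needed care.
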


Due to this observation, our exposition starts with the more challenging
task of membership-listing, then addresses membership-detection, and
concludes with listing and detection. Within each case, we first provide
algorithms for each type of topology change, and then prove lower
bounds.

\section{Triangle problems}

\subsection{Membership listing}

Table~\ref{tbl:k3_mlist} summarizes our results for membership listing
of triangles.
\begin{table}[tbh]
\centering
\begin{tabular}{|l|c|c|c|c|}
\hline

	& Node deletions
	& Edge deletions
	& Edge insertions
	& Node insertions
	\\ \hline
$r = 1$
	& $0$
	& $\Theta(1)$
	& $\Theta(\sqrt{n})$
	& $\Theta(n)$
	\\ \hline
$r \geq 2$
	& $0$
	& $\Theta(1)$
	& $\Theta(1)$
	& $\Theta(n/r)$
	\\ \hline

\end{tabular}
\caption{
	Bandwidth complexities of \MLIST{K_3}.
}
\label{tbl:k3_mlist}
\end{table}
\subsubsection{Upper bounds}
\label{subsubsec:tri-mlist-upper}

We start by showing that if only node deletions are allowed, no
communication is required.

\begin{theorem}\label{th:k3_mlist_ub_mv}
	The deterministic $1$-round bandwidth complexity of \MLIST{K_3}
	under node deletions is $0$.
\end{theorem}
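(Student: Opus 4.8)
The plan is to argue that node deletions can never create a new triangle, so no node ever needs to update its output in response to such a change, and therefore no communication is required. Concretely, I would show that each node can compute and maintain the correct membership-listing output using only the information it already has locally, namely the topology of the initial (known) graph together with the identities of its current neighbors in each round.

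First I would set up the invariant: at the start of every round, each node $u$ knows the complete set of triangles containing $u$. This holds initially because the initial topology is known to all nodes, so in particular $u$ knows its own triangles. Next I would observe the key monotonicity fact: deleting a node (and its incident edges) can only \emph{remove} triangles, never add any, since every edge of a triangle containing $u$ must already be present before the deletion. Hence the set of triangles containing $u$ after a node deletion is a subset of the set before it.

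The crucial step is to argue that $u$ can perform the required update with zero communication, i.e.\ with bandwidth $0$. When a node $v$ is deleted, node $u$ detects this purely locally: by the model (as stated in Section~\ref{section:prelim}), $u$ compares its neighbor list across rounds, so if $v$ was a neighbor of $u$ then $u$ sees that $v$ has disappeared. Node $u$ then simply removes from its output every triangle that contained $v$. Every triangle containing $u$ either contains $v$ or does not; those containing $v$ are exactly the ones destroyed by the deletion, and those not containing $v$ are unaffected (all three of their vertices, and all three of their edges, persist). Since $u$ already knew its full triangle list and can identify which listed triangles mention $v$, this update is entirely local and the invariant is restored for the next round.

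I expect the main subtlety — rather than a genuine obstacle — to lie in handling the case where the deleted node $v$ is \emph{not} adjacent to $u$: then $u$ receives no local signal at all, but this is fine precisely because no triangle containing $u$ can mention $v$ (a triangle containing both $u$ and $v$ would require the edge $uv$), so $u$'s output is already correct and needs no change. One should also note that the output must be correct at the very end of the round in which the change occurs (a $1$-round algorithm), which the above local update achieves immediately, with no message exchange. Assembling these observations yields that the bandwidth is $0$, matching the claim.
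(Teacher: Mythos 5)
Your proposal is correct and matches the paper's argument in substance: both rely on the facts that node deletions can only destroy triangles and that $u$ can determine its surviving triangles purely from the initial topology plus its current neighbor list, hence bandwidth $0$. The paper states this as a one-shot formula (output the initial triangles $uvw$ with $v,w$ still neighbors of $u$) rather than an incremental update, but the two formulations are equivalent.
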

\begin{proof}

The algorithm is very simple: the output of a node $u$ is the set of all
triangles $uvw$ that exist in the initial graph, such that $v$ and $w$
are both neighbors of $u$ in the final graph.

Since only node deletions are allowed, every triangle $uvw$ that exists
in the final graph must also exist in the initial graph, and all three
nodes $u$, $v$ and $w$ must exist and be connected to each other.
Conversely, if a triangle $uvw$ exists in the initial graph and $u$ is
still connected to $v$ and $w$, then the triangle $uvw$ must exist in
the final graph. Therefore, the output of $u$ is exactly the set of all
triangles that contain $u$ in the final graph.
\qedhere

\end{proof}

Next, we show how to handle edge deletions with $O(1)$ bits of
bandwidth.

\begin{theorem}\label{th:k3_mlist_ub_me}
	The deterministic $1$-round bandwidth complexity of \MLIST{K_3}
	under edge deletions is $O(1)$.
\end{theorem}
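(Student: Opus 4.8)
The plan is to give a very simple algorithm in which each node maintains its own list of triangles and needs only a single bit per message to keep it correct. Since the initial topology is known to all nodes and only edge deletions occur, no triangle is ever created: the correct output of a node $u$ at any point is the set of triangles containing $u$ in the \emph{initial} graph, minus those destroyed by subsequent deletions. So each node $u$ initializes its list $T_u$ to all triangles $uvw$ of the initial graph (requiring no communication), and the only task is to detect and remove triangles as their edges disappear.

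The communication rule is as follows: in each round, every node sends a single bit to all of its neighbors, equal to $1$ if it lost a neighbor in this round (detected by comparing its current neighbor list to the previous one) and $0$ otherwise. The update rule for $u$ has two parts. First, if $u$ itself lost a neighbor $v$, it directly removes from $T_u$ every triangle containing $v$, since the edge $uv$ is gone. Second, $u$ inspects the bits it received: if it received a $1$ from exactly two of its neighbors, say $v$ and $w$, it removes the triangle $uvw$ from $T_u$ (if present).

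The correctness argument is a case analysis on the single edge $vw$ deleted in a given round, and this is where the key idea lies. The crucial obstacle is that a node with $O(1)$ bandwidth cannot receive the identity of a deleted edge that is not incident to it, so it must \emph{infer} which edge was deleted. This is resolved by the single-change-per-round assumption: only the two endpoints of the deleted edge report a $1$, so at most two nodes broadcast a $1$ in any round. Consequently, if $u \in \{v,w\}$ then $u$ observes the loss directly and the first part of the update handles it; if $u \notin \{v,w\}$ and both $v,w$ are neighbors of $u$, then $u$ receives a $1$ from precisely these two neighbors and no others, so it unambiguously identifies the deleted edge as $vw$ and removes the now-destroyed triangle $uvw$; and if at most one of $v,w$ is a neighbor of $u$, then no triangle containing $u$ is affected, and indeed $u$ receives at most one $1$ and leaves $T_u$ unchanged.

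Putting these together, one argues by induction over rounds that $T_u$ equals the set of triangles containing $u$ in the current graph: every triangle destroyed in a round is removed by exactly one of the two update rules, and no triangle is spuriously removed because the disambiguation is exact. Since the deletion and the communication occur in the same round and the output is correct at its end, this is a $1$-round algorithm, and since each message is a single bit, its bandwidth is $O(1)$, as claimed.
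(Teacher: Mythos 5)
Your proposal is correct and follows essentially the same approach as the paper: each node broadcasts a single bit indicating whether it lost a neighbor, and since only one edge can be deleted per round, a node receiving this bit from two neighbors $v$ and $w$ unambiguously infers that the edge $vw$ was deleted and updates its triangle list accordingly. Your write-up merely makes the case analysis and the inductive maintenance of the list more explicit than the paper does.
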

\begin{proof}

The idea of the algorithm is simple: every node $u$ needs to maintain
the list of all triangles that contain it. In order to do this, $u$
needs to be able to determine, for every two of its neighbors $v$ and
$w$, whether or not the edge $vw$ exists.

This can be achieved with only a single bit, as follows. Every node
sends to all its neighbors an indication whether or not it lost an edge
on the current round. We denote this bit by \texttt{DELETED}. Thus, for
every triangle $uvw$, whenever one of the edges of the triangle is
deleted (say, $vw$), the other node ($u$ in this case) receives
$\texttt{DELETED} = 1$ from both $v$ and $w$. Since the only possible
change in the graph is the deletion of an edge, $u$ can deduce that the
edge $vw$ was deleted.
\qedhere

\end{proof}

The above algorithm does not work as-is if both edge deletions and node
deletions are allowed. For example, assume there exists a triangle
$uvw$. When we remove the edge $vw$, both $v$ and $w$ send
$\texttt{DELETED} = 1$ to $u$, thus $u$ knows that the edge $vw$ was
deleted and the triangle $uvw$ no longer exists. However, $u$ receives
the exact same input if some other node $x$, which is connected to both
$v$ and $w$, but not to $u$, is deleted. Here, too, both $v$ and $w$
lose a neighbor, thus they both send $\texttt{DELETED} = 1$ to $u$, and
$u$ deduces --- now incorrectly --- that the triangle $uvw$ no longer
exists and removes it from its output.

This problem can be fixed, by observing that when the edge $vw$ is
deleted, both $v$ and $w$ know whether or not $x$ is connected to $u$
(i.e., whether or not the triangle $vxu$, or $wxu$, exists). Therefore,
we can have each node $v$ send $\texttt{DELETED} = 1$ to each neighbor
$u$ only if the lost neighbor was connected to $u$; that is, $v$ sends
$\texttt{DELETED} = 1$ to $u$ if it lost an edge to some node $x$ such
that the triangle $vux$ existed on the previous round. This information
allows $u$ to distinguish between the two cases described above and give
the correct output. Hence the following corollary.

\begin{corollary}\label{cor:k3_mlist_ub_me_mv}
	The deterministic $1$-round bandwidth complexity of \MLIST{K_3}
	under node/edge deletions is $O(1)$.
\end{corollary}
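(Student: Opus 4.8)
The plan is to formalize the refined single-bit scheme sketched in the paragraph preceding the statement and prove its correctness by a case analysis that relies on the assumption of a single topology change per round. Each node $u$ maintains the list of all triangles containing it, initialized from the known initial topology; since only deletions occur no triangle is ever created, so the only task is to detect \emph{destroyed} triangles. The algorithm I would use is: whenever a node $v$ loses a neighbor $x$ in a given round, it sends to every surviving neighbor $u$ the bit $\texttt{DELETED}=1$ if the triangle $vux$ existed on the previous round (equivalently, if $x$ was a common neighbor of $u$ and $v$), and $\texttt{DELETED}=0$ otherwise. This is a single bit per message, giving bandwidth $O(1)$; the triangle-existence content of the bit is precisely the refinement needed to separate edge deletions from node deletions.

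For the update rule at $u$ I would split on whether $u$'s own neighbor list changed. If $u$ lost a neighbor $y$, then the unique change was either the deletion of edge $uy$ or of node $y$, and in both cases every destroyed triangle through $u$ is one containing $y$, which $u$ removes directly, ignoring the received bits. If instead $u$'s neighbor list is unchanged, then the single change did not touch any edge incident to $u$, so the only triangles through $u$ that can be destroyed are those $uvw$ whose opposite edge $vw$ was deleted with $v,w$ still neighbors of $u$; here $u$ removes $uvw$ exactly for the pair $\{v,w\}$ from which it received $\texttt{DELETED}=1$ from \emph{both} endpoints. The correctness of the second branch reduces to checking that, when $u$'s list is unchanged, $u$ receives $1$ from both $v$ and $w$ if and only if the deleted edge is $vw$ with $u$ adjacent to both: the genuine edge-$vw$ deletion passes the triangle test on both sides; a deletion not adjacent to $u$ (including the configuration of a node $x$ adjacent to $v,w$ but not $u$, the case flagged in the discussion) fails the test and yields only $0$'s; and a deletion with at most one endpoint adjacent to $u$ likewise produces no spurious pair.

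The main obstacle I expect is the configuration in which a deleted node $z$ is adjacent to $u$ \emph{as well as} to $v$ and $w$: then $v$ and $w$ each legitimately send $\texttt{DELETED}=1$ to $u$, which in isolation looks identical to an edge-$vw$ deletion. The resolution, and the crux of the argument, is that this configuration necessarily makes $z$ disappear from $u$'s own neighbor list, placing $u$ in the \emph{first} branch of the update rule, where the received bits are not consulted and $u$ instead removes exactly the triangles through $z$. This is where the single-change-per-round assumption does the real work: any single change either leaves $u$'s neighbor list intact or is directly visible in it, and these two regimes cannot coincide, so the $\texttt{DELETED}$ bit is interpreted only in the regime where it is unambiguous. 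Invoking Theorem~\ref{th:k3_mlist_ub_mv} and Theorem~\ref{th:k3_mlist_ub_me} for the two pure sub-cases, the refined algorithm handles the combined node/edge-deletion setting with $O(1)$ bandwidth, establishing the corollary.
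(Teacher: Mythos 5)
Your proposal is correct and follows essentially the same route as the paper: the paper's argument (given in the discussion preceding the corollary) is exactly the refined \texttt{DELETED} bit, where $v$ signals $1$ to $u$ only if its lost neighbor $x$ formed a triangle $vux$, and correctness is left as "this allows $u$ to distinguish the two cases." You merely make explicit the receiver-side case split on whether $u$'s own neighbor list changed and verify the remaining configurations (including a deleted node adjacent to $u$, $v$, and $w$), which is a faithful completion of the paper's sketch rather than a different argument.
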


Handling edge insertions and node insertions is much more complicated.
We start by showing an algorithm that handles edge insertions.

\begin{theorem}\label{th:k3_mlist_ub_pe}
	The deterministic $1$-round bandwidth complexity of \MLIST{K_3}
	under edge insertions is $O(\sqrt{n})$.
\end{theorem}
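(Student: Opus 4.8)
The plan is to reduce the task to a single primitive: correctly updating all nodes' triangle lists after one edge insertion $vw$, at the very same round. I would first dispose of the common neighbors. A node $u$ adjacent to both $v$ and $w$ simply needs to recognize that the new triangle $uvw$ has formed. Since only one topology change occurs per round, $u$ can detect this from the $O(1)$-bit change indications described earlier: $u$ receives a ``changed'' bit from exactly those of its neighbors incident to the new edge, and as a common neighbor it receives such a bit from both $v$ and $w$, which jointly identify the inserted edge as $vw$. Hence every common neighbor outputs $uvw$ at round $t$ using constant bandwidth, and the entire difficulty is concentrated at the two endpoints: each of $v$ and $w$ must, in the same round, list all new triangles $vwu$ for $u \in N(v)\cap N(w)$, knowing only its own neighborhood plus the messages of that single round. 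Equivalently, $v$ must determine, for every $u \in N(v)$, whether the (necessarily older) edge $uw$ exists.

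The algorithm I would use is a proactive, round-robin broadcast of neighborhood \emph{digests}. Fix a partition of the $n$ possible IDs into $\sqrt{n}$ buckets, each of size $\sqrt{n}$. In every round, each node sends to all of its neighbors the $\sqrt{n}$-bit characteristic vector of its current neighborhood restricted to the current bucket, cycling through the buckets in a fixed global order. This costs $O(\sqrt{n})$ bits per message, and after $\sqrt{n}$ consecutive rounds of adjacency a node learns a neighbor's entire neighborhood exactly. The intended invariant is that at all times $v$ knows $N(u)$ for every neighbor $u$ that has been adjacent to it for at least $\sqrt{n}$ rounds, and knows $N(u)$ restricted to the recently broadcast buckets for more recent neighbors.

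To compute $N(v)\cap N(w)$ at the insertion round $t$, I would detect each common neighbor $u$ through one of two channels, exploiting that $u$ is always an \emph{older} neighbor of $v$ than $w$ is. If $u$ lies in the bucket that $w$ broadcasts at round $t$, then $v$ reads $u \in N(w)$ directly from the single fresh digest it receives from $w$, and since $v$ already knows $u \in N(v)$ it outputs $vwu$. If $u$ lies in another bucket, then $v$ should instead certify $w \in N(u)$ from the digests accumulated from $u$ itself. Proving correctness amounts to establishing that these two channels jointly cover every element of $N(v)\cap N(w)$, and this is precisely where the \emph{specific} choice of digest and bucket schedule must be pinned down: the schedule has to guarantee that for any common neighbor $u$ that is too recently connected for $v$ to know $N(u)$ in full, the relevant membership fact is recoverable from $w$'s current-bucket digest.

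I expect the main obstacle to be exactly this single-digest sufficiency for the freshest neighbor: because one-way set intersection has $\Omega(n)$ communication complexity, no digest of $N(w)$ alone can reveal $N(v)\cap N(w)$, so the proof must carefully combine $w$'s one received digest with $v$'s prior knowledge of its older neighbors and argue that the bucket schedule leaves no uncovered common neighbor. Two secondary points I would watch are keeping the bandwidth at $O(\sqrt{n})$ rather than $O(\sqrt{n}\log n)$ by transmitting characteristic vectors instead of explicit ID lists, and verifying that the symmetric roles of $v$ and $w$ both succeed, so that each endpoint independently produces its correct triangle list in the single communication round following the insertion.
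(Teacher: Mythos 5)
Your round-robin digest scheme has a genuine correctness gap caused by staleness, and the case analysis you sketch does not cover it. Take an old edge $uv$ (say, present in the known initial graph), insert the edge $uw$ at round $t-1$ and the edge $vw$ at round $t$, and choose the IDs so that neither $u$ nor $w$ lies in the bucket broadcast at round $t-1$ or at round $t$ (possible once there are at least three buckets). At round $t$, node $v$ must list the triangle $uvw$, but neither of your channels fires: the first fails because $u$ is not in the bucket that $w$ broadcasts at round $t$, and the second fails because the most recent digest of $w$'s bucket that $v$ received from $u$ was sent no later than round $t-2$, hence before $uw$ existed. The $1$-bit ``changed'' indication from $u$ at round $t-1$ does not help, since $v$ is not yet adjacent to $w$ and cannot identify the other endpoint of that insertion. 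Indeed, $v$'s entire view is identical to the one it gets in the alternative history where the round-$(t-1)$ insertion is $ux$ for some other node $x$ outside those two buckets, so $v$ must err on one of the two histories. Note also that you misidentify the hard case: it is not a common neighbor ``too recently connected'' to $v$ (here $uv$ is arbitrarily old) but an edge $uw$ inserted too recently relative to your bucket schedule.

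The paper closes exactly this hole by having every node broadcast the full $O(\log n)$-bit identity of each newly acquired neighbor (\texttt{NEWID}) to \emph{all} of its current neighbors, so $v$ learns, without any staleness, every edge incident to a current neighbor that is inserted after $v$ connected to that neighbor. The only remaining hard case is an edge between two of $v$'s neighbors that predates both of $v$'s edges to them; this is handled by a point-to-point transfer of a full $n$-bit neighborhood snapshot to each newly connected neighbor over $\sqrt{n}$ rounds, bridged during the first $\sqrt{n}$ rounds by a $\sqrt{n}$-bit history vector recording in which of the last $\sqrt{n}$ rounds a \texttt{NEWID} was received (one change per round lets the round index identify the edge). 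To repair your scheme you would at minimum need to add the \texttt{NEWID} broadcast and still supply a bridge for the first $\sqrt{n}$ rounds after a connection, which is essentially the paper's \texttt{LAST} mechanism.
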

\begin{proof}

Let $N_u(r)$ be the set of neighbours of $u$ on round $r$. Note that
$N_u(r)$ can be encoded as an $n$-bit string, which indicates, for every
node $x$, whether or not $x$ is a neighbour of $u$ on round $r$.

The algorithm is as follows. When a new edge $uv$ is inserted on round
$r$, both $u$ and $v$ send to all their neighbors the identity of their
new neighbor, denoted by \texttt{NEWID}.

In addition, $u$ sends a bitmask of $\ceil{\sqrt{n}}$ bits to $v$,
indicating, for every one of the previous $\ceil{\sqrt{n}}$ rounds,
whether or not one of $u$'s neighbors has sent a \texttt{NEWID} to $u$.
We denote this information by $\texttt{LAST}_u$. Node $v$ sends to $u$
the same information.

Finally, $u$ encodes $N_u(r)$ as an $n$-bit string, denoted
$\texttt{ALL}_u(r)$, and starts sending it to $v$ in chunks of
$\ceil{\sqrt{n}}$ bits per round. This process begins on round $r$ and
ends on round $r + \floor{\sqrt{n}}$, when the entire string has been
sent. During these rounds, $u$ keeps sending \texttt{NEWID} to $v$, and
to all its other neighbors, as described above. Node $v$ does the same.
It should be noted that this continuous communication between $u$ and
$v$ is not intended for allowing them to detect triangles that appear by
the insertion of the edge $uv$, as these are detected immediately due to
previous information. Rather, communicating \texttt{ALL} allows $u$ and
$v$ to detect triangles that appear by other topology changes that may
occur in subsequent rounds, as we show below.

Overall, \texttt{NEWID} requires $O(\log n)$ bits, while \texttt{LAST}
and \texttt{ALL} require $O(\sqrt{n})$ bits. Thus, the required
bandwidth is $O(\sqrt{n})$.

We show that at the end of each round, every node $u$ has enough
information to determine, for every two of its neighbors $v$ and $w$,
whether or not the edge $vw$ exists.

First, since only edge insertions are considered, if the edge $vw$
exists in the initial graph, it exists throughout. Also, if $vw$ is
inserted when at least one of the edges $uv$ or $uw$ already exists,
then $u$ receives this information through \texttt{NEWID}. The only
other case is when $vw$ does not exist in the initial graph, and is
inserted when $u$ is not yet connected to either $v$ or $w$. That is,
the initial graph does not contain any of these three edges, and $vw$ is
inserted before the other two. W.l.o.g. assume that $uv$ is inserted
before $uw$. Now, let $t_v$ be the round in which $uv$ is inserted, and
let $t_w$ be the round in which $uw$ is inserted.

If $t_w - t_v \leq \ceil{\sqrt{n}}$, then when $uv$ is inserted, $v$
sends a \texttt{NEWID} to $w$. Therefore, when $uw$ is inserted, $u$ can
determine from $\texttt{LAST}_w$ that in round $t_v$ a neighbor of $w$
has sent a \texttt{NEWID} to $w$. Since the only edge inserted in that
round is $uv$, $u$ determines that $v$ is a neighbor of $w$. If
$t_w - t_v > \ceil{\sqrt{n}}$, then, in round $t_w$, $v$ has already
sent the entire string $\texttt{ALL}_v(t_v)$ to $u$, which indicates
that $w$ is a neighbor of $v$. Therefore, in all cases, $u$ determines
whether or not the edge $vw$ exists, as claimed.
\qedhere

\end{proof}

The algorithm given for Theorem~\ref{th:k3_mlist_ub_pe} can be extended
to handle edge deletions and node deletions as well, by simply having
each node send to all its neighbors an indication whether or not it lost
a neighbor on the current round, along with the ID of the lost neighbor
(if any). This requires only $O(\log n)$ bits of bandwidth, and one can
verify that this extended algorithm allows every node to give the
correct output in all cases, as required. Therefore:

\begin{corollary}\label{cor:k3_mlist_ub_pe_mvme}
	The deterministic $1$-round bandwidth complexity of \MLIST{K_3}
	under node deletions and edge deletions/insertions is
	$O(\sqrt{n})$.
\end{corollary}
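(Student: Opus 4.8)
The plan is to keep the edge-insertion algorithm of Theorem~\ref{th:k3_mlist_ub_pe} essentially intact and to \emph{layer} a deletion-handling mechanism on top of it, so that the same invariant is preserved: at the end of every round, each node $u$ can determine, for every pair $v,w$ of its \emph{current} neighbors, whether the edge $vw$ exists. Once this invariant holds, $u$ outputs exactly those triangles $uvw$ for which $v,w$ are current neighbors and $vw$ is known to exist, which is precisely the set of triangles containing $u$. Thus it suffices to show that the invariant can be maintained under all three change types simultaneously within bandwidth $O(\sqrt{n})$.

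First I would specify the augmentation. In addition to the messages \texttt{NEWID}, $\texttt{LAST}$ and $\texttt{ALL}$ of the base algorithm, every node sends to all of its neighbors, on every round, a single indication of whether it lost a neighbor on that round, together with the ID of the lost neighbor (if any). This costs $O(\log n)$ bits, so the total bandwidth remains $O(\sqrt{n})$. Insertions are then handled exactly as in Theorem~\ref{th:k3_mlist_ub_pe}, so I would only need to argue that the new deletion reports correctly update each node's knowledge.

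Next I would verify correctness by cases, mirroring the analysis behind Corollary~\ref{cor:k3_mlist_ub_me_mv}. When an edge $vw$ with $v,w$ both neighbors of $u$ is deleted, $v$ reports losing $w$ and $w$ reports losing $v$, so $u$ learns that $vw$ no longer exists. When a node $x$ is deleted, either $x$ is a neighbor of $u$, in which case $u$ drops $x$ from its neighbor list and removes every triangle through $x$, or $x$ is not a neighbor of $u$, in which case no triangle $uvw$ is affected; crucially, the lost-neighbor IDs reported by the common neighbors of $x$ name $x$ rather than each other, so $u$ does \emph{not} erroneously delete the edge $vw$. This is exactly the ambiguity that the single-bit $\texttt{DELETED}$ scheme could not resolve, and which the lost-neighbor ID removes.

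The main obstacle I anticipate is the interaction between deletions and the \emph{multi-round} transmission of $\texttt{ALL}$. The string $\texttt{ALL}_v(t_v)$ is a snapshot of $v$'s neighborhood at the round $t_v$ in which $uv$ was inserted, and it is delivered to $u$ only over the following $\floor{\sqrt{n}}$ rounds; a deletion incident to $v$ during that window can render the snapshot stale before $u$ finishes reading it. The resolution is that any such deletion is itself reported immediately through a lost-neighbor ID, so $u$ reconstructs $v$'s current neighborhood by applying the reported deletions (and any subsequent \texttt{NEWID} insertions) to the received snapshot. The bulk of the verification is checking that this bookkeeping is consistent for every relative ordering of the insertion and deletion events, and in particular that it correctly covers the subcases of the hard direction of Theorem~\ref{th:k3_mlist_ub_pe}: neighbors connected close together in time via $\texttt{LAST}$, far apart via $\texttt{ALL}$, and now with deletions interleaved in between.
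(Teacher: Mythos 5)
Your proposal is correct and takes essentially the same approach as the paper: the paper extends the algorithm of Theorem~\ref{th:k3_mlist_ub_pe} with exactly the same augmentation --- a per-round lost-neighbor indication together with the ID of the lost neighbor, costing only $O(\log n)$ extra bits --- and leaves the case analysis to the reader with ``one can verify.'' Your explicit verification, in particular the resolution of the edge-deletion versus common-neighbor-deletion ambiguity via the reported ID and the patching of the stale $\texttt{ALL}$ snapshot with subsequently reported changes, supplies details the paper omits.
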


Also, if we are promised a quiet round, the problem can be solved with
$O(1)$ bits of bandwidth, as the next theorem shows.

\begin{theorem}\label{th:k3_mlist_ub_r2_pe}
	The deterministic $2$-round bandwidth complexity of \MLIST{K_3}
	under edge insertions is $O(1)$.
\end{theorem}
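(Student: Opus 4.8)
The plan is to exploit two features of the model that a single round cannot use: the fact that at most one topology change occurs per round, and the availability of a second communication round to route information back to the endpoints of the new edge. The guiding observation is that the $\sqrt{n}$ cost in Theorem~\ref{th:k3_mlist_ub_pe} is driven almost entirely by one case: when an edge $uv$ is inserted, each endpoint must list the new triangles $uvw$ for every common neighbor $w \in N_u(t)\cap N_v(t)$, yet an endpoint knows only its own neighborhood and would need to learn which of its neighbors also neighbor the other endpoint. A common neighbor $w$, by contrast, can already identify the inserted edge in the insertion round itself: since exactly one edge is inserted per round, the only nodes in the \emph{entire} network that gain an edge in that round are the two endpoints of that edge. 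Hence if $w$ sees exactly two of its neighbors signal a new edge, those two nodes \emph{must} be the endpoints, and $w$ can list the triangle immediately.

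Concretely, I would have every node send, in each round and to each neighbor, two bits. The first bit, call it \emph{active}, indicates whether the sender is an endpoint of the edge inserted in the current round. The second bit is addressed per-neighbor: a node $w$ sends a \emph{confirmation} bit to neighbor $x$ exactly when, in the \emph{previous} round, $w$ saw precisely two of its neighbors raise the active bit and $x$ was one of them. Because only one edge is inserted per round, the global set of active nodes in any round is exactly the (at most two) endpoints of that round's edge, so a node $w$ that sees two active neighbors $u,v$ is guaranteed to be a genuine common neighbor of the newly connected pair.

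Correctness then follows by tracking the three members of each newly-created triangle $uvw$, where $uv$ is inserted in round $t$ and $w \in N_u(t)\cap N_v(t)$. The common neighbor $w$ lists $uvw$ already at the end of round $t$, as argued above. Each endpoint, say $u$, knows that its sole new neighbor in round $t$ is $v$; when in round $t+1$ it receives a confirmation bit from a neighbor $w$, the only pair $w$ could have been confirming is $(u,v)$ --- again because $v$ was the only other active node in round $t$ --- so $u$ concludes $w \in N_v$ and lists $uvw$ by the end of round $t+1$, and symmetrically for $v$. No spurious triangle is ever added, since $w$ raises a confirmation only in genuine common-neighbor situations and an endpoint acts only on received confirmations. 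Combined with the facts that the initial topology is known (so every node starts with a correct list, as in Theorem~\ref{th:k3_mlist_ub_mv}) and that insertions never destroy a triangle, every node's list remains exactly correct within two rounds of each change, using $O(1)$ bandwidth.

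The step I expect to require the most care is disentangling signals when a \emph{new} edge is inserted in round $t+1$ while confirmations for the round-$t$ edge are still in flight. I would handle this by noting that the active bit and the confirmation bit are logically independent and carried simultaneously, and that the single-change-per-round guarantee keeps the active set globally unique in each round, so a node's round-$(t+1)$ confirmations unambiguously reference round $t$'s edge while its round-$(t+1)$ active bit references the fresh edge. The conceptual crux, and the reason the second round is essential, is precisely that an endpoint cannot compute the common neighborhood itself within a single round under $O(1)$ bandwidth; the second round simply reverses the direction of information flow, letting the common neighbors --- who can pinpoint the edge instantly --- report a single bit back to the endpoints.
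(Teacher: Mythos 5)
Your proposal is correct and matches the paper's proof essentially exactly: your \emph{active} and \emph{confirmation} bits are the paper's \texttt{NEW} and \texttt{LAST} bits, and both arguments rest on the same two points --- the single-change-per-round guarantee lets the common neighbor $w$ identify the inserted edge and list the triangle immediately, and the second round carries a one-bit acknowledgment back so each endpoint can infer that $w$ neighbors the other endpoint.
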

\begin{proof}

On every round, for every two neighbors $u$ and $v$, $u$ sends two bits
to $v$:
\begin{enumerate}
\item
	An indication whether or not $u$ has a new neighbor $w$ (i.e.,
	the edge $uw$ has been inserted on the current round). We denote
	this information by \texttt{NEW}.
\item
	An indication whether or not any other neighbor of $u$ has sent
	$\texttt{NEW} = 1$ on the previous round. We denote this
	information by \texttt{LAST}.
\end{enumerate}

Consider a triangle $uvw$, and assume that the edge $uv$ is the last one
inserted. When $uv$ is inserted, both $u$ and $v$ send
$\texttt{NEW} = 1$ to $w$. Since only one edge can be inserted on a
single round, $w$ knows that the edge $uv$ has been inserted, and
therefore the triangle $uvw$ exists.

On the next round, since $w$ received $\texttt{NEW} = 1$ from $v$, it
sends $\texttt{LAST} = 1$ to $u$, indicating that some other neighbor of
$w$ has sent $\texttt{NEW} = 1$ on the last round. $u$ knows that the
edge $uv$ is the only edge that has been inserted on the last round,
hence $v$ must be a neighbor of $w$. Therefore, $u$ can deduce that the
triangle $uvw$ exists. Likewise, $w$ sends $\texttt{LAST} = 1$ to $v$,
from which $v$ can deduce that the triangle $uvw$ exists.

It follows that all three nodes can determine that the triangle $uvw$
exists after two rounds of communication, as required.
\qedhere

\end{proof}

The above algorithm can be combined with a variant of the algorithm of
Corollary~\ref{cor:k3_mlist_ub_me_mv}, in order to handle node deletions
and edge deletions as well. In that algorithm, every node $v$ sends a
bit to every neighbor $u$, indicating whether $v$ lost an edge on the
current round. We denoted this information by \texttt{DELETED}. We can
do the same thing here; however, we have the same problem that we
encountered in Corollary~\ref{cor:k3_mlist_ub_me_mv} --- when a node $u$
receives $\texttt{DELETED} = 1$ from two neighbors $v$ and $w$ on the
same round, it cannot determine whether the edge between $v$ and $w$ was
deleted, or some other node $x$, which was connected to both $v$ and
$w$, was deleted. The triangle $uvw$ exists in the latter case, but not
in the former case. Therefore, $u$ must be able to distinguish between
these two cases.

In Corollary~\ref{cor:k3_mlist_ub_me_mv} we solved this problem by
having every node $v$ send $\texttt{DELETED} = 1$ to a neighbor $u$ only
if it lost a neighbor which is currently connected to $u$. This is
possible, because the algorithm of Corollary~\ref{cor:k3_mlist_ub_me_mv}
guarantees that on every round, $v$ knows the list of all common
neighbors of $v$ and $u$.

In our case there is no such guarantee --- $u$ will know whether or not
its missing neighbor was connected to $v$ only in the next round.
However, by using the extra quiet round, $u$ can easily distuinguish
between the two problematic cases described above, by checking whether
it received $\texttt{DELETED} = 1$ from some other node on the previous
round.

Thus, we extend the algorithm of Theorem~\ref{th:k3_mlist_ub_r2_pe} as
follows. Whenever a node $u$ loses an edge, it sends
$\texttt{DELETED} = 1$ to all its neighbors. On the next round, $u$
sends to all its neighbors an indication whether or not it received
$\texttt{DELETED} = 1$ from any other neighbor on the previous round.
One can verify that this information allows every node to give the
correct output after $2$ rounds of communication. Like the original
algorithm, this uses $O(1)$ bits of bandwidth on every round, thus the
total bandwidth is still $O(1)$ bits. This implies the following
corollary, which is clearly optimal.

\begin{corollary}\label{cor:k3_mlist_ub_r2_pe_mvme}
	The deterministic $2$-round bandwidth complexity of \MLIST{K_3}
	under node deletions and edge deletions/insertions is $O(1)$.
\end{corollary}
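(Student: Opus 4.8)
The plan is to take the two-round edge-insertion algorithm of Theorem~\ref{th:k3_mlist_ub_r2_pe}, which already lists every triangle created by an insertion using the one-bit signals \texttt{NEW} and \texttt{LAST}, and to augment it with a deletion-handling layer that reuses the constant-bandwidth idea of Corollary~\ref{cor:k3_mlist_ub_me_mv}. Concretely, every node that loses a neighbor on a given round broadcasts $\texttt{DELETED}=1$ to all of its current neighbors. Since at most one topology change happens per round, an insertion and a deletion can never occur on the same round, so the insertion signals and the deletion signal never interfere, and I can analyze each type of change in isolation; the insertion part is already handled by Theorem~\ref{th:k3_mlist_ub_r2_pe}.

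The substance of the argument is the case analysis for deletions, and in particular the single ambiguous situation. A triangle $uvw$ maintained by $u$ can only be destroyed by removing one of its edges or one of its three nodes. If $u$ loses $v$ or $w$ from its neighbor list --- because edge $uv$ or $uw$ is deleted, or because node $v$ or $w$ is deleted --- then $u$ observes this directly from its neighbor list and removes the triangle. The only case in which $u$ retains both $v$ and $w$ as neighbors yet must still decide the fate of $uvw$ is when edge $vw$ is deleted; and from the \texttt{DELETED} bits alone this is indistinguishable from the deletion of a common neighbor $x$ of $v$ and $w$ that is not adjacent to $u$, since in both cases $v$ and $w$ each lose an edge and send $u$ the same $\texttt{DELETED}=1$. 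In the former case the triangle is destroyed, in the latter it survives, so $u$ must tell the two apart.

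Here I would exploit the extra round precisely because Corollary~\ref{cor:k3_mlist_ub_me_mv} does not carry over: on the round following a deletion, each node reports to every neighbor $u$ a single bit indicating whether it received $\texttt{DELETED}=1$ from a neighbor \emph{other than} $u$ on the previous round. The key observation is that when edge $vw$ is deleted, $v$ and $w$ are no longer adjacent, so neither receives the other's \texttt{DELETED} message, and --- since only one change occurred --- neither receives $\texttt{DELETED}=1$ from anyone else; hence both report $0$ to $u$. When instead a common node $x$ is deleted, $v$ and $w$ remain adjacent, so each receives the other's $\texttt{DELETED}=1$ and reports $1$ to $u$. Thus $u$, having seen $\texttt{DELETED}=1$ from both $v$ and $w$, concludes that edge $vw$ was deleted exactly when the follow-up bits are $0$, and keeps the triangle otherwise.

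I expect the main obstacle to be bookkeeping rather than a conceptual leap: verifying that the follow-up bit is well defined and correct in all configurations (for instance when $v$ or $w$ has additional unrelated neighbors, or when a single deletion simultaneously affects several of $u$'s triangles), and confirming that layering deletions onto the insertion machinery preserves correctness throughout the two-round window after the last change. Since each signal consists of a fixed number of bits sent on every round, the total bandwidth remains $O(1)$, yielding the claimed bound, which matches the trivial lower bound and is therefore optimal.
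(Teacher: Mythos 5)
Your proposal is correct and follows essentially the same route as the paper: run the two-round insertion algorithm of Theorem~\ref{th:k3_mlist_ub_r2_pe} alongside a \texttt{DELETED} bit, and use the extra round to have each node report whether it received $\texttt{DELETED}=1$ from another neighbor, which is exactly how the paper disambiguates the deletion of edge $vw$ from the deletion of a common neighbor $x$. Your explicit case analysis (both endpoints report $0$ when $vw$ is deleted, both report $1$ when $x$ is deleted) is the verification the paper leaves to the reader.
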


Node insertions are harder to handle than the other types of changes. If
we are promised $r-1$ quiet rounds, a simple algorithm exists that uses
only $O\pa{\frac{n}{r}}$ bits of bandwidth. As we show in
Section~\ref{subsubsec:tri-mlist-lower}, this is tight.

\begin{theorem}\label{th:k3_mlist_ub_r_any}
	For every $r$, the deterministic $r$-round bandwidth complexity
	of \MLIST{K_3} under any type of change is $O(n/r)$.
\end{theorem}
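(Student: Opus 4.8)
The plan is to exploit the $r-1$ quiet rounds granted by the definition of an $r$-round algorithm to let every node transmit its \emph{entire} current neighborhood to all of its neighbors, spreading the $n$-bit encoding across the $r$ available rounds. Concretely, recall from the proof of Theorem~\ref{th:k3_mlist_ub_pe} that a node $u$ can encode its neighbor set $N_u(t)$ as an $n$-bit string indexed by node ID. I would partition this string into $r$ consecutive blocks of $\ceil{n/r}$ bits each, and, at every round $t$, have each node $u$ send to all of its neighbors the block whose index is $t \bmod r$, taken from its current neighborhood string. This uses $\ceil{n/r} = O(n/r)$ bits per message and is completely oblivious to the type of change that occurred.

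For correctness, I would argue that over any window of $r$ consecutive rounds throughout which $u$'s neighborhood is stable, each neighbor of $u$ receives all $r$ blocks exactly once, since the block indices $t, t+1, \dots, t+r-1$ run through a complete residue system modulo $r$. Hence, $r$ communication rounds after the last topology change, every node $w$ knows $N_v$ in full for each of its current neighbors $v$. Since $w$ already knows $N_w$, it can determine, for every pair of its neighbors $v$ and $v'$, whether the edge $vv'$ exists by checking whether $v' \in N_v$; this is precisely the information needed to list every triangle $wvv'$ containing $w$. Because the mechanism does not depend on the kind of change, the same algorithm handles node and edge insertions and deletions, and any combination of them, which establishes the bound for \emph{any} type of change and hence gives $O(n/r)$ bits for \MLIST{K_3}.

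The step I expect to require the most care is the timing argument, namely ensuring that the block schedule guarantees complete reception within the $r$ granted rounds regardless of \emph{when} an edge or node was inserted, and regardless of the stream of changes that may have occurred in the immediately preceding rounds. The key observation that resolves this is that the fixed schedule ``send block $t \bmod r$ at round $t$'' makes the coverage argument independent of the insertion time: a freshly inserted node $v$ and each of its new neighbors $u$ become mutual neighbors at the moment of insertion and remain so for the entire window, so over the next $r$ rounds each receives all $r$ blocks of the other's neighborhood, with no handshake or synchronization of block offsets between the two endpoints required. Since the matching $\Omega(n/r)$ lower bound for node insertions is established in Section~\ref{subsubsec:tri-mlist-lower}, this upper bound is tight, yielding a $\Theta(n/r)$ complexity.
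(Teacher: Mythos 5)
Your proposal is correct and takes essentially the same approach as the paper: both split the $n$-bit neighborhood encoding into $\ceil{n/r}$-bit blocks sent over $r$ rounds, so that every neighbor present throughout the $r$-round window following the last change reconstructs the sender's full neighbor list and can then test, for each pair of its neighbors, whether they are adjacent. The only difference is bookkeeping --- the paper restarts the block stream (flagged by a one-bit change indicator) whenever the neighbor list changes, whereas you use a fixed cyclic schedule keyed to the round number; both give $O(n/r)$ bandwidth.
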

\begin{proof}

The algorithm is as follows: on every round, every node $u$ prepares an
$n$-bit message that specifies its current list of neighbors. Then, it
breaks this message into $B$-bit blocks, where $B=\ceil{\frac{n}{r}}$,
and sends it to all its neighbors, one block on every round.

Additionally, on every round, $u$ sends to all its neighbors one bit
indicating whether or not its list of neighbors has changed on the
current round. Whenever the list of neighbors changes, $u$ builds its
new list of neighbors, breaks it into blocks, and starts sending it
again.

After at most $r-1$ quiet rounds, all nodes are guaranteed to finish
sending their list to all their neighbors, thus every node can determine
whether any pair of its neighbors are connected to each other or not, as
required.
\qedhere

\end{proof}

\subsubsection{Lower bounds}
\label{subsubsec:tri-mlist-lower}

The 1-round bandwidth complexities for node and edge deletions are
clearly tight. Next, we show that handling edge insertions in $1$ round
requires at least $\Omega(\sqrt{n})$ bits of bandwidth. By
Theorem~\ref{th:k3_mlist_ub_pe}, this bound is tight. We first prove the
following lemma, which shows that a sufficiently dense bipartite graph
includes a large enough complete bipartite subgraph. This has the same
spirit as the results of Erd\"{o}s~\cite{Erdos1964}, used
in~\cite{Abboud2017,Fischer2018} for bounding the number of single-bit
rounds for detecting triangles, but here the sides can be of different
sizes.

\begin{lemma}\label{lm:densebip}
	For every $\varepsilon \in (0,1)$ there exist $\alpha,\beta,
	\gamma \in (0,1)$, such that for every bipartite graph
	$G=(L \cup R,E)$ having at least $(1-\varepsilon) |L| |R|$
	edges, there are subsets $A \subseteq L$ and $B \subseteq R$,
	whose sizes are $|A| \geq \alpha \cdot |L|$ and
	$|B| \geq \beta \cdot \gamma^{|L|} \cdot |R|$, such that
	$uv \in E$ for every $u \in A$ and $v \in B$ (i.e., the induced
	subgraph on $A \cup B$ is a complete bipartite graph).
\end{lemma}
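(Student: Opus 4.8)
The plan is to use a first-moment (probabilistic) argument: select $A$ as a uniformly random subset of $L$ of a \emph{fixed} size $k=\lceil\alpha|L|\rceil$, let $B$ be the set of common neighbours of $A$, and show that the \emph{expected} size of $B$ already exceeds $\beta\gamma^{|L|}|R|$. Fixing $|A|=k$ in advance means there is no need for concentration on the $L$ side---every outcome has $|A|=k\ge\alpha|L|$---so the entire difficulty is transferred to estimating $\mathbf{E}[|B|]$.

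First I would record the degree structure forced by the density hypothesis. Writing $\bar d(v)=|L|-|N(v)\cap L|$ for the number of non-neighbours of $v\in R$, the edge bound gives $\sum_{v\in R}\bar d(v)\le\varepsilon|L||R|$. A Markov inequality with threshold $\delta\in(\varepsilon,1)$ (concretely $\delta=(1+\varepsilon)/2$) then isolates a set $R'\subseteq R$ of at least $(1-\delta)|R|$ ``good'' vertices, each of degree $d(v)\ge\rho|L|$ into $L$, where $\rho=1-\varepsilon/\delta=\tfrac{1-\varepsilon}{1+\varepsilon}\in(0,1)$. This is exactly where the hypothesis $\varepsilon<1$ enters: it guarantees $\rho>0$, so a positive fraction of $R$ has degree bounded below by a positive fraction of $|L|$.

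The core computation is a counting identity for the first moment. Since $v\in B$ precisely when $A\subseteq N(v)\cap L$, linearity of expectation gives $\mathbf{E}[|B|]=\sum_{v\in R}\binom{d(v)}{k}\big/\binom{|L|}{k}$. I would then choose $\alpha<\rho$ (say $\alpha=\rho/2$) and lower-bound each good term through the product expansion $\binom{d(v)}{k}/\binom{|L|}{k}=\prod_{i=0}^{k-1}\frac{d(v)-i}{|L|-i}$, bounding each factor by $\frac{d(v)-i}{|L|-i}\ge\frac{d(v)-k}{|L|}\ge\rho-\alpha=\rho/2$. This yields $\binom{d(v)}{k}/\binom{|L|}{k}\ge(\rho/2)^{k}=(\rho/2)^{\alpha|L|}$ for every $v\in R'$, hence $\mathbf{E}[|B|]\ge(1-\delta)|R|\,(\rho/2)^{\alpha|L|}$. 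Setting $\gamma=(\rho/2)^{\alpha}$ and $\beta=1-\delta=\tfrac{1-\varepsilon}{2}$, all in $(0,1)$ and depending only on $\varepsilon$, gives $\mathbf{E}[|B|]\ge\beta\gamma^{|L|}|R|$, so some outcome $A$ (of size exactly $k\ge\alpha|L|$) attains $|B|\ge\beta\gamma^{|L|}|R|$, and $A\cup B$ induces a complete bipartite graph.

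I expect the only delicate points to be keeping the three constants genuinely inside $(0,1)$ across the whole range $\varepsilon\in(0,1)$---in particular the regime $\varepsilon\to1$, where $\rho,\alpha,\beta$ all shrink toward $0$ while $\gamma=(\rho/2)^{\rho/2}\to1$---together with the routine bookkeeping of the ceiling in $k=\lceil\alpha|L|\rceil$ and the requirement $d(v)\ge k$ for the binomials to be nonzero. The ceiling contributes only a lower-order term $1/|L|$ in the factor bound, which is absorbed by taking $|L|$ above a constant threshold depending on $\varepsilon$ (the finitely many smaller cases being immediate, e.g.\ $A=L$), and the strict inequality $\alpha<\rho$ keeps $\rho-\alpha$ bounded away from $0$. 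An alternative route replaces the Markov step and product bound by the convexity of $d\mapsto\binom{d}{k}$ and Jensen's inequality applied to the average degree $\ge(1-\varepsilon)|L|$, which I would fall back on if the explicit factor estimate proves awkward.
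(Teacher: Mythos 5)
Your argument is, up to phrasing, the paper's own proof: taking the expectation of $|B|$ over a uniformly random $k$-subset $A\subseteq L$ is exactly the paper's averaging of $S=\sum_{A}|N_A|$ over all $\binom{|L|}{k}$ subsets (the maximum is at least the mean); your Markov step on the non-degrees $\bar d(v)$ gives precisely the paper's count $k_p\geq\frac{1-\varepsilon-p}{1-p}\cdot|R|$ of vertices of degree at least $p|L|$, just with the threshold parametrized differently; and your factor-by-factor bound on $\binom{d(v)}{k}/\binom{|L|}{k}$ is the paper's estimate $\bigl(\frac{b-c}{a-c}\bigr)^{c}$. The main line is therefore correct, including the recognition that the $+1$ from the ceiling forces a case split on $|L|$. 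The one step that would fail as written is your proposed disposal of the small-$|L|$ base case via $A=L$: at density $1-\varepsilon$ the common neighbourhood of all of $L$ can be empty. Concretely, for any $|L|\geq 1/\varepsilon$, partition $R$ into $|L|$ classes and let the vertices of the $i$-th class be adjacent to all of $L$ except the $i$-th vertex; the density is $1-1/|L|\geq 1-\varepsilon$, yet no $v\in R$ is adjacent to all of $L$. Since your threshold $L_0(\varepsilon)\approx 2/\rho=2(1+\varepsilon)/(1-\varepsilon)$ exceeds $1/\varepsilon$ whenever $\varepsilon\geq 1/2$ or so, this regime is not vacuous. The correct (and equally short) fix is the paper's: choose $\alpha$ small enough that $\lceil\alpha|L|\rceil=1$ throughout the small range and take $A$ to be a single vertex of maximum degree, which has at least $(1-\varepsilon)|R|$ neighbours; this only further constrains the constants, which still depend on $\varepsilon$ alone.
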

\begin{proof}

Let $\alpha = \frac{1-\varepsilon}{6}$, and let $\mathcal{A}$ be the set
of all subsets of $L$ whose size is exactly $\ceil{\alpha \cdot |L|}$.
For every $A \in \mathcal{A}$, we denote by $N_A$ the set of all
vertices in $R$ which are connected to every vertex in $A$. Consider the
sum $S = \sum_{A \in \mathcal{A}} |N_A|$. Let $M =
\max \{ |N_A| : A \in \mathcal{A} \}$. Then:
\AL{eq:mlist_pe_sum_leq}{
		S
	\leq
		\sum_{A \in \mathcal{A}} M
	=
		\binom{|L|}{\ceil{\alpha \cdot |L|}} \cdot M
}
On the other hand, the sum $S$ can be computed by counting, for every
$v \in R$, the number of sets $A \in \mathcal{A}$ such that $v \in N_A$:
\A{
		S
	&=
		\sum_{A \in \mathcal{A}} |N_A|
	=
		\sum_{A \in \mathcal{A}} \sum_{v \in N_A} 1
	=
		\sum_{v \in R}
		\sum_{\substack{A \in \mathcal{A} : \\ v \in N_A}} 1
	=
		\sum_{v \in R} | \{ A \in \mathcal{A} : v \in N_A \} |
}

For $p \in (0,1)$, let $k_p$ be the number of vertices in $R$ whose
degree is at least $p \cdot |L|$. For every $v \in R$ whose degree is
$d(v)$ we have $| \{ A \in \mathcal{A} : v \in N_A \} | =
\binom{d(v)}{\ceil{\alpha \cdot |L|}}$. Therefore we can bound the above
sum
$$
		S
	=
		\sum_{v \in R}
			\binom
				{d(v)}
				{\ceil{\alpha \cdot |L|}}
	\geq
		\sum_{\substack{v \in R : \\ d(v) \geq p \cdot |L|}}
			\binom
				{d(v)}
				{\ceil{\alpha \cdot |L|}}
	\geq
		\sum_{\substack{v \in R : \\ d(v) \geq p \cdot |L|}}
			\binom
				{\ceil{p \cdot |L|}}
				{\ceil{\alpha \cdot |L|}}
	=
		k_p \cdot
			\binom
				{\ceil{p \cdot |L|}}
				{\ceil{\alpha \cdot |L|}}
.
$$
Combining this with~\eqref{eq:mlist_pe_sum_leq} gives
$
		\binom{|L|}{\ceil{\alpha \cdot |L|}} \cdot M
	\geq
		k_p \cdot
			\binom
				{\ceil{p \cdot |L|}}
				{\ceil{\alpha \cdot |L|}},
$
which implies:
\AL{eq:mlist_pe_Mbound}{
		M
	\geq
			k_p
		\cdot
			\binom
				{\ceil{p \cdot |L|}}
				{\ceil{\alpha \cdot |L|}}
		/
			\binom
				{|L|}
				{\ceil{\alpha \cdot |L|}}
}

We can bound $k_p$ as follows:
\A{
		|E|
	&=
		\sum_{v \in R} d(v)
	=
		\sum_{\substack{v \in R : \\ d(v) \geq p \cdot |L|}}
			d(v)
		+
		\sum_{\substack{v \in R : \\ d(v) < p \cdot |L|}}
			d(v)
	\leq
		\sum_{\substack{v \in R : \\ d(v) \geq p \cdot |L|}}
			|L|
		+
		\sum_{\substack{v \in R : \\ d(v) < p \cdot |L|}}
			p \cdot |L|
	\\ &=
			k_p \cdot |L|
		+
			(|R| - k_p) \cdot p \cdot |L|
	=
			(1-p) \cdot k_p \cdot |L|
		+
			p \cdot |L| \cdot |R|
}
On the other hand we have
$|E| \geq (1-\varepsilon) \cdot |L| \cdot |R|$, and therefore
$
		(1-p) \cdot k_p \cdot |L| + p \cdot |L| \cdot |R|
	\geq
		(1-\varepsilon) \cdot |L| \cdot |R|
$, which implies $k_p \geq \frac{1-\varepsilon-p}{1-p} \cdot |R|$.
Setting $p = \frac{1-\varepsilon}{2}$ gives
$
	k_p \geq \frac{1-\varepsilon}{1+\varepsilon} \cdot |R|
$, and substituting this into~\eqref{eq:mlist_pe_Mbound} gives:
\A{
		M
	\geq
			\frac
			{
				1-\varepsilon
			}
			{
				1+\varepsilon
			}
		\cdot
			\frac
			{
				\binom
					{\ceil{p \cdot |L|}}
					{\ceil{\alpha \cdot |L|}}
			}
			{
				\binom
					{|L|}
					{\ceil{\alpha \cdot |L|}}
			}
		\cdot
			|R|
}
Finally, we bound the binomial fraction on the right-hand size as
follows. For simplicity, define
$
	a = |L|,
	b = \ceil{p \cdot |L|},
	c = \ceil{\alpha \cdot |L|}
$. Note that $a \geq b \geq c$. Now:
\A{
		\frac
		{
			\binom{b}{c}
		}
		{
			\binom{a}{c}
		}
	&=
		\frac
		{
			\prod_{i=1}^c (i+b-c)
		}
		{
			\prod_{i=1}^c (i+a-c)
		}
	=
		\prod_{i=1}^c \frac{i+b-c}{i+a-c}
	=
		\prod_{i=1}^c (1 - \frac{a-b}{i+a-c})
	\\ &\geq
		\prod_{i=1}^c (1 - \frac{a-b}{a-c})
	=
		\prod_{i=1}^c \frac{b-c}{a-c}
	=
		\pa{\frac{b-c}{a-c}}^c
}
Therefore:
\AL{eq:mlist_pe_binfrac}{
		\frac
		{
			\binom
				{\ceil{p \cdot |L|}}
				{\ceil{\alpha \cdot |L|}}
		}
		{
			\binom
				{|L|}
				{\ceil{\alpha \cdot |L|}}
		}
	\geq
		\pa{
			\frac
			{
				\ceil{p \cdot |L|}
				-
				\ceil{\alpha \cdot |L|}
			}
			{
				|L|
				-
				\ceil{\alpha \cdot |L|}
			}
		}^{
			\ceil{\alpha \cdot |L|}
		}
}

In order to choose appropriate values for $\beta$ and $\gamma$, we now
distinguish between two cases for the graph $G$. If $|L| \leq
\frac{1}{\alpha}$, then $\alpha |L| \leq 1$. Since there must be a
vertex in $|L|$ having at least $(1-\varepsilon) |R|$ neighbors in $R$,
the claim holds for every $\beta \leq 1$ and $\gamma \leq
1-\varepsilon$.

If $|L| > \frac{1}{\alpha}$, we can further develop the right hand side
of~\eqref{eq:mlist_pe_binfrac} as follows:
\A{
		\pa{
			\frac
			{
				\ceil{p |L|}-\ceil{\alpha |L|}
			}
			{
				|L|-\ceil{\alpha |L|}
			}
		}^{
			\ceil{\alpha |L|}
		}
	&\geq
		\pa{
			\frac
			{
				p |L|-(\alpha |L|+1)
			}
			{
				|L|-\alpha |L|
			}
		}^{
			\alpha |L|+1
		}
	=
		\pa{
			\frac
				{p-\alpha}
				{1-\alpha}
			-
			\frac
				{1}
				{(1-\alpha) |L|}
		}^{
			\alpha |L|+1
		}
	\\ &>
		\pa{
			\frac
				{p-\alpha}
				{1-\alpha}
			-
			\frac
				{1}
				{(1-\alpha) \frac{1}{\alpha}}
		}^{
			\alpha |L|+1
		}
	=
		\pa{
			\frac
				{p-2\alpha}
				{1-\alpha}
		}^{
			\alpha |L|+1
		}
	=
		\pa{
			\frac
				{1-\varepsilon}
				{5+\varepsilon}
		}^{
			\alpha |L|+1
		}
}

To sum it all up, we now have
$
		M
	>
		\pa{
			\frac
				{1-\varepsilon}
				{5+\varepsilon}
		}^{
			\alpha \cdot |L|+1
		}
$. Recalling the definition of $M$, this inequality implies that there
exists $A \subseteq L$ whose size is exactly $\ceil{\alpha \cdot |L|}$,
such that
$
		|N_A|
	>
		\pa{
			\frac
				{1-\varepsilon}
				{5+\varepsilon}
		}^{
			\alpha \cdot |L|+1
		}
$, i.e., there are more than
$
	\pa{
		\frac
			{1-\varepsilon}
			{5+\varepsilon}
	}^{
		\alpha \cdot |L|+1
	}
$ vertices in $R$ which are connected to every vertex in $A$. Therefore,
the claim holds for every $\beta$ and $\gamma$ such that:
$
		\beta
	\leq
		\frac
			{1-\varepsilon}
			{5+\varepsilon},
		\gamma
	\leq
		\pa{
			\frac
				{1-\varepsilon}
				{5+\varepsilon}
		}^{\alpha}
	=
		\pa{
			\frac
				{1-\varepsilon}
				{5+\varepsilon}
		}^{
			\frac
				{1-\varepsilon}
				{6}
		}
$. Thus, given $\varepsilon \in (0,1)$, the following values of $\alpha,
\beta,\gamma$ satisfy the claim for every $G$:
\A{
	\alpha =
		\frac
			{1-\varepsilon}
			{6},
	\beta =
		\frac
			{1-\varepsilon}
			{5+\varepsilon},
	\gamma =
		\min \pabr{
			1-\varepsilon,
			\pa{
				\frac
					{1-\varepsilon}
					{5+\varepsilon}
			}^{
				\frac
					{1-\varepsilon}
					{6}
			}
		}
	.
\qedhere
}

\end{proof}

\begin{theorem}\label{th:k3_mlist_lb_pe}
	The randomized $1$-round bandwidth complexity of \MLIST{K_3}
	under edge insertions is $\Omega(\sqrt{n})$.
\end{theorem}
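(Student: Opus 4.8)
The plan is to prove a matching $\Omega(\sqrt n)$ lower bound via the fooling-views methodology, combining Yao's principle (to reduce the randomized bound to a deterministic one under a hard input distribution) with a counting argument powered by Lemma~\ref{lm:densebip}. I would fix a single node $x$ and design a distribution over graph sequences in which $x$ is forced, at one critical round, to list a set of $\Omega(\sqrt n)$ triangles whose identity depends on information that cannot have reached $x$ before that round. Concretely, $x$ would already be connected to a set $U$ of roughly $\sqrt n$ nodes, and the \emph{critical edge} $xz$ would be inserted \emph{last}; its insertion simultaneously creates every triangle $xzu$ with $u \in U \cap N(z)$, so that the correct output of $x$ at that round encodes the whole set $U \cap N(z)$, which I would arrange to carry $\Omega(\sqrt n)$ bits.

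The core of the argument is to show that $x$ cannot know $U \cap N(z)$ in time. Before the critical round $x$ is not adjacent to $z$ and hence receives nothing from $z$, while at the critical round it receives a single message of $B$ bits from $z$. The danger is that the neighbors in $U$ could have reported their adjacency to $z$ in advance; the schedule must preclude this, by inserting the edges between $U$ and the candidate nodes so densely and so late that no node of $U$ has the bandwidth to relay all of its relevant adjacencies to $x$ beforehand, and so that $z$ is only one of many equally plausible critical nodes. I would then suppose toward a contradiction that $B = o(\sqrt n)$ and bound the number of distinct \emph{transcripts} that $x$ can possibly receive up to and including the critical round; because the bandwidth is small, this transcript set, which will play the role of $L$ in the lemma, is small, whereas the set of admissible inputs, playing the role of $R$, is exponentially larger.

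To turn this gap into a contradiction I would build a bipartite \emph{consistency graph} between transcripts and inputs, joining a transcript to an input whenever that input both produces that transcript at $x$ and is answered correctly. Since the deterministic algorithm obtained from Yao's principle errs only with small probability, most incidences are present and this graph is dense in the sense required by Lemma~\ref{lm:densebip}. The lemma, whose two sides are deliberately allowed to have very different sizes, then yields a large complete bipartite subgraph $A \times B$: a family $B$ of inputs, all consistent with a common transcript in $A$, and all answered correctly. But inputs sharing a transcript are indistinguishable to $x$, so $x$ must produce the same output for all of them, while by construction they demand different lists of triangles --- a contradiction as soon as $|B| \ge 2$. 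The exponentially small factor $\gamma^{|L|}$ in the lemma is harmless precisely because the small bandwidth forces $|L|$ to be small, so $|B|$ remains large.

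The step I expect to be the main obstacle is the second one: ruling out that $x$ has already learned $U \cap N(z)$ through the continuous communication of the dynamic model. A naive accounting fails, since over the many rounds needed to build the instance $x$ accumulates far more than $\sqrt n$ total bits; the hardness must instead be forced through a single bandwidth-limited round by scheduling the decisive adjacencies so that they are genuinely un-relayable in advance. Making this precise --- and simultaneously ensuring that the induced input distribution keeps the transcript set $L$ small while making the consistency graph dense enough to feed Lemma~\ref{lm:densebip} --- is the delicate, technical heart of the proof, and is also exactly where the extra machinery needed to cover randomized algorithms comes into play.
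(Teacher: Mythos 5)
Your high-level plan matches the paper's: fix a node $v$, pre-build a bipartite graph $C$ between a set $U$ of $t\approx\sqrt n$ nodes and a large set $W$ while $v$ is isolated, connect $v$ to all of $U$, insert the critical edge $vw$ last, invoke Yao's lemma, and use Lemma~\ref{lm:densebip} to absorb the $\varepsilon$ error fraction before counting. But your instantiation of Lemma~\ref{lm:densebip} does not work. You define a bipartite graph between transcripts ($L$) and inputs ($R$) with an edge when the input \emph{produces} that transcript and is answered correctly. For a deterministic algorithm each input produces exactly one transcript, so every vertex of $R$ has degree at most one and your consistency graph has at most $|R|$ of the $|L||R|$ possible edges --- it is maximally sparse, not $(1-\varepsilon)$-dense, so the lemma's hypothesis fails. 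Worse, the complete bipartite subgraph $A\times B$ you want would assert that every input in $B$ is consistent with every transcript in $A$, which is impossible once $|A|\ge 2$. The lemma must instead be applied with $L=W$ (the candidate critical endpoints $w$) and $R=\mathcal{C}$ (the set of bipartite graphs between $U$ and $W$), with edge relation ``the algorithm answers $S_{C,w}$ correctly''; correctness on a $(1-\varepsilon)$ fraction of all pairs makes \emph{this} graph dense, and the lemma yields a rectangle $W_1\times\mathcal{C}_1$ on which the algorithm is always correct. Transcripts enter only afterwards, as a separate pigeonhole step: restrict to $\mathcal{C}_2\subseteq\mathcal{C}_1$ on which $v$ receives identical messages from $U$ (losing a factor $2^{B\cdot t(t+3)/2}$, since the $U$-messages depend only on $C$ and not on $w$), then upper-bound $|\mathcal{C}_2|$ by noting that each $w\in W_1$ can now realize at most $2^B$ of its $2^t$ possible neighborhoods in $U$, because its single final $B$-bit message must determine $N(w)\cap U$. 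Comparing the two bounds on $|\mathcal{C}_2|$ with $t=\ceil{\sqrt n}$ gives $B=\Omega(\sqrt n)$; the conclusion is this quantitative comparison, not a ``two indistinguishable inputs with $|B|\ge 2$'' contradiction.

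Separately, the difficulty you flag --- that $v$ accumulates $\Theta(Bt^2)$ bits from $U$ before the critical round --- is resolved by exactly this accounting, not by inserting the $U$--$W$ edges ``densely and late'': the paper inserts them \emph{first}, while $v$ is still isolated and receives nothing, so $U$'s only channel to $v$ consists of the $t(t+3)/2$ messages sent after $v$ begins connecting, and $Bt^2\approx Bn$ bits cannot describe the $t(n-t-1)\approx n^{3/2}$ bits of $C$; the residual entropy must then flow through the single message from $w$, for the $\Omega(n)$ many $w\in W_1$ simultaneously.
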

\begin{proof}

Let $A$ be a (randomized) $1$-round algorithm that solves \MLIST{K_3}
under edge insertions using bandwidth $B$ with error probability
$\varepsilon$. Let $t$ be a parameter to be defined later, and consider
a tripartite graph with $n$ nodes as in Figure~\ref{fig:tri}.
\begin{figure}[t]
	\begin{center}
		\begin{tikzpicture}[scale=0.15]
		\tikzstyle{every node}+=[inner sep=0pt]

		\draw [black] (20,-30) ellipse (6 and 14);
		\draw [black] (40,-30) ellipse (3 and 8);

		\draw (20,-10) node {$W$};
		\draw (20,-20) node {$w_1$};
		\draw (20,-25) node {$\vdots$};
		\draw (20,-30) node {$\vdots$};
		\draw (20,-35) node {$\vdots$};
		\draw (20,-40) node {$w_{n-t-1}$};

		\draw (40,-10) node {$U$};
		\draw (40,-25) node {$u_1$};
		\draw (40,-30) node {$\vdots$};
		\draw (40,-35) node {$u_t$};

		\draw (60,-30) node {$v$};

		\draw [black] (59,-30) -- (42.3,-25);
		\draw [black] (59,-30) -- (43,-30);
		\draw [black] (59,-30) -- (42.3,-35);
		\draw [black,dashed,line width=2.8,dash pattern=on 6pt off 6pt] (37,-30) -- (26,-30);
		\draw [black] (60,-31) .. controls (60,-54) and (0,-54) .. (14,-30);

		\end{tikzpicture}
		\caption{
			The lower bound sequence for \MLIST{K_3} with
			edge insertions. The connections between $W$ and
			$U$ are chosen. Then, $v$ is connected to all of
			$U$. The single edge from $v$ to $W$ is added
			last.
		}
		\label{fig:tri}
	\end{center}
\end{figure}
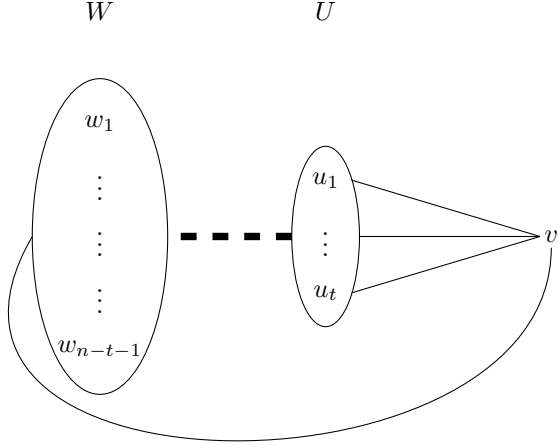
Let $\mathcal{C}$ be the set of all possible bipartite graphs with
vertex sets $W$ and $U$. Note that $|\mathcal{C}| = 2^{t(n-t-1)}$. For
every $C \in \mathcal{C}$ and every $w \in W$, we define a sequence of
changes $S_{C,w}$ as follows. We start with no edges. Then, we insert
edges between $U$ and $W$ to get the bipartite graph $C$. During the
next $t$ rounds, we connect $v$ to every $u \in U$, one by one. Finally,
we insert the edge $vw$. In the end of $S_{C,w}$, after $1$ additional
round of communication, node $v$ must output a list of all triangles
containing $v$. By construction, this list is uniquely determined by the
set of neighbors of $w$ in $U$. Thus, node $v$ needs to know the set of
all neighbors of $w$ after $1$ additional round of communication.

We assume that the output of $v$ is correct with probability at least
$1-\varepsilon$. Therefore, by Yao's lemma, there exists a deterministic
algorithm $A'$ that solves the same problem correctly for at least
$1-\varepsilon$ of all inputs. We define a bipartite graph with node
sets $\mathcal{C}$ and $W$, such that the edge $w-C$ exists if and only
if the output of $v$ is correct for the sequence $S_{C,w}$.

We have $|W| = n-t-1$ and $|\mathcal{C}| = 2^{t(n-t-1)}$. Also, by
assumption, this graph contains at least $(1-\varepsilon)$ of all
possible edges. Therefore, by Lemma~\ref{lm:densebip}, there exists $W_1
\subseteq W$ of size at least $\alpha \cdot (n-t-1)$ and $\mathcal{C}_1
\subseteq \mathcal{C}$ of size at least $\beta \cdot \gamma^{n-t-1}
\cdot 2^{t(n-t-1)}$, for some $\alpha,\beta,\gamma \in (0,1)$ that
depend only on $\varepsilon$, such that the output of $v$ is correct for
the sequence $S_{C,w}$ for every $C \in \mathcal{C}_1$ and every $w \in
W_1$.

Now, consider the input of node $v$ during any sequence $S_{C,w}$.
During the first stage of building the bipartite graph $C$, $v$ is
isolated and receives no input. Then, it receives a set of messages from
the nodes in $U$, and finally one additional message from $w$. Note that
the messages $v$ receives from the nodes in $U$ depend only on $C$, and
not on $w$, since the nodes in $U$ cannot know the identity of $w$ until
the final round of communication.

Now, on every round, every node in $U$ can send to $v$ any of $2^B$
possible messages. Altogether, during the entire sequence, the nodes of
$U$ send to $v$ a set of $\frac{t(t+3)}{2}$ messages. Hence, the number
of possible inputs from the nodes of $U$ to $v$ is
$
	2^{
		B \cdot \frac{t(t+3)}{2}
	}
$. Therefore, there exists $\mathcal{C}_2 \subseteq \mathcal{C}_1$ of
size at least
$
		2^{
			- B \cdot \frac{t(t+3)}{2}
		}
	\cdot
		|\mathcal{C}_1|
$, such that $v$ receives the same input from all nodes in $U$ for every
sequence $S_{C,w}$ for $C \in \mathcal{C}_2$ and every $w \in W_1$.
Thus, we have:
\begin{align}
\label{eq:mlist_ae_lb_1}
		|\mathcal{C}_2|
	&\geq
		2^{
			- B \frac{t(t+3)}{2}
		}
		|\mathcal{C}_1|
	\geq
		2^{
			- B \frac{t(t+3)}{2}
		}
		\beta
		\gamma^{n-t-1}
		2^{t(n-t-1)}
	=
		\beta
		\gamma^{n-t-1}
		2^{
			t(n-t-1) - B \frac{t(t+3)}{2}
		}
\end{align}

On the other hand, we can bound the size of $\mathcal{C}_2$ by
considering the number of possible neighbors of $w$ in any $C \in
\mathcal{C}_2$, for every $w \in W$. Recall that during the sequence
$S_{C,w}$, $v$ receives only one message from $w$. Also, for every $w
\in W_1$, $v$ must determine the set of all neighbors of $w$ in $U$.
Since there are only $2^B$ possible inputs $v$ can receive from $w$,
every $w \in W_1$ can have at most $2^B$ possible sets of neighbors in
any $C \in \mathcal{C}_2$.

Every $w \in W \setminus W_1$ can have any subset of $U$ as its set of
neighbors, hence it can have at most $2^t$ possible sets of neighbors.
Now, since every $C \in \mathcal{C}_2$ is uniquely determined by set of
neighbors of every $w \in W$, we have:
\begin{align}
\label{eq:mlist_ae_lb_2}
		|\mathcal{C}_2|
	\leq
		\pa{
			\prod_{w \in W_1} 2^B
		}
		\pa{
			\prod_{w \in W \setminus W_1} 2^t
		}
	=
		2^{
			B \cdot |W_1|
		}
		\cdot
		2^{
			t \cdot (|W| - |W_1|)
		}
	=
		2^{
			(B-t) \cdot |W_1|
			+
			t(n-t-1)
		}
\end{align}
Combining~\eqref{eq:mlist_ae_lb_1} and~\eqref{eq:mlist_ae_lb_2} gives:
$
			\beta
		\cdot
			\gamma^{n-t-1}
		\cdot
			2^{
				t(n-t-1) - B \cdot \frac{t(t+3)}{2}
			}
	\leq
		2^{
			(B-t) \cdot |W_1| + t(n-t-1)
		}
$,
and setting $t = \ceil{\sqrt{n}}$ gives, with some algebraic
manipulations, $B \geq \Omega(\sqrt{n})$.
\qedhere

\end{proof}

Finally, for node insertions, we show that every $r$-round algorithm
must use at least $\Omega(n/r)$ bits of bandwidth, which is tight by
Theorem~\ref{th:k3_mlist_ub_r_any}.

\begin{theorem}\label{th:k3_mlist_lb_r_pv}
	For every $r$, the randomized $r$-round bandwidth complexity of
	\MLIST{K_3} under node insertions is $\Omega\pa{\frac{n}{r}}$.
\end{theorem}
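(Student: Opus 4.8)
The plan is to reduce, for the distinguished newly-inserted node, the task of membership-listing to that of reconstructing the entire adjacency structure among its neighbours, and then to count how many bits can flow into that node over $r$ rounds. Set $t = n-1$ and fix a set $U = \{u_1,\dots,u_t\}$ together with one extra node $x$; these use all $n$ available IDs. Let $\mathcal{C}$ be the family of all graphs on the vertex set $U$, so $|\mathcal{C}| = 2^{\binom{t}{2}}$. For each $C \in \mathcal{C}$ I would build a sequence $S_C$ using only node insertions: insert $u_1,\dots,u_t$ one at a time, connecting each $u_i$ upon its insertion exactly to those $u_j$ with $j<i$ and $u_iu_j\in E(C)$, which realises $C$ on $U$; then, as the final change, insert $x$ adjacent to every node of $U$. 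After this last insertion there are $r$ communication rounds, at the end of which $x$ must output all triangles containing it. Since $N(x)=U$, this list is precisely $\{\,x u_i u_j : u_i u_j \in E(C)\,\}$, which is in bijection with $E(C)$; hence $x$'s correct output uniquely determines $C$, and distinct graphs in $\mathcal{C}$ force distinct outputs of $x$.

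Next I would pass from randomized to deterministic algorithms. Given a randomized $r$-round algorithm with error probability $\varepsilon$ and bandwidth $B$, Yao's lemma applied to the uniform distribution over $\mathcal{C}$ yields a deterministic algorithm $A'$ whose output for $x$ is correct on a subfamily $\mathcal{C}'\subseteq\mathcal{C}$ with $|\mathcal{C}'|\geq(1-\varepsilon)\,2^{\binom{t}{2}}$. Note that, in contrast to the edge-insertion lower bound of Theorem~\ref{th:k3_mlist_lb_pe}, here there is no single ``critical'' message carrying the decisive information, so I do not expect to need the dense-bipartite machinery of Lemma~\ref{lm:densebip}; a direct counting should suffice, precisely because $x$ is adjacent to, and hears from, all of $U$.

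The counting step is as follows. For the deterministic $A'$, the whole execution on $S_C$ --- and in particular every message $x$ receives --- is a function of $C$ alone. Throughout the $r$ rounds following the insertion of $x$, the neighbourhood of $x$ is exactly $U$, so $x$ receives $t$ messages per round, i.e. at most $tr$ messages of $B$ bits each, giving at most $2^{Btr}$ possible received tuples. Because the output of a deterministic algorithm is determined by the received tuple, any two graphs in $\mathcal{C}'$ inducing the same tuple would force the same output and hence, by the bijection above, the same graph; thus $C\mapsto(\text{tuple received by }x)$ is injective on $\mathcal{C}'$. Therefore $2^{Btr}\geq|\mathcal{C}'|\geq(1-\varepsilon)2^{\binom{t}{2}}$, whence $Btr\geq\binom{t}{2}-\log_2\frac{1}{1-\varepsilon}$ and, for constant $\varepsilon<1$ and $t=n-1$, $B\geq\frac{t-1}{2r}-o(1)=\Omega(n/r)$. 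By Theorem~\ref{th:k3_mlist_ub_r_any} this is tight.

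The main obstacle --- and the step I would check most carefully --- is the clean claim that every message entering $x$ is a function of $C$ only, so that the $tr$-message channel really is the sole conduit for the $\binom{t}{2}$ bits of $E(C)$. This requires that $x$ obtains no information about $C$ before it is inserted (immediate, since it is then not in the graph) and that the nodes of $U$ cannot convey extra information to $x$ by any route other than their direct messages; both hold in the model, but it is worth stating explicitly, together with the accounting of exactly how many rounds $x$ is present, to be sure the exponent $Btr$ is correct up to the constant that is absorbed into the $\Omega$.
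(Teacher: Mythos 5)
Your proposal is correct and follows essentially the same route as the paper's proof: the same sequence construction (build an arbitrary graph $C$ on the other $n-1$ nodes via node insertions, then insert the distinguished node connected to everything), the same application of Yao's lemma to pass to a deterministic algorithm correct on a $(1-\varepsilon)$-fraction of $\mathcal{C}$, and the same counting of the at most $2^{rB(n-1)}$ possible message tuples the new node can receive over $r$ rounds. The injectivity concern you flag at the end is exactly the point the paper relies on (and handles implicitly), so no gap remains.
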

\begin{proof}

Let $A$ be a (randomized) $r$-round algorithm that solves \MLIST{K_3}
under node insertions using bandwidth $B$ with error probability
$\varepsilon$. We show that $r \cdot B = \Omega(n)$.

Let $u$ be any node, and let $\mathcal{C}$ be the set of all possible
graphs on the other $n-1$ nodes. For every $C \in \mathcal{C}$ we define
the sequence $S_C$ as follows:
\begin{itemize}
\item
	Start with an empty graph (no nodes and no edges).
\item
	During $n-1$ rounds, insert one node on each round and connect
	it to the nodes which have been already inserted, according to
	the edges in $C$. After $n-1$ rounds, the graph is identical to
	the graph $C$.
\item
	On round $n$ insert $u$ and connect to all the other nodes.
\end{itemize}

After $r-1$ quiet rounds $u$ needs to output the list of all triangles
that contain $u$. Since $u$ is connected to all the other nodes, this
implies that $u$ needs to know the graph $C$. For every $C \in
\mathcal{C}$, the output of $u$ is guaranteed to be correct for the
sequence $S_C$ with probability at least $(1-\varepsilon)$. Therefore,
by Yao's lemma, there exists a deterministic algorithm, $A'$, that
guarantees that the output of $u$ is correct for at least
$(1-\varepsilon)$ of all sequences. That is, there exists a subset
$\mathcal{C}_1 \subseteq \mathcal{C}$, whose size is at least
$(1-\varepsilon) \cdot |\mathcal{C}|$, such that $A'$ guarantees the
correct output of $u$ for sequences $S_C$ for all $C \in \mathcal{C}_1$.

Now, the number of possible graphs on $n-1$ nodes is
$2^{\binom{n-1}{2}}$, hence the size of $\mathcal{C}_1$ is at least
$(1-\varepsilon) \cdot 2^{\binom{n-1}{2}}$. Since $A'$ is deterministic,
and $u$ needs to distinguish correctly between all possible $C \in
\mathcal{C}_1$, this implies that the input $u$ receives from its
neighbors must have at least $(1-\varepsilon) \cdot 2^{\binom{n-1}{2}}$
possible values. Every neighbor of $u$ can send any of $2^B$ messages on
every round, thus the number of possible inputs $u$ can receive on a
single round is $2^{B \cdot (n-1)}$. Therefore, during $r$ rounds of
communication, the number of possible inputs to $u$ is
$2^{r \cdot B \cdot (n-1)}$.

Combining the above we get that
$
		2^{
			r \cdot B \cdot (n-1)
		}
	\geq
		(1-\varepsilon) \cdot 2^{\binom{n-1}{2}}
$,
which can be simplified to
$
		r \cdot B
	\geq
		\frac
			{n}
			{2}
		+
		\frac
			{\log (1-\varepsilon)}
			{n-1}
$,
and it follows that $r \cdot B = \Omega(n)$.
\qedhere

\end{proof}

\subsection{Membership detection}

Table~\ref{tbl:k3_mdtct} summarizes the results of this subsection.

\begin{table}[tbh]
\centering
\begin{tabular}{|l|c|c|c|c|}
\hline

	& Node deletions
	& Edge deletions
	& Edge insertions
	& Node insertions
	\\ \hline
$r = 1$
	& $0$
	& $\Theta(1)$
	& $O(\log n)$
	& $\Theta(n)$
	\\ \hline
$r \geq 2$
	& $0$
	& $\Theta(1)$
	& $\Theta(1)$
	& $\Theta(1)$
	\\ \hline

\end{tabular}
\caption{
	Bandwidth complexities of \MDTCT{K_3}
}
\label{tbl:k3_mdtct}
\end{table}
\subsubsection{Upper bounds}

The upper bounds for node deletions and edge deletions follow from
Observation~\ref{obs:probrel}. The following shows that edge insertions
can be handled with $O(\log n)$ bits of bandwidth.

\begin{theorem}\label{th:k3_mdtct_ub_pe}
	The deterministic $1$-round bandwidth complexity of \MDTCT{K_3}
	under edge insertions is $O(\log n)$.
\end{theorem}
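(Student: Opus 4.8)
The plan is to exploit the fact that under edge insertions the membership-detection output is \emph{monotone}: once a node lies in a triangle it stays in one forever, so each node only needs to detect the single round in which it first becomes a triangle vertex. When an edge $uv$ is inserted, the nodes that newly join a triangle are exactly $u$, $v$, and every common neighbor $w \in N(u) \cap N(v)$. I would have both endpoints broadcast the identity of their new neighbor to all of their neighbors, denoted $\texttt{NEWID}$ (costing $O(\log n)$ bits), exactly as in the listing toy example from the introduction. A common neighbor $w$ then receives $\texttt{NEWID} = v$ from $u$ (since $w \in N(u)$) and, checking that $v \in N(w)$, immediately certifies the triangle $uvw$; this settles all the ``apex'' nodes within the same round.

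The difficulty is the two endpoints $u$ and $v$: to decide whether they have a common neighbor, each would seemingly need the other's entire neighborhood, which is a set-disjointness-type task and cannot be done with $O(\log n)$ bits in a single round. The idea to avoid this is to have every node maintain, as local state at no extra bandwidth, a set $T_z$ of its known two-hop neighbors, where $z$ inserts $y$ into $T_z$ whenever it receives $\texttt{NEWID} = y$ from a current neighbor (and initializes $T_z$ from the known initial graph). The crucial invariant I would prove is that for the newly inserted edge $uv$ and \emph{any} common neighbor $w$, at least one of the two endpoints already has the other in its two-hop set at the start of the round. The proof is a short case analysis on which of the three triangle edges is inserted last: if, say, $vw$ is inserted after $uw$, then at that moment $u$ is already a neighbor of $w$, so $w$'s broadcast of $\texttt{NEWID} = v$ reaches $u$ and places $v$ into $T_u$; the symmetric case places $u$ into $T_v$, and if both $uw$ and $vw$ are initial edges then both endpoints already know $w$ from the initial graph.

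Given this invariant, the endpoint that already knows --- say $u$, with $v \in T_u$ --- detects the triangle directly, and I would have it send a single $\texttt{HELP}$ bit to its new neighbor $v$ along the just-inserted edge, informing $v$ that they share a common neighbor. Thus each endpoint outputs ``in a triangle'' if it was already in one, or if the other endpoint lies in its two-hop set, or if it received $\texttt{HELP} = 1$; correctness of both endpoints then follows from the invariant, and there are no false positives because every entry of $T_z$ and every $\texttt{HELP}$ bit corresponds to a genuine, and under insertions permanent, common neighbor. The total bandwidth is $O(\log n)$ for $\texttt{NEWID}$ plus $O(1)$ for $\texttt{HELP}$.

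The main obstacle, and the crux of the argument, is precisely the endpoint case: the naive attempt to let $u$ and $v$ test for a common neighbor within a single round runs into a set-disjointness barrier. Overcoming it rests on the observation that the decisive information was already disseminated earlier --- at the round when the triangle's second edge was inserted, the common neighbor's $\texttt{NEWID}$ broadcast informed one eventual endpoint that the other is a two-hop neighbor --- so that no fresh neighborhood exchange is needed and a lone help bit closes the gap. I expect verifying the ``at least one endpoint knows'' invariant across the last-edge case analysis (including ties and initial-graph edges) to be the one step demanding care.
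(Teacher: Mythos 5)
Your proposal is correct and follows essentially the same route as the paper: broadcast the new neighbor's ID (\texttt{NEWID}), observe that for any path $u\!-\!w\!-\!v$ the endpoint that was connected to $w$ first already learned of the other via $w$'s earlier \texttt{NEWID} broadcast (your $T_z$ invariant is exactly the paper's ``at least one of $u$ and $w$ knows both edges exist''), and have that informed endpoint send a single assist bit (your \texttt{HELP}, the paper's \texttt{ACCEPT}) so the other endpoint can also answer. The only cosmetic difference is that you let the apex detect directly from \texttt{NEWID} while the paper routes that through the \texttt{ACCEPT} bit as well; both are valid.
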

\begin{proof}

The algorithm works as follows. On every round, every node sends to all
its neighbors an indication whether or not it got a new neighbor on the
current round, along with the ID of the new neighbor (if any). We denote
this information by \texttt{NEWID}. Note that with just this
information, for every pair of adjacent edges $u-v-w$, at least one of
$u$ and $w$ know that these two edges exist (specifically, the first one
connected to $v$ knows that the other one is also connected to $v$).

Additionally, every node $u$ sends to every neighbor $v$ one bit,
indicating whether $u$ knows that $v$ is part of some triangle. We
denote this bit by \texttt{ACCEPT}.

Suppose the edge $uv$ is inserted and creates at least one triangle
$uvw$. As explained above, in this case at least one of $u$ and $v$
knows that the triangle exists, therefore it sends $\texttt{ACCEPT} = 1$
to the two other nodes. Thus all three nodes know that they are part of
a triangle. It follows that on each round all nodes can determine
whether they are in a triangle.
\qedhere

\end{proof}

We can extend the algorithm of Theorem~\ref{th:k3_mdtct_ub_pe} to handle
node/edge deletions as well. Now, in addition to \texttt{NEWID}, every
node sends to all its neighbors an indication whether or not it has lost
a neighbor on the current round, along with the ID of the lost neighbor
(if any). Also, every node $u$ sends every neighbor $v$ one additional
bit, indicating whether $u$ received $\texttt{NEWID} = v$ from any other
node on the last round. We denote this bit by \texttt{LAST}.

Now, suppose that on round $t$ the edge $uv$ is inserted and creates a
triangle $uvw$. As in Theorem~\ref{th:k3_mdtct_ub_pe}, after one round
of communication all three nodes can determine that they are part of a
triangle. Moreover, both $u$ and $v$ send the appropriate value of
\texttt{NEWID} to all their neighbors. Then, on round $t+1$, every
common neighbor of $u$ and $v$ sends to both of them
$\texttt{LAST} = 1$. From this information, $u$ and $v$ can deduce the
list of all of their common neighbors.

It follows that on every round, every node knows the exact set of
triangles that contained it on the previous round. This allows every
node to handle node/edge deletions appropriately, and to determine when
all triangles that contained it are gone.

\begin{corollary}\label{cor:k3_mdtct_ub_pe_mvme}
	The deterministic $1$-round bandwidth complexity of \MDTCT{K_3}
	under node deletions and edge deletions/insertions is
	$O(\log n)$.
\end{corollary}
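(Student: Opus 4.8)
The goal is to combine the edge-insertion algorithm of Theorem~\ref{th:k3_mdtct_ub_pe} with the ability to handle node and edge deletions, all within $O(\log n)$ bits. The core difficulty with deletions is symmetric to what arose in the membership-listing case: when a node $u$ sees that a common neighbor of two of its neighbors $v,w$ has vanished, it must decide whether the triangle $uvw$ was destroyed (because the edge $vw$ or a node on it disappeared) or survived. Detection is easier than listing, since $u$ only needs to know \emph{whether it still belongs to some triangle}, not to track each triangle individually. Nevertheless, to correctly drop its membership status when its last triangle disappears, $u$ effectively needs to know the set of triangles containing it on the previous round.

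The plan is therefore to augment the \texttt{NEWID}/\texttt{ACCEPT} protocol so that each node maintains, from round to round, the exact set of its common neighbors with each neighbor, and hence the exact set of triangles through it. Concretely, I would have every node also broadcast a deletion indicator together with the ID of any lost neighbor, mirroring the \texttt{NEWID} mechanism. The key new ingredient is the \texttt{LAST} bit: when an edge $uv$ is inserted at round $t$ and forms a triangle $uvw$, the asymmetry of \texttt{NEWID} means only one of $u,v$ initially learns of $w$. On round $t+1$, each common neighbor $w$ of $u$ and $v$ replies with $\texttt{LAST}=1$ to both endpoints, signaling ``I received the ID of your new partner last round.'' Since $u$ knows which of its edges were just inserted, it can pair the $\texttt{LAST}=1$ senders with the correct new neighbor and thereby recover the full common-neighborhood of the freshly-inserted edge.

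The main verification step is to argue by induction over rounds that this information suffices to keep each node's triangle set exact. First I would establish the invariant: \emph{at the end of every round, every node $u$ knows the precise set of triangles that contained it at the start of that round.} The inductive step splits on the type of change. For an edge insertion I appeal to the combination of \texttt{NEWID} (which lets at least one endpoint detect the new triangle immediately, matching Theorem~\ref{th:k3_mdtct_ub_pe}) and the subsequent \texttt{LAST} exchange (which lets both endpoints reconstruct their shared neighbors one round later, updating their triangle sets). For a deletion, the lost-neighbor ID tells each affected node exactly which edges vanished; since $u$ already knew its triangle set, it can remove precisely those triangles whose edges or vertices were destroyed, avoiding the confusion that plagued the naive single-bit scheme. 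Given an exact triangle set at every round, membership detection is immediate: $u$ outputs \emph{accept} iff this set is nonempty.

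I expect the main obstacle to be the bookkeeping around the one-round lag in learning common neighbors under simultaneous insertions and deletions. Because only a single topology change occurs per round, the set of ``just-inserted'' edges is a singleton, which should make the pairing of \texttt{LAST}=1 replies with new neighbors unambiguous; this single-change assumption is what makes the $O(\log n)$ bound achievable and is the crux to check carefully. The bandwidth accounting is routine: \texttt{NEWID} and the lost-neighbor ID each cost $O(\log n)$ bits, while \texttt{ACCEPT} and \texttt{LAST} are single bits, so the total is $O(\log n)$.
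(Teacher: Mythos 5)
Your proposal matches the paper's proof essentially step for step: it augments the \texttt{NEWID}/\texttt{ACCEPT} protocol with a lost-neighbor ID and the same \texttt{LAST} bit (a common neighbor $w$ that received $\texttt{NEWID}=v$ reports this to $v$ on the next round), establishes the invariant that every node knows its exact triangle set from the previous round, and uses that to process deletions within $O(\log n)$ bits. No meaningful differences from the paper's argument.
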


If we only consider sequences of graphs with bounded degree $\Delta$,
and only allow edge insertions, the complexity of
Theorem~\ref{th:k3_mdtct_ub_pe} can be improved to
$O(\sqrt{\Delta \log n})$, using an algorithm similar to that of
Theorem~\ref{th:k3_mlist_ub_pe}, but with $O(\sqrt{\Delta \log n})$ bits
instead of $O(\sqrt{n})$, as follows. On every round, every node sends
to all its neighbors an indication whether or not it has a new neighbor.
We denote this information by \texttt{NEW}. Additionally, whenever a new
edge $uv$ is inserted, the following happens:
\begin{itemize}
	\item
		$u$ sends to $v$ an indication whether or not it knows
		about a triangle that contains $v$.
	\item
		For every round $i$ within the last
		$\ceil{\sqrt{\Delta \log n}}$ rounds, $u$ sends to $v$
		an indication whether or not it has had a new neighbor
		on round $i$.
	\item
		For every round $i$ within the last
		$\ceil{\sqrt{\Delta \log n}}$ rounds, $u$ sends to $v$
		an indication whether or not any of its neighbors has
		sent $\texttt{NEW} = 1$ on round $i$.
	\item
		$u$ computes the list of IDs of all its current
		neighbors, and starts sending it to $v$,
		$\ceil{\sqrt{\Delta \log n}}$ bits on every round. Since
		$u$ has at most $\Delta$ neighbors, after
		$O(\sqrt{\Delta \log n})$ rounds $v$ has the complete
		list. Note that by the time this process completes, the
		list may not be up-to-date.
\end{itemize}

All of this requires a $O(\sqrt{\Delta \log n})$ bandwidth, and it can
be shown, similarly to Theorem~\ref{th:k3_mlist_ub_pe}, that this allows
every node to give the correct output in all cases.

For a quiet round we have an upper bound of $O(1)$ bits from
Observation~\ref{obs:probrel}, Theorem~\ref{th:k3_mlist_ub_r2_pe}, and
Corollary~\ref{cor:k3_mlist_ub_r2_pe_mvme}.

Finally, for node insertions, we have a trivial upper bound of $O(n)$
bits. For a quiet round, the next theorem shows an upper bound of $O(1)$
bits.

\begin{theorem}\label{th:k3_mdtct_ub_r2_pv}
	The deterministic $2$-round bandwidth complexity of \MDTCT{K_3}
	under node insertions is $O(1)$.
\end{theorem}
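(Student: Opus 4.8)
The plan is to exploit two facts that are special to node insertions: (i) triangle membership is \emph{monotone} --- once a node lies in a triangle it always does, since no node or edge is ever removed --- so each node only needs to maintain a persistent Boolean flag that it may set to $1$ and never reset; and (ii) in any round in which a change occurs, that change is the insertion of a single node $v$ together with all its incident edges, so an existing node gets a new neighbor in this round \emph{if and only if} it becomes adjacent to $v$.

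First I would have every node broadcast to all its neighbors, in each round, a single bit \texttt{NEW} indicating whether its neighbor list grew on the current round; by comparing its neighbor lists in consecutive rounds, a node can both compute this bit and identify which neighbor is the new one. Consider the round $T$ in which a node $v$ is inserted. By fact (ii), an existing node $b$ sends $\texttt{NEW}=1$ in round $T$ exactly when $b \in N(v)$. Now take any existing neighbor $a$ of $v$. Its unique new neighbor in round $T$ is $v$ itself, so all its remaining neighbors are old, and among these the ones sending $\texttt{NEW}=1$ are precisely the common neighbors $b \in N(a) \cap N(v)$. Each such $b$ yields a triangle $vab$ (the edge $ab$ is pre-existing since $b \neq v$). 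Hence $a$ can already set its flag at the end of round $T$ if it received $\texttt{NEW}=1$ from any old neighbor; this is sound (such a $b$ really forms a triangle with $a$ and $v$) and complete for every triangle through $v$ that contains $a$.

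The one node that cannot finish in round $T$ is $v$ itself: to decide whether it lies in a triangle it must learn whether two of its neighbors are adjacent, which $O(1)$ bits in round $T$ cannot convey. This is exactly where the second, quiet round $T+1$ is used. In round $T+1$ each neighbor $a$ of $v$ sends to $v$ (which it recognizes as its newest neighbor) one bit reporting whether it detected a common neighbor with $v$ in round $T$, and $v$ sets its flag iff it receives a $1$. Soundness holds because any report of $1$ certifies a concrete triangle $vab$, and completeness holds because any triangle $vab$ was already detected by $a$ in round $T$ and is therefore reported. All other nodes --- those not adjacent to $v$, and those already flagged --- need no update, since every newly created triangle must contain $v$ and membership never regresses. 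The total communication is a constant number of bits per edge per round, so the bandwidth is $O(1)$.

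I expect the only genuine obstacle to be the self-detection of the freshly inserted node $v$, and the whole design is organized around having the quiet round let $v$ outsource this to its neighbors, each of whom has already answered its own membership question in round $T$ from the single \texttt{NEW} bit. What remains is routine bookkeeping: confirming that a node distinguishes its new neighbor from its old ones by comparing neighbor lists across rounds, and checking the degenerate cases (when $v$ is isolated, or when no two neighbors of $v$ are adjacent) so that no false positive is ever produced.
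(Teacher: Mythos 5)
Your proposal is correct and follows essentially the same route as the paper's proof: a single per-round \texttt{NEW} bit lets every pre-existing neighbor of the inserted node $v$ recognize its triangles through $v$ already in the insertion round, and the second round is spent on a one-bit report (the paper's \texttt{LAST} bit) so that $v$ itself learns whether two of its neighbors are adjacent. The only cosmetic difference is that the paper broadcasts the second-round bit to all neighbors rather than addressing it to $v$ specifically; the soundness/completeness argument is the same.
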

\begin{proof}

On every round, every node $u$ sends to every neighbor $v$ two bits:
\begin{itemize}
\item
	One bit, denoted by \texttt{NEW}, indicates whether or not $u$
	has a new neighbor which has been inserted on the current round.
\item
	One bit, denoted by \texttt{LAST}, indicates whether or not any
	neighbor of $u$ other than $v$ has sent $u$ $\texttt{NEW} = 1$
	on the last round.
\end{itemize}

Consider a triangle $uvw$ and assume that $w$ is the last node inserted.
When $w$ is inserted, $u$ and $v$ send $\texttt{NEW} = 1$ to each other.
Since only one node can be inserted on a single round, both $u$ and $v$
know that $w$ is a common neighbor, hence the triangle $uvw$ exists.

On the next round, both $u$ and $v$ send $\texttt{LAST} = 1$ to $w$.
This implies that some neighbor of $u$ (resp. $v$) has sent
$\texttt{NEW} = 1$ on the last round, i.e., some neighbor of $u$ (resp.
$v$) has had a new neighbor inserted on the last round. Since only one
node can be inserted on a single round, $w$ can determine that the new
neighbor must be $w$ itself, that is, some neighbor of $u$ (resp. $v$)
is now also a neighbor of $w$. Hence $w$ can determine that it is part
of at least one triangle, and output $1$ as required.
\qedhere

\end{proof}

The above algorithm can be executed in parallel with the algorithm of
Corollary~\ref{cor:k3_mlist_ub_r2_pe_mvme} to handle all four types of
changes. However, we now have a problem combining node insertions and
edge insertions, just as we had a problem in
Corollary~\ref{cor:k3_mlist_ub_r2_pe_mvme} when we combined edge
deletions and node deletions. The problem is that when a node $v$ has a
new neighbor $w$, it does not know whether the edge $vw$ was inserted on
the current round, or $w$ is a new node which was inserted on the
current round and may be connected to other nodes as well. For example,
assume $u$ has two neighbors, $v$ and $w$, and the edge $vw$ does not
exist. Then, either the edge $vw$ is inserted, or a new node $x$ is
inserted and connected to both $v$ and $w$. In both cases, $v$ and $w$
send $\texttt{NEW} = 1$ to $u$, and $u$ cannot distinguish between the
former case (in which it should output 1), and the latter case (in which
it should output 0).

To solve this problem, we can have every new node indicate to all its
neighbors that it is new. Thus, whenever a node $u$ has a new neighbor
$x$, after one round of communication $u$ can determine whether $x$ is a
new node or not and send this information to all its neighbors. This
allows all nodes to distinguish between the two problematic cases we
described, and give the correct output in all cases.

\begin{corollary}\label{cor:k3_mdtct_ub_r2_pv_mvmepe}
	The deterministic $2$-round bandwidth complexity of \MDTCT{K_3}
	under node/edge deletions/insertions is $O(1)$.
\end{corollary}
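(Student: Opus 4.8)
The plan is to combine the two $2$-round algorithms we have already built. Node insertions are handled by the detector of Theorem~\ref{th:k3_mdtct_ub_r2_pv}, while node deletions and edge deletions/insertions are handled by the membership-listing algorithm of Corollary~\ref{cor:k3_mlist_ub_r2_pe_mvme}, which in particular solves membership-detection by Observation~\ref{obs:probrel}. Running both subroutines in parallel covers all four change types; since each uses only a constant number of bits per message and each produces correct output within two rounds of the last change, the combined algorithm still uses $O(1)$ bandwidth and $2$ rounds. Hence the only thing that actually needs to be argued is correctness of the \emph{combination}, since the two subroutines share the same primitive signal \texttt{NEW} (a node announcing that it has just acquired a neighbor) yet rely on opposite interpretations of it.

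I expect the sole obstacle to be the node-insertion/edge-insertion ambiguity already described before the statement: when a node $u$ with two neighbors $v$ and $w$ (and no edge $vw$) receives $\texttt{NEW}=1$ from both on the same round, $u$ cannot locally tell whether the edge $vw$ was inserted, in which case the triangle $uvw$ appears and $u$ must eventually output $1$, or whether a fresh common node $x$ adjacent to $v$ and $w$ was inserted, in which case no triangle $uvw$ is created. I would resolve this with a single extra bit: every newly inserted node broadcasts to all of its neighbors an indication that it is new. After one communication round, any node $u$ that received $\texttt{NEW}=1$ from a neighbor $x$ can read off this bit and decide whether the new adjacency $ux$ arose because $x$ itself just joined (a node insertion) or because a genuine new edge was added between the two pre-existing nodes $u$ and $x$ (an edge insertion). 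Node $u$ then forwards this one-bit classification to all of its neighbors on the following round, exactly as the node-insertion disambiguation does.

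Correctness then follows by case analysis on the problematic pattern. In the edge-insertion case the classification tells $v$ and $w$ that a new edge (not a new node) appeared between two of their existing neighbors, so they hand the event to the edge-insertion machinery of Corollary~\ref{cor:k3_mlist_ub_r2_pe_mvme}, which within two rounds causes all of $u,v,w$ to correctly detect the triangle; in the node-insertion case the classification marks the event as the arrival of a new node, so it is handled by the subroutine of Theorem~\ref{th:k3_mdtct_ub_r2_pv}, and no spurious triangle $uvw$ is reported. Because the two subroutines only ever diverge on the interpretation of a \texttt{NEW} event, and that divergence is now settled by the new-node bit within one extra round, they do not interfere, and every node gives the correct output after at most two rounds following the last change. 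The extra information is a constant number of bits per message, so the total bandwidth remains $O(1)$, which is clearly optimal.
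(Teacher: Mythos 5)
Your proposal is correct and follows essentially the same route as the paper: run the node-insertion detector of Theorem~\ref{th:k3_mdtct_ub_r2_pv} in parallel with the algorithm of Corollary~\ref{cor:k3_mlist_ub_r2_pe_mvme}, identify the edge-insertion versus node-insertion ambiguity on a shared \texttt{NEW} signal as the only obstacle, and resolve it by having each newly inserted node announce that it is new, with its neighbors forwarding this one-bit classification on the following round. Nothing further is needed.
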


\subsubsection{Lower bounds}

For node deletions and edge deletions, we have trivial constant lower
bounds, which are tight, as shown above. For edge insertions, we have
shown a general algorithm that uses $O(\log n)$ bits, and also an
algorithm that uses $O(\sqrt{\Delta \log n})$ bits for graphs with
bounded degree $\Delta$. The latter algorithm implies that showing a
lower bound of $B$ bits for the bandwidth complexity of this problem
would require looking at sequences of graphs with degree at least
$\Omega(\frac{B^2}{\log n})$. In particular, in order to show that the
algorithm of Theorem~\ref{th:k3_mdtct_ub_pe} is optimal, one has to
consider sequences of graphs with degree at least $\log n$. Other than
that, it remains an open question whether or not the algorithm of
Theorem~\ref{th:k3_mdtct_ub_pe} can be improved.

\begin{openq}
	What is the $1$-round bandwidth complexity of \MDTCT{K_3} under
	edge insertions?
\end{openq}

Finally, for node insertions, the following theorem shows a lower bound
of $\Omega(n)$ bits.

\begin{theorem}\label{th:k3_mdtct_lb_pv}
	The randomized $1$-round bandwidth complexity of \MDTCT{K_3}
	under node insertions is $\Omega(n)$.
\end{theorem}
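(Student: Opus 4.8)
The plan is to prove the bound entirely for the \emph{newly inserted} node, exploiting the blindness that the model builds in: when a node $u$ is inserted, each of its neighbors learns only that $u$ has appeared as \emph{its} own neighbor, and within the single communication round cannot know the rest of $u$'s neighborhood. This is the very phenomenon that makes an edge insertion $uv$ and a node insertion of $v$ indistinguishable to $u$, as noted in Section~\ref{section:prelim}. Consequently, every neighbor of $u$ must compute its single $B$-bit message \emph{obliviously}, without knowing to which other nodes $u$ is attached, and $u$ must decide membership from these messages alone. I would therefore make the triangle membership of $u$ hinge on a single adjacency that is ``shared'' by two of $u$'s neighbors but that neither of them is able to locate.

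Concretely, I would let the other $n-1$ nodes form two sides $P=\{p_1,\dots,p_k\}$ and $Q=\{q_1,\dots,q_k\}$ and encode an input as a bipartite graph $Z$ between them ($p_i\sim q_j$ iff $Z_{ij}=1$). Then I insert $u$ attached to exactly one $p_s$ and one $q_t$, so that $u$ lies in a triangle if and only if $p_s\sim q_t$, i.e.\ iff the single bit $Z_{st}$ is set. Here $p_s$ knows its whole row $Z_{s,\cdot}$ but is oblivious to $t$, while $q_t$ knows its whole column $Z_{\cdot,t}$ but is oblivious to $s$. As in Theorem~\ref{th:k3_mlist_lb_pe} and Theorem~\ref{th:k3_mlist_lb_r_pv}, I would first pass via Yao's lemma to a deterministic algorithm correct on a $(1-\varepsilon)$-fraction of inputs, and then count the distinct views available to $u$. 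Since $u$ receives only the two $B$-bit messages $f_s(Z_{s,\cdot})$ and $g_t(Z_{\cdot,t})$ together with the identities $s,t$, the target is to exhibit, whenever $B=o(n)$, a \emph{fooling pair}: two inputs producing identical messages on both sides but opposite values of $Z_{st}$, which is impossible for a correct algorithm. Since $k=\Theta(n)$, pinning the bit down forces $B=\Omega(n)$.

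The main obstacle---and the reason a naive one-sided pigeonhole fails---is that the critical bit $Z_{st}$ is known to \emph{both} $p_s$ and $q_t$, so flipping it perturbs both the relevant row and the relevant column simultaneously, and I cannot freeze one side's message while flipping the bit. Two structural features are essential to rule out cheap algorithms. First, the construction must keep $Z$ \emph{dense} (both sides highly connected): over a sparse base graph a neighbor could simply name its few neighbors in $O(\log n)$ bits and defeat any lower bound, so a density argument in the spirit of Lemma~\ref{lm:densebip} is exactly what is needed. Second, \emph{both} endpoints of the critical edge must be variable and symmetric---if either $s$ or $t$ were fixed, the opposite endpoint, which also sees the shared bit, could certify it cheaply. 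I therefore expect the heart of the proof to be a rectangle/fooling-set argument over the family of dense bipartite inputs, reducing to the one-way \textsf{Index} bound on the row side while arguing that the oblivious column message $g_t(Z_{\cdot,t})$ cannot supply an arbitrary entry of a dense column in $o(k)$ bits (and symmetrically with the roles swapped). Routine algebra, as in the wrap-ups of the earlier lower bounds, then converts this into the stated $\Omega(n)$ bandwidth bound, and the initial Yao step makes it hold for randomized algorithms as well.
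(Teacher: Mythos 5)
Your hard instance and your key observation are both essentially the paper's: there too, $x$ is inserted at the very end with exactly two neighbors $u,v$ drawn from a pre-built graph $C$, and the proof rests on the fact that each neighbor's single $B$-bit message is oblivious to $x$'s other neighbor, hence is a function of $C$ and that neighbor's identity alone; the Yao step is also shared. The gap is in how you finish. You aim to exhibit, for a \emph{single} pair $(s,t)$, a fooling pair of inputs with identical messages on both sides but opposite values of $Z_{st}$, and you correctly observe the obstacle --- flipping $Z_{st}$ perturbs both the row seen by $p_s$ and the column seen by $q_t$, so no one-sided pigeonhole (and no direct reduction to one-way \textsf{Index}, where the receiver knows nothing of the string) produces such a pair. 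But you then leave the resolution as an expectation (``I expect the heart of the proof to be a rectangle/fooling-set argument\ldots''), and the density machinery of Lemma~\ref{lm:densebip} that you invoke plays no role in this theorem; it belongs to the edge-insertion listing bound. As written, the central step is not carried out.

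The paper sidesteps the single-bit difficulty entirely with a global counting argument. Run the sequence up to (but not including) the insertion of $x$, and let \emph{every} node $w\in U$ generate the message $m_w$ it would send to $x$; the whole tuple $M=(m_w)_{w\in U}$ depends only on $C$. Applying $x$'s decoder to $(m_u,m_v)$ for every pair $(u,v)$ yields a reconstructed graph $C'$, and per-pair correctness gives $\Pr[C'=C]\geq(1-\varepsilon)^{\binom{n-1}{2}}$. Summing over all $2^{\binom{n-1}{2}}$ graphs $C$, the total success mass is at least $(2-2\varepsilon)^{\binom{n-1}{2}}$, yet it is at most the number $2^{B(n-1)}$ of possible tuples $M$, since for each fixed $M$ the probabilities of the outcomes $C'$ sum to $1$. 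Hence $B(n-1)\geq\binom{n-1}{2}\log(2-2\varepsilon)$, i.e.\ $B=\Omega(n)$. In short: you do not need to fool the algorithm on one designated bit; you need only observe that $n-1$ oblivious messages of $B$ bits each cannot simultaneously encode all $\binom{n-1}{2}$ adjacency bits. That counting step is the ingredient missing from your plan, and it subsumes it.
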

\begin{proof}

Let $A$ be a (randomized) $1$-round algorithm that solves \MDTCT{K_3}
under node insertions using bandwidth $B$ with error probability
$\varepsilon$. Fix $x \in V$, let $U = V \setminus \{x\}$, and let
$\mathcal{C}$ be the set of all possible graphs on the nodes of $U$. For
every $C \in \mathcal{C}$ and every $u,v \in U$, define the sequence
$S_{C,u,v}$ as follows:
\begin{itemize}
\item
	Start with an empty graph (no nodes and no edges).
\item
	During $n-1$ rounds, insert one node of $U$ on each round and
	connect it to the nodes which have been already inserted,
	according to the edges in $C$. After $n-1$ rounds, the graph is
	identical to the graph $C$.
\item
	On round $n$ insert $x$ and connect it to $u$ and $v$.
\end{itemize}

Then, $x$ should output 1 iff the edge $uv$ exists in $C$. By our
assumption, for every $C$ and every $u,v \in U$, the output is correct
with probability at least $1-\varepsilon$. Note that during the final
round, $x$ receives only one message from each of $u$ and $v$. Also,
during this final round, $u$ and $v$ do not know the identity of the
other neighbor of $x$. Hence, the message received from $u$ depends only
on the identity of $u$ and the graph $C$, and not on the identity of
$v$, and the message received from $v$ depends only on the identity of
$v$ and the graph $C$.

Now, consider the following experiment for a given graph $C \in
\mathcal{C}$. First, we run the sequence $S_{C,u,v}$ for some $u,v \in
U$, and stop just before the last round, in which $x$ is connected to
$u$ and $v$. Then, every node $w \in U$ generates a message to be sent
to $x$, as if it has been connected to $x$ on the final round. For every
$w \in U$, let $m_w$ be the message generated by $w$ (note that $m_w$ is
a random variable).

Note again, that the messages generated by the nodes of $U$ depend only
on the graph $C$, and not on the other nodes that may have been
connected to $x$ on the last round. Therefore, for every $u,v \in U$,
given the messages generated by $u$ and $v$, $x$ should be able to
determine, with probability at least $1-\varepsilon$, whether or not the
edge $uv$ exists in $C$.

Next, for every pair of nodes $u,v \in U$, let $I_{uv}$ be the output of
$x$ given the two messages $m_u$ and $m_v$. Note that for nodes $u,v,w$,
the variables $I_{uv}$ and $I_{uw}$ are not necessarily independent.
Now, let $C'$ be the graph on the nodes of $U$, in which the edge $uv$
exists if and only if $I_{uv}=1$. We consider $C'$ to be the result of
the experiment.

Let $p_C$ denote the probability that at the end of the experiment we
have $C'=C$. Since, for every $u,v \in U$, the value of $I_{uv}$
corresponds to the edge $uv$ in the graph $C$ with probability at least
$1-\varepsilon$, we have
$
	p_C \geq (1-\varepsilon)^{\binom{n-1}{2}}
$. Summing the above for every $C \in \mathcal{C}$ we get:
\begin{align}\label{eq:mdtct_pv_geq}
		\sum_C p_C
	\geq
		|\mathcal{C}| \cdot (1-\varepsilon)^{\binom{n-1}{2}}
	=
		(2-2\varepsilon)^{\binom{n-1}{2}}
\end{align}

On the other hand, consider the set of messages generated by the nodes
of $U$ at the end of the first stage of the experiment. For every
possible set of messages $M$ generated by the nodes of $U$ during the
first stage of the experiment, denote by $\phi_C(M)$ the probability for
generating exactly the messages of $M$. Also, for every $C' \in
\mathcal{C}$, let $\Psi_M(C')$ denote the probability for the result of
the experiment to be equal to $C'$, given the set of generated messages
$M$. We have:
\A{
		\sum_C p_C
	=
		\sum_C \sum_M \phi_C(M) \cdot \Psi_M(C)
	\leq
		\sum_C \sum_M \Psi_M(C)
	=
		\sum_M \sum_C \Psi_M(C)
	=
		\sum_M 1
	=
		|\mathcal{M}|
}

where $\mathcal{M}$ is the set of all possible values of $M$. Since
every message has exactly $2^B$ possible values, the number of possible
sets of $(n-1)$ messages is
$
	|\mathcal{M}| = 2^{B(n-1)}
$, and hence
$
	\sum_C p_C \leq 2^{B(n-1)}
$. Combining this with~\eqref{eq:mdtct_pv_geq} gives
$
		2^{B(n-1)}
	\geq
		(2-2\varepsilon)^{\binom{n-1}{2}}
$. After some simplifications we get the desired bound:
\[
		B
	\geq
		\log (2-2\varepsilon) \cdot \frac{n-2}{2}
	=
		\Omega(n)
\qedhere
\]
\end{proof}

\subsection{Listing and detection}

Table~\ref{tbl:k3_listdtct} summarizes the results of this subsection.

\begin{table}[tbh]
\centering
\begin{tabular}{|c|c|c|c|}
\hline

	  Node deletions
	& Edge deletions
	& Edge insertions
	& Node insertions
	\\ \hline

	  $0$
	& $\Theta(1)$
	& $\Theta(1)$
	& $\Theta(1)$
	\\ \hline

\end{tabular}
\caption{
	Bandwidth complexities of \LIST{K_3} and \DTCT{K_3}
}
\label{tbl:k3_listdtct}
\end{table}
\subsubsection{Upper bounds}

The upper bounds for node deletions and edge deletions follow from
Observation~\ref{obs:probrel} and the results of the last subsection.

For edge insertions, the algorithm of
Theorem~\ref{th:k3_mlist_ub_r2_pe}, which solves \MLIST{K_3} in $2$
rounds, guarantees that on every round, every triangle is known to at
least one of its nodes. Hence, it can be used to solve \LIST{K_3} in $1$
round.

\begin{corollary}\label{cor:k3_list_ub_pe}
	The deterministic $1$-round bandwidth complexity of \LIST{K_3}
	under edge insertions is $O(1)$.
\end{corollary}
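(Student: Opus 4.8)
The plan is to invoke the algorithm of Theorem~\ref{th:k3_mlist_ub_r2_pe}, but to observe that for the weaker task of listing the second communication round is unnecessary. Recall that in that algorithm every node sends the single bit \texttt{NEW} to each neighbor, indicating whether it acquired a new neighbor on the current round. The crucial feature is that whenever an edge $uv$ is the last edge of a triangle $uvw$ to be inserted, both $u$ and $v$ transmit $\texttt{NEW}=1$ to $w$ in that very round; since at most one edge is inserted per round, $w$ can immediately conclude that $uv$ was inserted and hence that the triangle $uvw$ exists. Thus the apex of every newly formed triangle detects that triangle already in the round of its creation, using only $O(1)$ bits.

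First I would specify how each triangle is assigned to a responsible node. Triangles present in the initial graph are known to all nodes, since the initial topology is common knowledge; these I would assign by a fixed deterministic rule, say to the endpoint of smallest ID. For a triangle whose last edge $uv$ is inserted during the execution, the apex $w$ becomes responsible: upon receiving two simultaneous $\texttt{NEW}=1$ indications, $w$ adds the triangle to its output and keeps it there for all subsequent rounds. Since only edge insertions occur, no triangle is ever destroyed, so this commitment remains consistent forever.

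I would then argue completeness. Every triangle in the current graph is either an initial triangle---listed by its designated endpoint---or was created at some point by the insertion of its last edge, at which moment its apex detected it and has listed it ever since. Hence at the end of every round each triangle in the graph is listed by at least one of its nodes, which is exactly the \TXTLIST{} requirement, and the bandwidth is $O(1)$ since only a single bit per neighbor is sent each round.

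The step to watch is the immediacy claim: I must confirm that detection by the apex occurs in the \emph{same} round the triangle appears, rather than one round later as membership-listing for the endpoints $u$ and $v$ requires. This is precisely what the \texttt{NEW} bits guarantee, because the apex---unlike the two endpoints of the inserted edge---receives the decisive pair of indications without needing the follow-up \texttt{LAST} round. I would also check the edge case where the inserted edge completes several triangles simultaneously: each such triangle has its own distinct apex, and each apex applies the same reasoning independently, so all of them are listed at once.
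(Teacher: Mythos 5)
Your proposal is correct and follows essentially the same route as the paper: the paper also reduces to the algorithm of Theorem~\ref{th:k3_mlist_ub_r2_pe} and observes that its \texttt{NEW} bits already let the apex $w$ of a newly completed triangle $uvw$ identify it in the very round the last edge $uv$ is inserted, so every triangle is known to at least one of its nodes after one round. Your additional bookkeeping (assigning initial triangles by a fixed rule and noting that insertions never destroy triangles) just makes explicit details the paper leaves implicit.
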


The same is true for the algorithm of
Corollary~\ref{cor:k3_mlist_ub_r2_pe_mvme}, where we also handle
node/edge deletions. Therefore:

\begin{corollary}\label{cor:k3_list_ub_pe_mvme}
	The deterministic $1$-round bandwidth complexity of \LIST{K_3}
	under node deletions and edge deletions/insertions is $O(1)$.
\end{corollary}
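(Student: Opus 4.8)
The plan is to reuse the $2$-round \MLIST{K_3} algorithm of Corollary~\ref{cor:k3_mlist_ub_r2_pe_mvme} verbatim, and to observe that, although that algorithm only guarantees that \emph{all three} vertices of a triangle know about it after two rounds, it already enjoys a weaker same-round invariant that is exactly what listing needs: at the end of every round, every triangle currently present in the graph is \emph{soundly} known to at least one of its three vertices, meaning that some vertex is certain the triangle exists, and no vertex is ever certain of a triangle that does not exist. The $1$-round \LIST{K_3} algorithm is then simply to run the algorithm of Corollary~\ref{cor:k3_mlist_ub_r2_pe_mvme} and have each node output exactly those triangles it is currently certain about; the bandwidth is unchanged, i.e. $O(1)$, so the whole content of the proof is establishing this invariant.

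First I would isolate the creation case, which is identical to the one already used for Corollary~\ref{cor:k3_list_ub_pe}: a triangle $uvw$ can only appear when its last edge, say $uv$, is inserted, and at that round the apex $w$ receives $\texttt{NEW}=1$ from both $u$ and $v$ and hence becomes certain of $uvw$ immediately. This gives completeness at the moment of creation, and by persistence of certainty it continues to hold on every subsequent quiet round, so the only remaining task is to check that certainty is correctly revised when a deletion destroys or threatens a triangle.

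The heart of the argument, and the step I expect to be the main obstacle, is the deletion case, precisely because the vertex that detected a triangle's creation need not be the vertex that can detect its destruction: if the apex $w$ is tracking $uvw$ and the \emph{opposite} edge $uv$ is deleted, $w$ sees $\texttt{DELETED}=1$ from both $u$ and $v$ but cannot, on the same round, distinguish this from the deletion of some fourth node $x$ adjacent to both $u$ and $v$ but not to $w$ --- the very ambiguity discussed before Corollary~\ref{cor:k3_mlist_ub_r2_pe_mvme}. I would resolve this by having a node \emph{withhold} a triangle from its output on any round on which it receives such an ambiguous signal about that triangle; this keeps the output sound, since in the genuine edge-deletion case no node falsely lists the now-destroyed $uvw$. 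It then remains to recover completeness in the node-deletion sub-case, where $uvw$ in fact survives: I would argue that by that round at least one of the endpoints $u,v$ is already certain of $uvw$ and is unaffected by the deletion of $x$, so it lists the triangle. When $uvw$ has existed for at least two rounds this is immediate from the membership-listing guarantee; when $uvw$ was created on the previous round, the endpoints acquire certainty on this very round through the $\texttt{LAST}$ mechanism of Theorem~\ref{th:k3_mlist_ub_r2_pe}, since the edges $uv$, $uw$ and $vw$ are all untouched by the deletion of $x$. The remaining destruction events --- deletion of an edge incident to the tracking vertex, or deletion of one of $u,v,w$ --- are directly observed by that vertex and require no disambiguation. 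Collecting the cases yields the invariant, and hence the claimed $O(1)$-bandwidth $1$-round listing algorithm.
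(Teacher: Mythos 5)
Your proposal is correct and follows essentially the same route as the paper, which simply runs the $2$-round membership-listing algorithm of Corollary~\ref{cor:k3_mlist_ub_r2_pe_mvme} and invokes the invariant that on every round each existing triangle is known to at least one of its nodes. The paper states this invariant without elaboration, whereas you supply the case analysis (apex certainty at creation, endpoint certainty one round later via \texttt{LAST}, and withholding under the ambiguous double-\texttt{DELETED} signal); this is a faithful, more detailed account of the same argument.
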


Considering node insertions, the problem can be solved with $O(1)$ bits
of bandwidth.

\begin{theorem}\label{th:k3_list_ub_pv}
	The deterministic $1$-round bandwidth complexity of \LIST{K_3}
	under node insertions is $O(1)$.
\end{theorem}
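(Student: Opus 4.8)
The plan is to exploit the fact that under node insertions the only new edges appearing in any round are those incident to the single node inserted that round, so every newly-created triangle must contain this new node. Concretely, I would have every node send, on each round, a single bit to all of its neighbors, indicating whether its neighbor set grew on the current round (i.e.\ whether it acquired a new neighbor). This costs $O(1)$ bits, matching the claimed bound, so the substance of the proof is the correctness argument.

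Next I would describe the listing rule. Each node maintains a persistent list, initialized from the known initial graph --- with no communication --- so that every triangle present before any insertion is listed by at least one of its vertices. When a node $v$ detects that it gained a new neighbor $x$ on the current round (which it does by comparing its neighbor list with that of the previous round), it inspects the bits received from its \emph{old} neighbors: for every old neighbor $w$ with $vw \in E$ that reports it too gained a neighbor this round, $v$ adds the triangle $xvw$ to its list. The key point enabling a single bit to suffice is that at most one node is inserted per round, so the only neighbor any node can have gained this round is $x$; hence a report from $w$ certifies exactly that $wx \in E$. Since $v$ already knows the edges $vx$ and $vw$, it verifies all three edges of $xvw$ and never lists a non-triangle. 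Note that the new node $x$ itself is not required to list anything, as the two older vertices handle it.

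Finally I would verify completeness and persistence. Any triangle $abc$ is completed precisely in the round in which the last of $a,b,c$ to be inserted appears, say $a$; at that round $b$ and $c$ are older neighbors of $a$ with $bc \in E$, both report that they gained a neighbor, and so each of $b$ and $c$ lists $abc$ by the rule above. Because node insertions never delete edges, the triangle persists and the listing node retains it in its (monotone) list, so at the end of every round the union of the lists is exactly the set of current triangles. I expect the only delicate step to be the bookkeeping that each triangle is caught at its creation round and then retained --- in particular, arguing that the one-inserted-node-per-round assumption makes the single-bit report unambiguous, and that initializing from the known initial graph covers triangles present before any insertion. Throughout, the bandwidth remains $O(1)$.
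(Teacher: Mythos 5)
Your proposal is correct and follows essentially the same route as the paper's proof: a single ``I gained a new neighbor'' bit per round, combined with the fact that only one node can be inserted per round, lets each of the two pre-existing vertices of a newly formed triangle certify the third edge and list the triangle. The extra bookkeeping you spell out (initializing from the known initial graph and monotonicity of the list under insertions) is implicit in the paper's argument and does not change the approach.
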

\begin{proof}

The algorithm is as follows: on every round, every node sends to each
one of its neighbors an indication whether or not it got a new neighbor
on the current round. Thus, whenever a node $w$ is inserted and creates
a triangle $uvw$, $u$ receives this indication from $v$ (and vice
versa). Since $u$ also got connected to $w$ on the current round, and
since only one node can be inserted on every round, $u$ can deduce that
the triangle $uvw$ exists.

Note that $v$ also receives the same indication from $u$, and can deduce
that the triangle $uvw$ exists. Thus, this algorithm guarantees that
every triangle is known to at least two of its nodes.
\qedhere

\end{proof}

In order to handle node deletions and edge deletions, we can combine the
above algorithm with the technique used in
Corollary~\ref{cor:k3_mlist_ub_r2_pe_mvme} to distinguish between the
deletion of an edge and the deletion of a node. This technique ensures
that when either a noe or an edge is delete on round $t$, all incident
nodes can determine, on round $t+1$, whether an edge or a node was
deleted. This, combined with the guarantee that every triangle is known
to at least two of its nodes, ensures that on every round at least one
node can determine whether the triangle still exists, even if a node or
an edge was deleted on the current round. Thus, listing all triangles is
possible on every round, as required.

\begin{corollary}\label{cor:k3_list_ub_pv_mvmepe}
	The deterministic $1$-round bandwidth complexity of \LIST{K_3}
	under node/edge deletions/insertions is $O(1)$.
\end{corollary}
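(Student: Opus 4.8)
The plan is to run, in parallel, the three constant-bandwidth algorithms already developed --- the node-insertion listing algorithm of Theorem~\ref{th:k3_list_ub_pv}, the edge-insertion listing algorithm of Corollary~\ref{cor:k3_list_ub_pe}, and the edge/node-deletion distinguishing machinery underlying Corollary~\ref{cor:k3_list_ub_pe_mvme} and Corollary~\ref{cor:k3_mlist_ub_r2_pe_mvme} --- augmenting each node's message with the handful of constant-size indicator bits those algorithms require (a ``new neighbor'' bit, a ``lost neighbor'' bit, and the one-round-delayed \texttt{LAST} bit). Since each component uses $O(1)$ bits, so does the combination, and the entire burden of the proof is correctness.

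First I would isolate the invariant that drives everything: at the end of each round, every existing triangle is known to, and hence reported by, at least two of its three nodes. For node-inserted triangles this holds immediately, since when $w$ is inserted adjacent to $\{u,v\}$ both $u$ and $v$ deduce $uvw$ from the mutual ``new neighbor'' indications (Theorem~\ref{th:k3_list_ub_pv}). For edge-inserted triangles the common neighbor $w$ learns immediately, while the two endpoints $u,v$ of the inserted edge learn one round later via the \texttt{LAST} mechanism of Theorem~\ref{th:k3_mlist_ub_r2_pe}; crucially, that \texttt{LAST} bit is derived from information already buffered on the previous round, so it fires even if the following round carries a topology change rather than being quiet.

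The heart of the argument is then a simple intersection argument over the three-element set $\{u,v,w\}$. Any deletion that destroys a triangle $uvw$ removes one of its edges, and that event is directly observed --- by comparison of neighbor lists --- by at least two of $u,v,w$ (the two endpoints of a deleted edge, or the two survivors of a deleted node). Since the triangle is simultaneously reported by at least two of its nodes, these two size-two subsets of a three-element set must intersect, so at least one reporting node directly witnesses the destruction and drops the triangle with certainty. For the symmetric hazard --- a deletion that does \emph{not} destroy $uvw$ but confuses one of its nodes, namely the case in which a node sees two of its neighbors each lose an edge and cannot tell whether the edge between them vanished or a common external node was removed --- I would have the confused node conservatively drop the triangle and rely on redundancy: in the destructive reading a directly-observing reporter correctly omits it, while in the non-destructive reading the \texttt{LAST} propagation leaves at least one other endpoint certain the triangle persists and reporting it.

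I expect the main obstacle to be exactly this last interaction: a triangle freshly created by an edge insertion, and therefore known to only the single common neighbor $w$, whose opposite edge is deleted on the very next round, before $u$ and $v$ have confirmed the triangle. Here $w$ is the lone reporter and is precisely the confused node, so correctness rests on showing that the \texttt{LAST} bit $w$ emits, carrying round-$t$ information into round $t+1$, reaches $u$ and $v$ during that same critical round, giving them enough to either correctly report the surviving triangle or, having directly seen their own edge deleted, correctly suppress the destroyed one. Verifying that this timing works out in every sub-case --- and that no node is ever pushed into a false positive by a stale \texttt{LAST} indication --- is the routine but delicate bookkeeping that completes the proof.
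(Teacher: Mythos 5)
There is a genuine gap here, and it is exactly the one the paper flags as its central open problem for this combination of changes. Your plan folds the edge-insertion listing algorithm of Corollary~\ref{cor:k3_list_ub_pe} into the mix and asserts that ``for edge-inserted triangles the common neighbor $w$ learns immediately.'' That assertion is true only when edge insertions are the sole kind of insertion allowed. Once node insertions are also permitted, a node $w$ with two neighbors $u,v$ (with $uv \notin E$) that receives a ``new neighbor'' indication from both of them cannot tell whether the edge $uv$ was just inserted (so the triangle $uvw$ now exists and someone must list it) or a fresh node $x$ adjacent to exactly $u$ and $v$ was inserted (so $uvw$ does not exist and listing it would be a false positive). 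Neither $u$ nor $v$ can cover for $w$, since neither knows whether $w$ is adjacent to the other. This ambiguity is stated explicitly in the paper's introduction (``Combining types of changes'') and again right after this corollary, where the $1$-round bandwidth complexity of \LIST{K_3} under node/edge insertions is posed as an open question; the only way the paper offers to handle all four change types at once costs $O(\log n)$ bits, via Corollary~\ref{cor:k3_mdtct_ub_pe_mvme}. Your constant-size indicator bits do not resolve the ambiguity, and your ``conservatively drop and rely on redundancy'' device is deployed only for deletions, where a redundant reporter exists; for this insertion ambiguity there is no second node to fall back on.

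For the part of the claim the paper actually proves, your argument matches the paper's. Despite the corollary's wording, the paper's proof combines only the node-insertion algorithm of Theorem~\ref{th:k3_list_ub_pv} with the node/edge-deletion distinguishing technique of Corollary~\ref{cor:k3_mlist_ub_r2_pe_mvme}; edge insertions are not part of it. Within that scope your reasoning is the paper's: every triangle is known to at least two of its nodes, any destructive deletion is directly witnessed by at least two of the triangle's nodes, these two pairs of a three-element set intersect, and a non-destructive but confusing deletion leaves at least one unconfused knower who keeps listing the triangle. So your deletion bookkeeping is sound; the proposal fails only where it reaches beyond what the paper establishes, namely in claiming that edge insertions and node insertions can be handled simultaneously with $O(1)$ bits.
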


By Observation~\ref{obs:probrel}, all upper bounds shown here for
\LIST{K_3} also hold for \DTCT{K_3}.

\subsubsection{Lower bounds}

It is not clear whether it is possible to handle both node insertions
and edge insertions with $O(1)$ bits of bandwidth. The algorithm of
Corollary~\ref{cor:k3_mdtct_ub_pe_mvme} can be used to handle all four
types of changes with $O(\log n)$ bits. Other than that, the exact
bandwidth complexity of this combination is an open question.

\begin{openq}
	What is the $1$-round bandwidth complexity of \LIST{K_3} under
	node/edge insertions?
\end{openq}

\section{Larger cliques}

Throughout this section, let $s \geq 4$ be some constant, and denote by
$K_s$ a clique on $s$ nodes.

\subsection{Membership listing}

Table~\ref{tbl:ks_mlist} summarizes the results of this subsection.

\begin{table}[tbh]
\centering
\begin{tabular}{|l|c|c|c|c|}
\hline

	& Node deletions
	& Edge deletions
	& Edge insertions
	& Node insertions
	\\ \hline
$r = 1$
	& $0$
	& $\Theta(1)$
	& $\Theta(\sqrt{n})$
	& $\Theta(n)$
	\\ \hline
$r \geq 2$
	& $0$
	& $\Theta(1)$
	& $\Theta(1)$
	& $\Theta(n/r)$
	\\ \hline

\end{tabular}
\caption{
	Bandwidth complexities of \MLIST{K_s}
}
\label{tbl:ks_mlist}
\end{table}

\subsubsection{Upper bounds}

For every node $u$, the set of all cliques that contain $u$ is
completely determined by the set of all triangles that contain $u$.
Therefore, every algorithm that solves \MLIST{K_3} can be used to solve
\MLIST{K_s} under the same set of changes. All upper bounds follow thus
from the upper bounds for \MLIST{K_3} given in
Section~\ref{subsubsec:tri-mlist-upper}.

\subsubsection{Lower bounds}

For node deletions and edge deletions we have the trivial lower bounds
of $0$ and $\Omega(1)$, respectively, both of which are tight.

The lower bound for edge insertions is shown next.

\begin{theorem}\label{th:ks_mlist_lb_pe}
	The randomized $1$-round bandwidth complexity of \MLIST{K_s}
	under edge insertions is $\Omega(\sqrt{n})$.
\end{theorem}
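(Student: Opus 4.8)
The plan is to reduce the $K_s$ lower bound to the already-established $K_3$ lower bound of Theorem~\ref{th:k3_mlist_lb_pe}, by embedding the triangle construction into a graph whose triangles and $s$-cliques correspond to one another in a controlled way. The key observation is that in the $K_3$ construction of Figure~\ref{fig:tri}, node $v$ must recover the neighborhood of $w$ in $U$ from a single final message, and this forces $\Omega(\sqrt{n})$ bandwidth. I want to lift this so that each triangle $uvw$ that $v$ needs to list is ``padded'' into a $K_s$, without giving $v$ any additional free information.

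First I would take the tripartite construction from Theorem~\ref{th:k3_mlist_lb_pe} on vertex sets $W$, $U$, and the single node $v$, but augment the graph with a fixed clique $Q$ of $s-3$ extra nodes that are pre-connected (in the initial graph) to every node of $W$, every node of $U$, and to $v$, and to each other. Since these $Q$-edges are present from the start and known to all nodes, they convey no information during the dynamic sequence and cost nothing in bandwidth. The effect is that a triangle $uvw$ in the original construction becomes an $s$-clique $\{u,v,w\} \cup Q$ in the padded construction, and conversely every $K_s$ containing $v$ must consist of $v$, the clique $Q$, and one edge $uw$ with $u \in U$, $w \in W$ that forms a triangle with $v$. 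Thus $v$'s required $K_s$-membership list is in bijection with its required triangle list, so $v$ must still recover the full neighborhood of $w$ in $U$ from its single final message.

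The sequence $S_{C,w}$ is then defined exactly as before: build the bipartite graph $C$ between $U$ and $W$, connect $v$ to all of $U$ one node per round, and insert the critical edge $vw$ last. The counting argument of Theorem~\ref{th:k3_mlist_lb_pe} goes through verbatim, since the number of messages $v$ receives from $U$ and the single message from $w$ are unchanged, and the set $\mathcal{C}$ of possible bipartite graphs $C$ still has size $2^{t(n'-t-1)}$ where $n' = n - (s-3)$ is the number of nodes outside $Q$. Because $s$ is a constant, $n' = \Theta(n)$, so applying Lemma~\ref{lm:densebip} and Yao's lemma as before, and setting $t = \ceil{\sqrt{n'}} = \Theta(\sqrt{n})$, yields $B \geq \Omega(\sqrt{n})$.

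The main obstacle I anticipate is ensuring that the padding clique $Q$ truly leaks no information that would let $v$ shortcut the neighborhood-recovery bottleneck. I must verify that every $K_s$ through $v$ genuinely requires knowing a specific edge $uw$ of $C$, i.e. that $Q$ does not create spurious cliques (it does not, since $Q$ is only size $s-3$ and the remaining three clique-vertices must be $v$ together with an adjacent $U$-$W$ pair) and does not let nodes of $Q$ relay information about $w$'s neighborhood to $v$ within the single communication round (the identity of $w$ is unknown to all other nodes until the last round, exactly as in the triangle case, so $Q$-nodes cannot help). Once this is confirmed, the reduction is clean and the bound is immediate; the bulk of the argument is simply checking that the clique-to-triangle correspondence is exact and that the constant-size padding preserves the asymptotics.
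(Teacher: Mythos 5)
Your reduction has the right high-level shape --- the paper also pads the triangle construction with a clique $K$ on $s-3$ nodes connected to all of $U$ and $W$ so that triangles through $v$ correspond bijectively to copies of $K_s$ through $v$ --- but there is a genuine flaw in one design choice: you pre-connect the padding clique $Q$ to $v$ in the \emph{initial} graph. The edges themselves are indeed ``free,'' but they are also communication channels that exist during the entire, very long, first phase in which the bipartite graph $C$ is built edge by edge. That phase lasts up to $t(n-t-1) = \Theta(n^{3/2})$ rounds, each inserted edge is observed by its endpoints (which are adjacent to every node of $Q$ and can announce it), so a node of $Q$ can learn all of $C$ and relay it to $v$ at $B$ bits per round. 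Even with $B = O(1)$, $v$ then receives $\Theta(n^{3/2})$ bits before the critical edge $vw$ is ever inserted --- enough to encode $C$ entirely --- so $v$ can answer correctly without any $\Omega(\sqrt{n})$ message, and the counting argument collapses. Your stated safeguard (``the identity of $w$ is unknown until the last round, so $Q$-nodes cannot help'') does not address this: the $Q$-nodes need not know $w$; they only need to ship all of $C$ to $v$ in advance, after which the final \texttt{NEWID} of $w$ suffices. The whole point of the original argument is that $v$ is \emph{isolated} while $C$ is built, so that everything $v$ ever learns passes through the $O(t^2)$ messages of the short final phase.

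The fix, which is what the paper does, is to keep $v$ disconnected from the padding clique in the initial graph and instead connect $v$ to the nodes of $U \cup K$ one by one only in the final phase, just before inserting the critical edge $vw$. This adds only $s-3 = O(1)$ extra neighbors for $O(1)$ extra rounds, so the number of messages $v$ receives grows from $\frac{t(t+3)}{2}$ to $\frac{(t+s)(t+s+3)}{2}$, which is still $O(t^2)$, and the rest of your argument (the clique--triangle bijection, Lemma~\ref{lm:densebip}, Yao's lemma, and $t = \lceil\sqrt{n}\rceil$) then goes through as you describe.
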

\begin{proof}

The proof is almost identical to the proof of
Theorem~\ref{th:k3_mlist_lb_pe}, with only one difference: instead of
starting with an empty tripartite graph, we add a clique on $s-3$ nodes,
which are connected to every node in the graph except $v$. The initial
graph now looks like this:
\begin{figure}[t]
	\begin{center}
	\begin{tikzpicture}[scale=0.2]
	\tikzstyle{every node}+=[inner sep=0pt]

	\draw [black] (20,-30) ellipse (6 and 14);
	\draw (20,-10) node {$W$};
	\draw (20,-20) node {$w_1$};
	\draw (20,-25) node {$\vdots$};
	\draw (20,-30) node {$\vdots$};
	\draw (20,-35) node {$\vdots$};
	\draw (20,-40) node {$w_{n-t-s+2}$};

	\draw [black] (40,-30) ellipse (3 and 8);
	\draw (40,-10) node {$U$};
	\draw (40,-25) node {$u_1$};
	\draw (40,-30) node {$\vdots$};
	\draw (40,-35) node {$u_t$};

	\draw (60,-30) node {$v$};
	
	\draw [black] (36,-47) circle (4);
	\draw (36,-47) node {$K$};

	\draw [black] (59,-30) -- (42.3,-25);
	\draw [black] (59,-30) -- (43,-30);
	\draw [black] (59,-30) -- (42.3,-35);
	\draw [black,dashed,line width=2.8,dash pattern=on 6pt off 6pt] (37,-30) -- (26,-30);
	\draw [black] (60,-31) .. controls (60,-64) and (0,-64) .. (14,-30);
	
	\draw [black,line width=2.8] (39.10,-44.45) -- (59.28,-30.51);
	\draw [black,line width=2.8] (36.92,-43.09) -- (38.4,-36.8);
	\draw [black,line width=2.8] (33.28,-44.11) -- (25.44,-35.78);

	\end{tikzpicture}
	\caption{
		The lower bound sequence for \MLIST{K_3} with edge
		insertions.
	}
	\label{fig:ks}
	\end{center}
\end{figure}
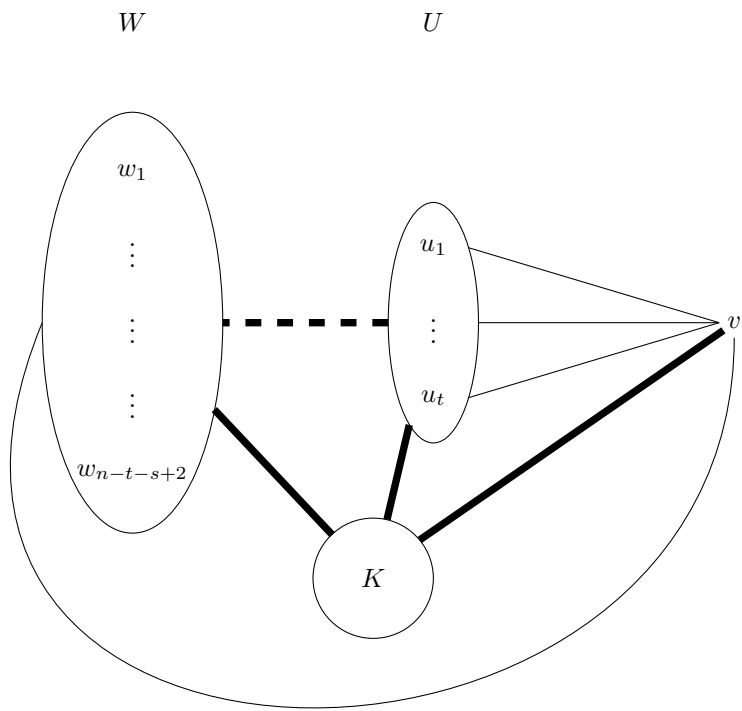

with edges between every node in $K$ and every other node except $v$.

As in Theorem~\ref{th:k3_mlist_lb_pe}, for every bipartite graph $C$
between the nodes of $U$ and $W$ and every node $w \in W$, we define a
sequence of changes $S_{C,w}$ as follows. We start with the initial
graph defined above. Then, we insert edges between $U$ and $W$ to get
the bipartite graph $C$. Then, during the next $t+s$ rounds, we connect
$v$ to every node in $U \cup K$, one by one. Finally, we insert the edge
$vw$. Note that now, since $v$ is connected to all nodes in $U$ and $K$,
the set of instances of $K_s$ that contain $v$ is uniquely determined by
the set of neighbors of $w$ in $U$. Therefore, after a single round of
communication, $v$ should be able to determine the set of all neighbors
of $w$ in $U$.

The proof proceeds as in Theorem~\ref{th:k3_mlist_lb_pe}. The size of
$W$ is now $n-t-s+2$, and the number of messages that $v$ receives from
the nodes of $U$ and $K$ is now $\frac{(t+s)(t+s+3)}{2}$. We end up with
the following inequality, where $|W_1| \geq \alpha |W|$ for some
constants $\alpha,\beta,\gamma \in (0,1)$:
\A{
			\beta
		\cdot
			\gamma^{n-t-s+2}
		\cdot
			2^{
				t(n-t-s+2)
				-
				B \cdot \frac{(t+s)(t+s+3)}{2}
			}
	&\leq
		2^{(B-t) \cdot |W_1| + t(n-t-s+2)}
}

Setting $t = \ceil{\sqrt{n}}$ gives, as in
Theorem~\ref{th:k3_mlist_lb_pe}, $B \geq \Omega(\sqrt{n})$.
\qedhere

\end{proof}

For $r \geq 2$, since edge insertions require communication, we have a
trivial lower bound of $\Omega(1)$ bits.

For node insertions, we again use a similar construction to the one we
used for triangles to show a lower bound of $\Omega(n/r)$ bits.

\begin{theorem}\label{th:ks_mlist_lb_r_pv}
	For every $r$, the randomized $r$-round bandwidth complexity of
	\MLIST{K_s} under node insertions is $\Omega\pa{\frac{n}{r}}$.
\end{theorem}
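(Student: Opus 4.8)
The plan is to mimic the lower-bound proof for \MLIST{K_3} under node insertions (Theorem~\ref{th:k3_mlist_lb_r_pv}) almost verbatim, reducing the $K_s$ case to the triangle case by forcing the clique structure. The key observation is that if I attach a fixed clique on $s-3$ extra nodes and connect those nodes to everyone, then detecting membership in an instance of $K_s$ becomes equivalent to detecting membership in a triangle among the remaining nodes. This is the same device used to lift Theorem~\ref{th:k3_mlist_lb_pe} to Theorem~\ref{th:ks_mlist_lb_pe}.

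Concretely, let $A$ be a randomized $r$-round algorithm solving \MLIST{K_s} under node insertions with bandwidth $B$ and error probability $\varepsilon$. Fix a clique $K$ on $s-3$ nodes and a special node $u$. Let $\mathcal{C}$ be the set of all graphs on the remaining $n-s+2$ nodes. For each $C \in \mathcal{C}$, I build the sequence $S_C$ that first inserts the nodes of $K$ and wires them to each other and to all later-inserted nodes, then inserts the $n-s+2$ nodes one per round realizing $C$ (each also joined to all of $K$), and finally inserts $u$ on the last round, connecting it to $K$ and to every other node. After $r-1$ quiet rounds, $u$ must list every copy of $K_s$ containing it. Because $u$ and every node of $K$ are adjacent to everything, a copy $\{u\} \cup K \cup \{a,b\}$ is a $K_s$ precisely when $ab$ is an edge of $C$; hence $u$'s output determines the entire edge set of $C$.

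From here the counting argument is identical to Theorem~\ref{th:k3_mlist_lb_r_pv}. By Yao's lemma, fix a deterministic $A'$ correct on a $(1-\varepsilon)$-fraction $\mathcal{C}_1 \subseteq \mathcal{C}$, so $|\mathcal{C}_1| \geq (1-\varepsilon)\cdot 2^{\binom{n-s+2}{2}}$. Since $u$ must distinguish all graphs in $\mathcal{C}_1$ from the input it receives over $r$ rounds from its at most $n-1$ neighbors, and each neighbor sends one of $2^B$ messages per round, the number of distinct inputs $u$ can receive is at most $2^{r B (n-1)}$. This yields
\begin{align*}
	2^{r B (n-1)} \geq (1-\varepsilon)\cdot 2^{\binom{n-s+2}{2}},
\end{align*}
and taking logarithms gives $r B (n-1) \geq \binom{n-s+2}{2} + \log(1-\varepsilon)$. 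Since $s$ is a constant, the right-hand side is $\Omega(n^2)$, so $r B = \Omega(n)$, i.e.\ $B = \Omega(n/r)$, as claimed.

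The proof is essentially a routine transfer, so I do not anticipate a genuine obstacle; the only point requiring minor care is confirming that the gadget clique $K$ and the universal node $u$ truly make $K_s$-membership equivalent to the $C$-edge indicator, and that the quiet-round bookkeeping and the node count (now $n-s+2$ free nodes rather than $n-1$) leave the asymptotics unchanged. Because $s = O(1)$, both $\binom{n-s+2}{2} = \Omega(n^2)$ and the neighbor count $n-1$ are unaffected up to constants, so the bound is preserved.
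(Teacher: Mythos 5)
Your proof is correct and follows essentially the same route as the paper: augment the triangle construction with a universal clique $K$ on $s-3$ nodes so that listing the copies of $K_s$ through the last-inserted node reveals the edge set of $C$, then run the same Yao-plus-counting argument. The only (immaterial) differences are that the paper restricts $\mathcal{C}$ to bipartite graphs on a balanced partition of the free nodes and inserts $K$ after them rather than before, which changes $\binom{n-s+2}{2}$ to $\lfloor\frac{n-s+2}{2}\rfloor\cdot\lceil\frac{n-s+2}{2}\rceil$ but leaves the $\Omega(n/r)$ bound intact.
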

\begin{proof}

The proof is almost identical to Theorem~\ref{th:k3_mlist_lb_r_pv}, with
the only difference being that we limit the possible graphs that are
constructed on the first $n-1$ rounds to a certain family of graphs.

Fix $x \in V$, and fix a set of $s-3$ other nodes $K \subseteq
V \setminus \{x\}$. Let $U = V \setminus (K \cup \{x\})$ be the set of
all other nodes, and let $U = L \cup R$ be a partition of $U$ into two
subsets of equal size (if $|U|$ is odd, let $|L| = |R| + 1$). Let
$\mathcal{C}$ be the set of all possible bipartite graphs on the sets
$L$ and $R$. For every $C \in \mathcal{C}$ we define a sequence of
changes $S_C$ as follows:
\begin{itemize}
\item
	Start with an empty graph (no nodes and no edges).
\item
	During $n-(s-2)$ rounds, insert one node of $U$ on each round
	and connect it to the nodes which have been already inserted,
	according to the edges in $C$. After $n-(s-2)$ rounds, the graph
	is identical to the graph $C$.
\item
	During the next $s-3$ rounds, insert one node of $K$ on each
	round and connect it to all other nodes in the graph. After this
	stage, the graph consists of the graph $C$, a clique $K$, and
	all possible edges between the nodes of $C$ and the nodes of
	$K$.
\item
	On round $n$ insert $x$ and connect it to all other nodes in the
	graph.
\end{itemize}

After one round of communication, $x$ should be able to determine the
set of all isntances of $K_s$ that contain it. Note that, for $C,C' \in
\mathcal{C}$, if an edge $uv$ exists in $C$ but not in $C'$ then the set
$\{x,u,v\} \cup K$ is a clique in the final graph of $S_C$, but not in
the final graph of $S_{C'}$. Therefore, the correct output of $x$ is
different for every two distinct $C,C' \in \mathcal{C}$.

The proof proceeds as in Theorem~\ref{th:k3_mlist_lb_r_pv}. The number
of possible sequences is now
$
		2^{|L| \cdot |R|}
	=
		2^{
			\floor{\frac{n-s+2}{2}}
			\cdot
			\ceil{\frac{n-s+2}{2}}
		}
$, and we end up with the following inequality:
\A{
		2^{r \cdot B \cdot (n-1)}
	\geq
		(1-\varepsilon)
		\cdot
		2^{
			\floor{\frac{n-s+2}{2}}
			\cdot
			\ceil{\frac{n-s+2}{2}}
		}
}

After some simplifications we obtain $r \cdot B = \Omega(n)$, as
claimed.
\qedhere

\end{proof}

\subsection{Membership detection}

Table~\ref{tbl:ks_mdtct} summarizes the results of this subsection.

\begin{table}[tbh]
\centering
\begin{tabular}{|l|c|c|c|c|}
\hline

	& Node deletions
	& Edge deletions
	& Edge insertions
	& Node insertions
	\\ \hline
$r = 1$
	& $0$
	& $\Theta(1)$
	& $O(\sqrt{n})$
	& $\Theta(n)$
	\\ \hline
$r \geq 2$
	& $0$
	& $\Theta(1)$
	& $\Theta(1)$
	& $\Theta(1)$
	\\ \hline

\end{tabular}
\caption{
	Bandwidth complexities of \MDTCT{K_s}
}
\label{tbl:ks_mdtct}
\end{table}

\subsubsection{Upper bounds}

All upper bounds follow from Observation~\ref{obs:probrel} and the
results of the last subsection, except for the upper bound of $O(1)$ in
$2$ rounds under node insertions, which can be solved by the same
algorithm used in Theorem~\ref{th:k3_mdtct_ub_r2_pv} and its corollary
for triangles.

\subsubsection{Lower bounds}

The lower bounds for node deletions and edge deletions are trivial.

As for edge insertions, it is not clear whether the upper bound of
$O(\sqrt{n})$ is tight, and in general, what is the exact bandwidth
complexity of the problem.

Note that the algorithm of Theorem~\ref{th:k3_mdtct_ub_pe}, which solves
\MDTCT{K_3} with $O(\log n)$ bits of bandwidth, does not work for larger
cliques. To understand why, consider an almost-clique on $s$ nodes, with
one missing edge $uv$. That is, inserting the edge $uv$ would create a
clique on $s$ nodes.

For triangles, this is simply a node which is connected to both $u$ and
$v$. In this case, we can guarantee that at least one of the nodes $u$
and $v$ knows that this construct exists. Thus, when the edge $uv$ is
inserted, either $u$ or $v$ can indicate to the two other nodes that
they are part of a triangle. For larger cliques, we have more than one
common neighbor and more than two edges, and it is not clear how to
provide the same guarantee. Therefore, the bandwidth complexity of this
problem remains an open question.

\begin{openq}
	What is the $1$-round bandwidth complexity of \MDTCT{K_s} under
	edge insertions?
\end{openq}

For node insertions, we show that at least $\Omega(n)$ bits of bandwidth
are required.

\begin{theorem}\label{th:ks_mdtct_lb_pv}
	The randomized $1$-round bandwidth complexity of \MLIST{K_s}
	under node insertions is $\Omega(n)$.
\end{theorem}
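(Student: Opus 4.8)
The plan is to establish this as the single-round instance of the membership-listing bound already proved in Theorem~\ref{th:ks_mlist_lb_r_pv}, and to exhibit the direct counting argument underlying it, which takes a particularly clean form when $r = 1$. Indeed, substituting $r = 1$ into Theorem~\ref{th:ks_mlist_lb_r_pv} immediately yields $B = \Omega(n)$; I sketch a self-contained construction below so that the single-round statement for \MLIST{K_s} under node insertions can stand on its own.

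First I would fix a node $x$, fix a set $K$ of $s - 3$ auxiliary nodes, and partition the remaining node set $U = V \setminus (K \cup \{x\})$ into two halves $L$ and $R$. For each bipartite graph $C$ on $(L, R)$ I define the node-insertion sequence $S_C$ exactly as in Theorem~\ref{th:ks_mlist_lb_r_pv}: insert the nodes of $U$ one per round carrying the edges of $C$, then insert the nodes of $K$ joined to everything, and finally insert $x$ on the last round, joined to all other nodes. Since $K$ is a clique adjacent to all of $U$ and $x$ is adjacent to all of $U \cup K$, the set $\{x, u, v\} \cup K$ with $u \in L$ and $v \in R$ is a copy of $K_s$ if and only if $uv \in C$. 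Consequently the list of cliques containing $x$ that the algorithm must output after a single round of communication determines $C$ uniquely, so two distinct graphs $C \neq C'$ force distinct correct outputs at $x$. This is exactly where membership \emph{listing} makes the argument simple: the output of $x$ literally encodes $C$, so no auxiliary experiment (as would be needed for detection) is required.

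The counting then proceeds as in the earlier node-insertion lower bounds. Because $x$ is inserted only on the final round, it receives exactly one message, of at most $2^B$ possible values, from each of its $n - 1$ neighbors, giving at most $2^{B(n-1)}$ distinct possible inputs. As the correct output of $x$ ranges over $|\mathcal{C}| = 2^{|L| \cdot |R|} = 2^{\Theta(n^2)}$ distinct values, any deterministic algorithm needs $2^{B(n-1)} \geq 2^{\Theta(n^2)}$, whence $B = \Omega(n)$. To obtain the randomized bound with error $\varepsilon$, I would apply Yao's lemma exactly as in Theorem~\ref{th:k3_mlist_lb_r_pv}, fixing a deterministic algorithm correct on a $(1 - \varepsilon)$-fraction of the inputs and repeating the same count with the $(1-\varepsilon)$ factor absorbed into the asymptotics.

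The main point to verify, and the only real subtlety, is the single-round information constraint: the one message $x$ receives from a neighbor $w \in U \cup K$ may depend only on $C$ and on $w$'s own local view, not on which other nodes happen to be joined to $x$ on the final round, because no neighbor of $x$ learns of $x$'s insertion before the round in which it sends. This independence, identical to the one used in the triangle node-insertion proofs, is precisely what makes $2^{B(n-1)}$ a valid upper bound on the distinguishing power of $x$; granting it, the remaining simplifications are routine.
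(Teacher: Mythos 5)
Your argument is correct for the statement as literally printed, but the printed statement contains a typo, and your proof does not establish what the paper actually proves here. The theorem's label (\texttt{th:ks\_mdtct\_lb\_pv}), its placement in the lower-bounds subsection for membership \emph{detection}, the sentence introducing it, and Table~\ref{tbl:ks_mdtct} all show that the intended claim is the $\Omega(n)$ bound for \MDTCT{K_s} under node insertions; accordingly, in the paper's proof the final round connects $x$ only to $u$, $v$, and the $s-3$ nodes of $K$, so that $x$ must output a \emph{single bit} indicating whether $uv \in C$. For the listing version that you prove, nothing new is needed --- as you yourself note, it is Theorem~\ref{th:ks_mlist_lb_r_pv} specialized to $r=1$ --- and since by Observation~\ref{obs:probrel} we have $B_{\TXTMDTCT} \leq B_{\TXTMLIST}$, a lower bound for \MLIST{K_s} implies nothing for \MDTCT{K_s}. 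The step you explicitly discard as unnecessary (``no auxiliary experiment (as would be needed for detection) is required'') is precisely the crux of the paper's proof: your counting argument compares the number of possible inputs of $x$ to the number of distinct correct outputs of $x$, and in the detection problem the latter is $2$, so the count $2^{B(n-1)} \geq 2$ is vacuous.

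The missing machinery is the paper's reconstruction experiment (generalizing Theorem~\ref{th:k3_mdtct_lb_pv}). One runs $S_{C,u,v}$ up to the round before $x$ is inserted and has every $w \in U \cup K$ generate the message $m_w$ it would send to $x$; crucially, these messages depend only on $C$ and not on the pair $(u,v)$, because no neighbor of $x$ can know which other nodes are attached to $x$ in that round. Then $x$'s one-bit decision rule is applied to \emph{every} pair $(m_u, m_v)$ together with $m_K$, for $u \in L$, $v \in R$, producing bits $I_{uv}$ that define a graph $C'$; per-pair correctness yields $p_C = \Pr[C' = C] \geq (1-\varepsilon)^{|L| \cdot |R|}$, while summing over message sets gives $\sum_C p_C \leq |\mathcal{M}| = 2^{B(n-1)}$, whence $B \geq \log(2-2\varepsilon) \cdot |L| \cdot |R| / (n-1) = \Omega(n)$. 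Note also that this argument handles the randomized algorithm directly, with no appeal to Yao's lemma --- unlike your proposal, which invokes Yao to pass to a deterministic algorithm; that route is fine for listing but would not by itself rescue the detection bound, where the output space collapses.
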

\begin{proof}

Our construction is a generalization of the construction used in
Theorem~\ref{th:k3_mdtct_lb_pv} for triangles.

Let $A$ be a (randomized) $1$-round algorithm that solves \MDTCT{K_3}
under node insertions using bandwidth $B$ with error probability
$\varepsilon$. Fix $x \in V$, and fix a set of $s-3$ other nodes $K
\subseteq V \setminus \{x\}$. Let $U = V \setminus (K \cup \{x\})$, and
let $U = L \cup R$ be a partition of $U$ into two subsets of equal size
(if $|U|$ is odd, let $|L| = |R| + 1$). Let $\mathcal{C}$ be the set of
all possible bipartite graphs on the setst $L$ and $R$. For every $C \in
\mathcal{C}$ and every $u \in L, v \in R$, define the sequence
$S_{C,u,v}$ as follows:
\begin{itemize}
\item
	Start with an empty graph (no nodes and no edges).
\item
	During $n-(s-2)$ rounds, insert one node of $U$ on each round
	and connect it to the nodes which have been already inserted,
	according to the edges in $C$. After $n-(s-2)$ rounds, the graph
	is identical to the graph $C$.
\item
	During the next $s-3$ rounds, insert one node of $K$ on each
	round and connect it to all other nodes in the graph. After this
	stage, the graph consists of the graph $C$, a clique $K$, and
	all possible edges between the nodes of $C$ and the nodes of
	$K$.
\item
	On round $n$ insert $x$ and connect it to $u$, $v$, and all
	nodes of $K$.
\end{itemize}

In the final graph, $x$ is connected to exactly $s-1$ nodes, all of
which are connected to each other except maybe $u$ and $v$. Thus, $x$ is
part of a clique of size $s$ if and only if the edge $uv$ exists.
Therefore, after one round, $x$ should output 1 if the edge $uv$ exists
in $C$, or 0 otherwise. By our assumption, for every $C$ and every $u,v
\in U$, the output is correct with probability at least $1-\varepsilon$.

During the final round, $x$ receives only one message from $u$, one
message from $v$, and $s-3$ messages from the nodes of $K$. As in
Theorem~\ref{th:k3_mdtct_lb_pv}, when $x$ receives these messages from
$u$, $v$, and the nodes of $K$, these nodes do not know the identity of
$u$ and $v$. Therefore, the message received from $u$ depends only on
the identity of $u$ and the graph $C$, and not on the identity of $v$.
Likewise, the message received from $v$ depends only on the identity of
$v$ and the graph $C$, and the $s-3$ messages received from the nodes of
$K$ depend only on the graph $C$.

Now, consider the following experiment for a given graph $C \in
\mathcal{C}$. First, we run the sequence $S_{C,u,v}$ for some $u,v \in
U$, and stop just before the last round, in which $x$ is connected to
$u$ and $v$. Then, every node $w \in U \cup K$ generates a message to be
sent to $x$, as if it has been connected to $x$ on the final round. For
every $w \in U$, let $m_w$ be the message generated by $w$, and let
$m_K$ be the set of messages generated by the nodes of $K$ (note that
$m_w$ and $m_K$ are random variables).

Note again, that the messages generated by the nodes of $U$ depend only
on the graph $C$, and not on the other nodes that may have been
connected to $x$ on the last round. Therefore, for every $u \in L, v \in
R$, given the messages generated by $u$, $v$, and all nodes of $K$, $x$
should be able to determine, with probability at least $1-\varepsilon$,
whether or not the edge $uv$ exists in $C$.

Next, for every pair of nodes $u \in L, v \in R$, let $I_{uv}$ be the
output of $x$ given the messages $m_u$, $m_v$, and $m_K$. Now, let $C'$
be the graph on the nodes of $U$, in which the edge $uv$ exists if and
only if $I_{uv}=1$. We consider $C'$ to be the result of the experiment.

We now compute the probability that the $x$ has the correct output for
all pairs $u \in L, v \in R$. For $C \in \mathcal{C}$, let $p_C$ denote
the probability that at the end of the experiment we have $C'=C$. For
every $u \in L, v \in R$, the value of $I_{uv}$ corresponds to the edge
$uv$ in the graph $C$ with probability at least $1-\varepsilon$. Thus we
have:
\A{
		p_C
	\geq
		(1-\varepsilon)^{|L| \cdot |R|}
}

Summing the above for every $C \in \mathcal{C}$ we get:
\begin{align}\label{eq:ks_mdtct_pv_geq}
		\sum_C p_C
	\geq
		|\mathcal{C}| \cdot (1-\varepsilon)^{|L| \cdot |R|}
	=
		(2-2\varepsilon)^{|L| \cdot |R|}
\end{align}

On the other hand, consider the set of messages generated by the nodes
of $U \cup K$ at the end of the first stage of the experiment. For every
possible set of messages $M$ generated by the nodes of $U \cup K$ during
the first stage of the experiment, denote by $\phi_C(M)$ the probability
for generating exactly the messages of $M$. Also, for every $C' \in
\mathcal{C}$, let $\Psi_M(C')$ denote the probability for the result of
the experiment to be equal to $C'$, given the set of generated messages
$M$. We have:
\A{
		\sum_C p_C
	&=
		\sum_C \sum_M \phi_C(M) \cdot \Psi_M(C)
	\leq
		\sum_C \sum_M \Psi_M(C)
	=
		\sum_M \sum_C \Psi_M(C)
	\\ &=
		\sum_M 1
	=
		|\mathcal{M}|
}

where $\mathcal{M}$ is the set of all possible values of $M$. Since
every message has exactly $2^B$ possible values, the number of possible
sets of $(n-1)$ messages is:
\A{
	|\mathcal{M}| = 2^{B(n-1)}
}

Hence:
\A{
	\sum_C p_C \leq 2^{B(n-1)}
}

Combining this with~\eqref{eq:mdtct_pv_geq} gives:
\A{
	2^{B(n-1)} \geq (2-2\varepsilon)^{|L| \cdot |R|}
}

Both $L$ and $R$ contain $\Theta(n)$ nodes. Therefore, after some
simplifications we get the desired bound:
\[
		B
	\geq
			\log (2-2\varepsilon)
		\cdot
			\frac{|L| \cdot |R|}{n-1}
	=
		\Omega(n)
\qedhere
\]

\end{proof}

\subsection{Listing and detection}

Table~\ref{tbl:ks_listdtct} summarizes the results of this subsection.

\begin{table}[tbh]
\centering
\begin{tabular}{|c|c|c|c|}
\hline

	  Node deletions
	& Edge deletions
	& Edge insertions
	& Node insertions
	\\ \hline

	  $0$
	& $\Theta(1)$
	& $\Theta(1)$
	& $\Theta(1)$
	\\ \hline

\end{tabular}
\caption{
	Bandwidth complexities of \LIST{K_s} and \DTCT{K_s}
}
\label{tbl:ks_listdtct}
\end{table}

\subsubsection{Upper bounds}

The upper bounds for node deletions and edge deletions follow from
Observation~\ref{obs:probrel} and the results of the last subsection.

For edge insertions, the algorithm used in
Corollary~\ref{cor:k3_list_ub_pe} for triangle listing does not work for
larger cliques. However, we can use the algorithm of
Theorem~\ref{th:k3_mlist_ub_r2_pe}, which solves membership-listing of
triangles in $2$ rounds, to solve listing of larger cliques in $1$
round.

\begin{theorem}\label{th:ks_list_ub_pe}
	The deterministic $1$-round bandwidth complexity of \MDTCT{K_s}
	under edge insertions is $O(1)$.
\end{theorem}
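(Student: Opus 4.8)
The plan is to run, unchanged, the $O(1)$-bit algorithm of Theorem~\ref{th:k3_mlist_ub_r2_pe} for $2$-round membership-listing of triangles, and to argue that it already lets one node of each newly-formed $K_s$ report the entire clique within a single round. The key is that this algorithm gives more than the bare guarantee that every triangle is eventually known to one of its nodes: tracing the \texttt{NEW} and \texttt{LAST} bits shows that, on the round a triangle is completed, its \emph{apex} (the node opposite the last-inserted edge) knows the triangle immediately, while the two endpoints learn it exactly one round later. Hence, by the end of any round $t$, a node $w$ knows every triangle through $w$ that was completed on or before round $t-1$ (apex knowledge from its completion round, or endpoint knowledge carried by a \texttt{LAST} bit on the following round), together with every triangle $wab$ whose opposite edge $ab$ is inserted exactly on round $t$, for which $w$ is the apex.

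Now suppose a copy $Q$ of $K_s$ is completed on round $t$ by the insertion of an edge $uv$; since $Q$ becomes a clique only now, $uv$ is the unique edge of $Q$ inserted on round $t$, and all other $\binom{s}{2}-1$ edges of $Q$ already exist at the end of round $t-1$. Because $s \geq 4$, the set $Q \setminus \{u,v\}$ is nonempty; fix any $w$ in it. I would verify that $w$ can output $Q$ at the end of round $t$ by checking that it knows every edge of $Q$. The edges incident to $w$ are known to $w$ directly from its neighbor list. For a pair $x,y \in Q \setminus \{w\}$ with $\{x,y\} \neq \{u,v\}$, none of the edges $wx$, $wy$, $xy$ equals $uv$, so the triangle $wxy$ was completed on some round $\leq t-1$ and is therefore known to $w$ by the end of round $t$, by the apex/endpoint analysis above. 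Finally the triangle $uvw$ is completed on round $t$ with $w$ as its apex: $w$ receives $\texttt{NEW}=1$ from both $u$ and $v$, and as only one edge is inserted per round it deduces the edge $uv$. Thus $w$ knows all $\binom{s}{2}$ edges of $Q$ and lists it.

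To finish, I would assign to each copy $Q$ a single responsible node: the node that first detects $Q$ (the chosen $w$ when $Q$ is created, or any fixed node for copies already present in the known initial graph). Since only edge insertions occur, $Q$ never disappears and $w$'s knowledge of the induced subgraph on its neighborhood only grows, so $w$ keeps listing $Q$ on every subsequent round; hence every present copy is listed by at least one node on every round, which is exactly \LIST{K_s}. The bandwidth is inherited from Theorem~\ref{th:k3_mlist_ub_r2_pe}, namely $O(1)$ bits. I expect the main obstacle to be the timing bookkeeping of the second paragraph: one must be sure that, even though a topology change occurs on \emph{every} round, the one-round-delayed endpoint knowledge of all the ``old'' triangles of $Q$ and the immediate apex knowledge of the fresh triangle $uvw$ are simultaneously available to $w$ at the end of round $t$. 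This is also why a genuine non-endpoint $w$ is needed, whose existence is guaranteed precisely by $s \geq 4$, in contrast to the weaker single-triangle guarantee that sufficed for \LIST{K_3}.
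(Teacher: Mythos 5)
Your proposal is correct and follows essentially the same route as the paper: run the $O(1)$-bit \texttt{NEW}/\texttt{LAST} algorithm of Theorem~\ref{th:k3_mlist_ub_r2_pe}, use its guarantee that by the end of round $t$ each node knows all triangles through it that existed at round $t-1$, and let a node $w \in Q \setminus \{u,v\}$ (which exists since $s \geq 4$) combine this with the two \texttt{NEW} bits from $u$ and $v$ to deduce the freshly inserted edge $uv$ and hence the whole clique. The paper phrases the "old" knowledge as $w$ knowing the cliques $Q \setminus \{u\}$ and $Q \setminus \{v\}$ rather than checking edges pairwise, but this is the same argument; your extra timing bookkeeping for the delayed endpoint knowledge is a correct elaboration of what the paper leaves implicit.
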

\begin{proof}

As said, we use the exact same algorithm of
Theorem~\ref{th:k3_mlist_ub_r2_pe}. This algorithm allows every node $u$
to determine, on round $t$, the list of all triangles that contained $u$
on round $t-1$. Hence, each node can determine the set of all cliques
--- of any size --- that contained it on the previous round.

We show that every clique of size $s$ is known to at least one node. For
this, we recall that by the algorithm of
Theorem~\ref{th:k3_mlist_ub_r2_pe}, on every round, every node sends to
all its neighbors an indication whether or not it got a new neighbor on
the current round. This information is denoted by \texttt{NEW}.

Now, assume that the insertion of an edge $uv$ on round $t$ creates a
clique $K$ of size $s$, and let $x$ be a node in $K$ other than $u$ and
$v$. We show that on round $t$, $x$ can determine that $K$ is a clique.
On round $t-1$, all nodes of $K$ are connected by an edge except $u$ and
$v$. In particular, on round $t-1$, $K \setminus \{u\}$ and $K \setminus
\{v\}$ are both cliques that contain $x$. Hence, as stated above, after
round $t$, $x$ knows that these two cliques exist. Moreover, on round
$t$, both $u$ and $v$ send $\texttt{NEW} = 1$ to $x$, from which $x$ can
deduce that the edge $uv$ has just been inserted. Hence, $x$ can
determine on round $t$ that the nodes of $K$ are all connected to each
other, as required.
\qedhere

\end{proof}

Combining this with node/edge deletions can be done similarly with the
algorithm of Corollary~\ref{cor:k3_mlist_ub_r2_pe_mvme}.

\begin{corollary}\label{cor:ks_list_ub_pe_mvme}
	The deterministic $1$-round bandwidth complexity of \MDTCT{K_s}
	under node deletions and edge deletions/insertions is $O(1)$.
\end{corollary}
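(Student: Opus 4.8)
The plan is to run the algorithm of Corollary~\ref{cor:k3_mlist_ub_r2_pe_mvme} as a subroutine and layer on top of it exactly the clique-creation test used in Theorem~\ref{th:ks_list_ub_pe}. Recall that Corollary~\ref{cor:k3_mlist_ub_r2_pe_mvme} solves \MLIST{K_3} under node deletions and edge deletions/insertions with $O(1)$ bits, and in doing so it gives every node $u$, on each round $t$, the exact set of triangles that contained $u$ on round $t-1$, \emph{together with} the information needed to decide, for each such triangle, whether it is destroyed by the change made on round $t$ (this is precisely the edge-deletion/node-deletion disambiguation built into that corollary). Since the set of $K_s$'s through a node is determined by the triangles through it, each node thereby knows all cliques of every size that contained it on the previous round, and whether each of them survives the current change. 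In parallel we keep broadcasting the single \texttt{NEW} bit of Theorem~\ref{th:ks_list_ub_pe}. All of this costs $O(1)$ bits, so it remains to argue correctness of listing.

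For \textbf{creation}, the argument is verbatim that of Theorem~\ref{th:ks_list_ub_pe}: if the edge $uv$ inserted on round $t$ completes a clique $K$ of size $s$, then any $x \in K \setminus \{u,v\}$ (one exists since $s \geq 4$) already knew after the previous round that $K \setminus \{u\}$ and $K \setminus \{v\}$ were cliques, and it receives $\texttt{NEW} = 1$ from both $u$ and $v$; as only one edge can appear per round, $x$ concludes that $uv$ was inserted and lists $K$ on round $t$.

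For \textbf{destruction}, suppose the change on round $t$ is a deletion that destroys a clique $K$ that a node $x$ was listing. If $x$ is the deleted node, then $K$ is gone and need not be listed by anyone. Otherwise $x$ survives, and the deletion removes either a node $a \in K \setminus \{x\}$, an edge $ab$ with $a,b \in K \setminus \{x\}$, or an incident edge $xa$ (the last case $x$ sees directly from its own neighbour list). In each case at least one triangle through $x$ inside $K$ is destroyed — here $s \geq 4$ again guarantees a third node of $K$ — and by the survival information inherited from Corollary~\ref{cor:k3_mlist_ub_r2_pe_mvme}, $x$ detects this on round $t$ and drops $K$. Cliques through $x$ not hit by the deletion keep all their triangles through $x$, so $x$ correctly retains them. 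Applying this to \emph{every} surviving member of $K$ shows that all nodes that were listing $K$ simultaneously drop it, while the surviving members of any clique left intact all keep listing it.

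The main obstacle, and the only place where care is needed, is the timing: the triangle subroutine reports membership with a one-round lag, so on round $t$ node $x$ only knows its round-$(t-1)$ cliques, and correctness hinges on the disambiguation of Corollary~\ref{cor:k3_mlist_ub_r2_pe_mvme} telling $x$, \emph{already on round $t$}, whether a given round-$(t-1)$ triangle of $K$ survived the round-$t$ change. I would therefore verify that distinguishing an edge deletion $ab$ from the deletion of a node adjacent to both $a$ and $b$ — the exact ambiguity resolved in that corollary — suffices to decide clique survival at the same round, and observe that, since this construction excludes node \emph{insertions}, the \texttt{NEW}-bit ambiguity between inserting an edge $uv$ and inserting a node adjacent to both $u$ and $v$ never arises. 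Combining the creation and destruction tests then yields a correct $1$-round listing algorithm using $O(1)$ bits, establishing the corollary.
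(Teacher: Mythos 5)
Your overall architecture---running the two-round \MLIST{K_3} algorithm of Corollary~\ref{cor:k3_mlist_ub_r2_pe_mvme} as a subroutine and reusing the clique-creation test of Theorem~\ref{th:ks_list_ub_pe}---is exactly what the paper intends (the paper itself gives only a one-line pointer here), and your creation argument is correct. The gap is in the destruction/retention case, and you have in fact put your finger on it in your last paragraph, but you then propose to resolve it in a way that cannot work. In the algorithm of Corollary~\ref{cor:k3_mlist_ub_r2_pe_mvme}, a node $x$ that receives $\texttt{DELETED}=1$ from two clique members $a$ and $b$ on round $t$ learns only on round $t+1$ whether the edge $ab$ was deleted or whether a common neighbour of $a$ and $b$ outside the clique was deleted; that delayed disambiguation is precisely why that algorithm needs two rounds for membership listing. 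So the ``survival information'' you invoke is \emph{not} available to $x$ on round $t$, and both of your claims---that $x$ ``detects this on round $t$ and drops $K$,'' and, dually, that ``the surviving members of any clique left intact all keep listing it''---are false as stated: an ambiguous node can neither certify destruction nor certify survival at round $t$. The verification you propose at the end (that the disambiguation ``suffices to decide clique survival at the same round'') would simply fail.

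The statement is nevertheless true, but the deletion argument must exploit the fact that the target problem is \LIST{K_s} rather than \MLIST{K_s}: a node that is unsure about $K$ abstains from listing it, and one shows that every clique surviving round $t$ has at least one \emph{certain} member. Concretely, if the round-$t$ deletion removes a node or an edge of $K$, the incident members observe this directly and every other member is at worst unsure, so no node lists the destroyed $K$. If the deletion leaves $K$ intact, then any member $x_i\in K$ that lost a neighbour $z$ knows (from the subroutine's round-$(t-1)$ picture of $K$) that $z\notin K$, hence that the unique change of the round touches no node or edge of $K$, and $x_i$ confidently lists $K$; if no member of $K$ lost a neighbour, then no member of $K$ sends $\texttt{DELETED}=1$, so no ambiguity about any edge of $K$ arises and every member is certain. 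This abstain-unless-certain argument, rather than same-round disambiguation, is what closes the proof.
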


For node insertions, we can use the same algorithm used for triangles in
Theorem~\ref{th:k3_list_ub_pv}.

\begin{theorem}\label{th:ks_list_ub_pv}
	The deterministic $1$-round bandwidth complexity of \LIST{K_s}
	under node insertions is $O(1)$.
\end{theorem}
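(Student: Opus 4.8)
The plan is to run the very same $O(1)$-bit protocol as in Theorem~\ref{th:k3_list_ub_pv}: on every round each node broadcasts to all its neighbors a single \texttt{NEW} bit indicating whether it acquired a new neighbor this round. Since the model charges only for \emph{bandwidth} (message size) and leaves local memory and computation unbounded, I will additionally have every node \emph{store forever} every clique (of any size) that it ever determines to exist, seeding this memory from the initial graph, whose topology is known to all nodes. Because only node insertions are allowed, no clique ever disappears, so this stored knowledge never becomes stale. The entire claim then reduces to proving, by induction on the clique size $k \in \{3,\dots,s\}$, the invariant $P(k)$: \emph{at the end of every round, every clique of size $k$ currently present in the graph is stored in the memory of at least one of its nodes.} Applying $P(s)$ on the round a $K_s$ appears gives the required listing, and the only communication ever used is the single \texttt{NEW} bit.

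For the base case $P(3)$, I invoke Theorem~\ref{th:k3_list_ub_pv} directly: when a node $w$ is inserted and creates a triangle $\{w,y_i,y_j\}$, both older endpoints $y_i,y_j$ detect it (each sees $w$ as a new neighbor and receives $\texttt{NEW}=1$ from the other; since only one node can be inserted per round, that other new neighbor must be $w$), learning the exact vertex set, which they then retain in memory. For the inductive step, suppose $P(k)$ holds and a clique $Q$ of size $k+1$ first appears on round $t$; let $w$ be the node inserted on round $t$ and write $Q=\{w\}\cup Y$ with $|Y|=k$. Since $w$ is the only new vertex, all of $Y$ was present on round $t-1$, and $Y$ is a clique (a subclique of $Q$); by $P(k)$ applied to round $t-1$, some $z\in Y$ already stores $Y$ in its memory. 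On round $t$ each $y\in Y\setminus\{z\}$ acquires $w$ as a new neighbor and hence sends $\texttt{NEW}=1$, and since $Y$ is a clique, $z$ is adjacent to every such $y$ and receives all these bits; as $w$ is the unique new vertex, $z$ deduces $w\sim y$ for every $y\in Y\setminus\{z\}$, while $z\sim w$ directly from its own neighbor list. Combining this with its stored fact that $Y$ is a clique, $z$ concludes that $Q=\{w\}\cup Y$ is a clique of size $k+1$, outputs it, and records it, establishing $P(k+1)$.

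The conceptual obstacle I expect is the tension with the $\Omega(n)$ membership lower bounds under node insertions (e.g.\ Theorem~\ref{th:ks_mdtct_lb_pv}): a single node genuinely \emph{cannot} cheaply certify its own $K_s$-membership, because an old node $z$ never learns the edges lying entirely inside $Y\setminus\{z\}$ from the \texttt{NEW} bits alone. The point I must make carefully is that listing does not require this --- I need only \emph{one} designated node per clique, and I bootstrap the internal structure of $Y$ not from current communication but from persistent, zero-bandwidth memory supplied inductively by $P(k)$. The one detail to state precisely is that the inductive hypothesis is applied to the \emph{specific} subclique $Y=Q\setminus\{w\}$ obtained by deleting the newest vertex (all other size-$k$ subcliques of $Q$ contain $w$ and only appear on round $t$ themselves, so they are not available at round $t-1$); removing $w$ is exactly what guarantees $Y$ predates round $t$, which is what makes the recursion valid. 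Everything else is the bookkeeping already carried out for triangles, so the bandwidth stays $O(1)$.
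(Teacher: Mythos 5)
Your proposal is correct and follows essentially the same route as the paper: the identical single-\texttt{NEW}-bit protocol, the observation that one insertion per round lets a node identify its neighbors' new neighbor, and an induction that accretes knowledge of the clique at a witness node one insertion at a time. The only (immaterial) difference is bookkeeping: the paper fixes the witness as the first-inserted node of $K$ and inducts along insertion-order prefixes of $K$, whereas you induct on clique size with a witness inherited from the subclique obtained by deleting the newest vertex.
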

\begin{proof}

We use the same algorithm as in Theorem~\ref{th:k3_list_ub_pv}. By this
algorithm, on every round, each node sends to all its neighbors an
indication whether or not it has a new neighbor on the current round. We
show that this guarantees that every clique on $s$ nodes is known to at
least one node.

Consider a clique $K$ with $s$ nodes, and assume that $u$ is the first
node of $K$ that has been inserted (if some of the nodes of $K$ exist in
the initial graph, let $u$ be any one of them). We show that $u$ can
determine that $K$ is a clique $1$ round after the last node of $K$ is
inserted.

By induction, we show that for every non-empty $L \subseteq K$, $u$ can
determine that $L$ is a clique $1$ round after the last node of $L$ is
inserted. The claim then follows for $L=K$.

The claim is trivial for $L=1$, since it contains only $u$. It is also
trivial if $L$ is the set of nodes of $K$ that exist in the initial
graph. Now, assume that a new node $v \in K$ is inserted, along with
edges to all nodes of $L$. When this happens, $u$ is connected to all
nodes of $L$, therefore it receives from each of them an indication that
they are connected to the new node. Thus, after a single round of
communication, $u$ can determine that the set $L \cup \{v\}$ is a
clique, as claimed.
\qedhere

\end{proof}

By Observation~\ref{obs:probrel}, all upper bounds shown here for
\LIST{K_s} also hold for \DTCT{K_s}.

We note that all results of this subsection hold for any value of $s$,
even $s=n-O(1)$.

\subsubsection{Lower bounds}

The lower bounds for each type of change are either $0$ or $\Omega(1)$,
which is trivial.

The algorithm of Theorem~\ref{th:ks_list_ub_pv} also works in
combination with either node deletions or edge deletions, by simply
having each node send an indication whether it lost a neighbor on the
current round. However, unlike the case of triangles, this does not work
if we allow both node deletions and edge deletions. This is because this
algorithm guarantees that every clique is known to at least two of its
nodes. For triangles, this suffices to guarantee that when either a node
or an edge is deleted, at least one of these two nodes know whether the
triangle still exists. For larger cliques, this is not the case --- for
example, consider a clique on $s \geq 4$ nodes, and let $u,v,x,y$ be
four nodes in that clique. Assume that only $u$ and $v$ know that the
clique exists. Now, either the edge between $x$ and $y$ is deleted, or
some other node is deleted which is connected to both $x$ and $y$, but
not to either $u$ or $v$. In both cases $x$ and $y$ can send an
indication that they lost a neighbor, but until the next round, $u$ and
$v$ cannot distinguish between the two cases, and therefore cannot
determine whether the clique still exists.

This can be fixed by having every node send, together with an indication
whether or not it has a new neighbor or a missing neighbor, the ID of
the new or missing neighbor. However, this requires $O(\log n)$ bits of
bandwidth, which is significantly more than the $O(1)$ bits required to
handle each type of change alone. Thus, the exact bandwidth complexity
of this combination remains an open question.

\begin{openq}
	What is the $1$-round bandwidth complexity of \LIST{K_s} under
	node insertions and node/edge deletions?
\end{openq}

Also, as is the case for triangles, it is an open question whether it is
possible to combine edge insertions and node insertions with $O(1)$ bits
of bandwidth.

\begin{openq}
	What is the $1$-round bandwidth complexity of \LIST{K_s} under
	node/edge insertions?
\end{openq}

\paragraph{Acknowledgements.}
This project has received funding from the European Union’s Horizon 2020
Research And Innovation Program under grant agreement no. 755839.

\bibliography{related}

\end{document}